\documentclass[11pt]{article}

\usepackage[margin=1in]{geometry}
\usepackage{amsmath,amssymb,amsthm,mathtools}
\usepackage{enumitem}
\usepackage[colorlinks=true,linkcolor=blue,citecolor=blue,urlcolor=blue]{hyperref}
\newtheorem{theorem}{Theorem}[section]
\newtheorem{lemma}[theorem]{Lemma}
\newtheorem{proposition}[theorem]{Proposition}
\newtheorem{corollary}[theorem]{Corollary}
\newtheorem{fact}[theorem]{Fact}
\newtheorem{remark}[theorem]{Remark}

\providecommand{\arcosh}{\operatorname{arcosh}}
\providecommand{\sinc}{\operatorname{sinc}}
\newcommand{\Tcal}{\mathcal T}
\newcommand{\Scal}{\mathcal S}
\newcommand{\Kcal}{\mathcal K}
\newcommand{\C}{\mathbb C}
\newcommand{\R}{\mathbb R}
\renewcommand{\Re}{\operatorname{Re}}
\newcommand{\Lap}[1]{\mathcal{L}\left[#1\right]}

\title{Optimal Extrapolation Bounds for Sparse Fourier Sums}
\author{Ruizhe Zhang\\
\texttt{rzzhang@purdue.edu}\\
Purdue University}
\date{}

\begin{document}
\maketitle

\begin{abstract}
We prove an optimal extrapolation theorem for $k$-sparse Fourier sums over arbitrary real frequencies, without any separation assumption, bounding how large such a sum can be just outside an interval on which its energy is observed.  For every $g(t)=\sum_{j=1}^k v_j e^{i\lambda_jt}$ with $\lambda_j\in\mathbb R$ and every $x\ge1$,
$$
        |g(x)|\le k^{O(1)}\exp(O(k\arcosh x))\|g\|_{L^2[-1,1]} .
$$
In the endpoint regime, this refines to the explicit bound
$$
        |g(1+\delta)|\le O(k)\exp(O(k\sqrt\delta))\|g\|_{L^2[-1,1]},
        \qquad 0\le\delta\le1 .
$$
This improves on the $\exp(O(k^2\log k\cdot\delta))$ growth estimate of Chen and Price (ICALP 2019), and the exponential scaling is optimal up to
constants and polynomial factors in $k$.

As an algorithmic consequence, we improve the cluster-center resolution of Chen--Price's clustered-frequency recovery algorithm by a factor of $k$, while preserving its sample complexity up to logarithmic factors.  We also obtain exterior leverage-score and transfer bounds for sparse Fourier feature spaces, converting in-domain active-regression guarantees into essentially sharp prediction guarantees just outside the sampling interval.
\end{abstract}

\section{Introduction}\label{sec:intro}
A signal with only $k$ significant spectral components admits a concise
representation as a superposition of $k$ complex exponentials, and exploiting
this structure is a central theme in sparse Fourier algorithms.  In the
discrete setting, Fourier sparsity enables sublinear-time algorithms for
identifying and estimating the dominant Fourier coefficients
\cite{GGIMS2002,Iwen2010,HIKP2012SODA,HIKP2012STOC,
IndykKapralov2014,Kapralov2016,NakosSongWang2019}.  In the continuous,
off-grid setting, where the frequencies may be arbitrary real numbers, it underlies sparse Fourier recovery and interpolation
algorithms
\cite{PriceSong2015,CKPS2016,ChenPrice2019,SSWZ2023}; closely related super-resolution problems ask to recover sparse point sources
from bandlimited measurements
\cite{Donoho1992,CandesFernandezGranda2014,DemanetNguyen2015,Moitra2015,li2020super,li2022stability,ding2024esprit}.

A central difficulty in the continuous setting is the absence of a frequency
gap.  When several frequencies are closer than the inverse length of
the observation window, the corresponding exponential
atoms become nearly linearly dependent: the associated Fourier--Vandermonde
systems become ill-conditioned, and the individual frequencies may no longer
be stably identifiable. In the limiting case of a $k$-fold frequency collision,
suitably rescaled finite differences of the exponential atoms converge to $\{1,t,\ldots,t^{k-1}\}$, so the clustered Fourier model approaches the space of algebraic polynomials
of degree at most $k-1$. In this regime, the natural algorithmic goal is not to recover the individual frequencies but to reconstruct the entire signal with $L^2$-error guarantees \cite{CKPS2016,ChenPriceActive2019,SSWZ2023}, or to estimate the location of the frequency cluster~\cite{ChenPrice2019}.

In this paper, we study a deterministic extrapolation problem that underlies this
clustered regime.  Let
$$
        g(t)=\sum_{j=1}^k v_j e^{i\lambda_jt},
        \qquad \lambda_j\in\mathbb R,
$$
be a $k$-Fourier-sparse signal whose $L^2$-energy is known on
$[-1,1]$.  Consider the following natural question:
\begin{center}
\emph{How large can $g$ become at a point $x>1$ outside the observation interval?}
\end{center}
Chen and Price~\cite{ChenPrice2019} introduced an
outside-window growth estimate as a key ingredient in their algorithm for
locating the center of a clustered Fourier signal.  In the edge regime
$x=1+\delta$, their Lemma~8.5 gives, in our notation,
\begin{equation}\label{eq:cp19}
        |g(1+\delta)|
        \le
        \operatorname{poly}(k)
        \exp\!\left(O(k^2\log k\cdot\delta)\right)
        \|g\|_{L^2[-1,1]} .
\end{equation}
Their proof first derives a linear recurrence involving
$d=O(k^2\log k)$ shifted values of the signal.  It then propagates the
estimate across steps of length $\Theta(1/d)$, paying a constant
multiplicative loss at each step.  Reaching distance $\delta$ therefore
requires $O(d\delta)$ propagation steps, which produces the exponent in
\eqref{eq:cp19}.

The same frequency-collision picture, however, suggests that this exponent is not tight.
As the frequencies coalesce, $k$-term exponential sums approach
degree-$(k-1)$ polynomials, and among such polynomials the classical extremal example for growth outside $[-1,1]$ is the Chebyshev polynomial:
$$
        T_{k-1}(1+\delta)
        =
        \cosh\!\left(
        (k-1)\operatorname{arcosh}(1+\delta)
        \right).
$$
Since
$$
        \operatorname{arcosh}(1+\delta)=\Theta\big(\sqrt{\delta}\big),
        \qquad 0\leq\delta\leq1\,,
$$
this example exhibits outside-growth on the scale
$$
        \exp\!\left(\Theta(k\sqrt{\delta})\right)\,,
$$
that is, an edge exponent of $k\sqrt{\delta}$ in place of $k^2\log k \cdot \delta$.
Chen and Price highlighted this discrepancy in
\cite[Remark~8.6]{ChenPrice2019}, leaving open whether arbitrary
$k$-term Fourier sums obey the same Chebyshev-scale bound.

We answer this question affirmatively by proving a sharp extrapolation bound for arbitrary real frequencies, with no separation, bounded-bandwidth, or
cluster-structure assumption.

\begin{theorem}[Chebyshev-scale extrapolation inequality]\label{thm:main}
For every
$$
        g(t)=\sum_{j=1}^k v_j e^{i\lambda_jt},\qquad \lambda_j\in\mathbb R,
$$
and every $x\ge1$,
\begin{equation}\label{eq:main-global}
        |g(x)|\lesssim k^2\exp\left(3\sqrt2\,k\arcosh x\right)\|g\|_{L^2[-1,1]}\,.
\end{equation}
Moreover, for $0\le\delta\le1$,
\begin{equation}\label{eq:main-edge}
        |g(1+\delta)|\lesssim k\exp\left(3\sqrt2\,k\sqrt\delta\right)\|g\|_{L^2[-1,1]}\,.
\end{equation}
The exponential dependence is optimal up to absolute constants and polynomial factors in $k$.
\end{theorem}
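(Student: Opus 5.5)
Any $k$-sparse Fourier sum $g$ is annihilated by the constant-coefficient operator $P(D)=\prod_{j=1}^k (D-i\lambda_j)$ with $D=d/dt$. The plan is to turn this ODE into a Chebyshev-type extremal problem. The key observation is that the solution space $V_\Lambda=\mathrm{span}\{e^{i\lambda_j t}\}$ is a $k$-dimensional Chebyshev (extended complete, ECT) system on $\mathbb R$ in a suitable complex sense. A cleaner route is to work with the real and imaginary parts, which lie in a space of real dimension at most $2k$ spanned by $\cos\lambda_j t,\sin\lambda_j t$. This real space is the kernel of the real operator $\prod_j (D^2+\lambda_j^2)$, of order $2k$.

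For an operator of this form, whose characteristic roots all lie on the imaginary axis, I would use the classical fact that the kernel is an extended Chebyshev space on any interval of length less than $\pi/\max|\lambda_j|$. That fact does not give a frequency-independent bound, so instead I would aim at a Markov--Bernstein-type inequality that is uniform in $\Lambda$. The target is a Remez/Nikolskii-type comparison: among all real $f$ in the kernel of $\prod(D^2+\lambda_j^2)$, the ratio $|f(x)|/\|f\|_{L^\infty[-1,1]}$ should be maximized when all $\lambda_j\to0$. In that limit $f$ is a polynomial of degree $<2k$, and the bound becomes $T_{2k-1}(x)\le \exp((2k-1)\arcosh x)$.

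\textbf{Step 1} (reduction to $L^\infty$ and real parts). The first task is to pass from $L^2[-1,1]$ to $L^\infty$ on a slightly smaller interval, via a Nikolskii inequality $\|g\|_{L^\infty[-1+\eta,1-\eta]}\lesssim \sqrt{k/\eta}\,\|g\|_{L^2}$. This can be proved by the same Chebyshev comparison applied locally, or via Tur\'an/Nazarov-type estimates. I then rescale $[-1+\eta,1-\eta]$ to $[-1,1]$ with $\eta\approx 1/k^2$. The rescaling changes $\arcosh x$ only by an additive $O(\sqrt\eta)=O(1/k)$, hence costs only a constant factor in $\exp(O(k\arcosh x))$, and the Nikolskii step costs the polynomial factor $k$ or $k^2$. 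To handle complex values, I take a direction $\theta$ with $|g(x)|=\Re(e^{-i\theta}g(x))$ and apply the real bound to $f=\Re(e^{-i\theta}g)$.

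\textbf{Step 2} (main obstacle: uniform Chebyshev comparison). I must show that for real $f$ with $\prod_{j\le k}(D^2+\lambda_j^2)f=0$ and $\|f\|_{L^\infty[-1,1]}\le1$, we have $|f(x)|\le \exp(c\,k\arcosh x)$ with an absolute constant $c$. My first attempt is a Bernstein-type derivative bound plus Taylor propagation, which is too lossy. The second, preferred attempt is a zero-counting argument. On $[-1,1]$, if $f$ exceeded the Chebyshev competitor at $x$, one would build a difference function with more sign changes than the kernel permits. A Rolle-type argument for the factorized operator shows that $f\in\ker\prod(D-\mu_j)$ with all $\mu_j$ real has at most $\deg-1$ zeros. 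Purely imaginary $\mu_j$, however, permit oscillation, so zero counting fails directly for large $\lambda$.

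To get around this, I would split frequencies. Those with $|\lambda_j|\lesssim k\arcosh x$ form clusters that behave polynomially. Large frequencies are handled by the observation that the claimed bound $e^{ck\arcosh x}$ already permits growth comparable to $e^{|\lambda|\cdot(x-1)}$-type effects. I expect this to be made rigorous through a complex-analytic route instead. I would view $g$ as an entire function of exponential type and use a harmonic-measure / two-constants estimate on the domain $\mathbb C\setminus[-1,1]$, whose Green function is $\log|z+\sqrt{z^2-1}|=\arcosh x$ on the real axis. The sparsity would be fed in through a bound on $\log|g|$ in terms of $k$, via Tur\'an's lemma applied on vertical lines or a Jensen/zero-count bound showing $g$ has $O(k)$ zeros in suitable regions. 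This step is where the constant $3\sqrt2$ would emerge.

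\textbf{Step 3} (edge form and optimality). The edge bound \eqref{eq:main-edge} follows from \eqref{eq:main-global} using $\arcosh(1+\delta)\le\sqrt{2\delta}$. Obtaining the sharper prefactor $k$ requires the more careful Nikolskii step near the endpoint. Optimality up to constants and polynomial factors in $k$ comes from taking $\lambda_j=\epsilon j$ and combining the atoms with finite differences so that, as $\epsilon\to0$, the sum converges to $T_{k-1}$ locally uniformly on $[-1,x]$. This gives the lower bound $\cosh((k-1)\arcosh x)/\|T_{k-1}\|_{L^2}$.
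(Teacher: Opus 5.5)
There is a genuine gap. Step 2 is the entire content of the theorem, and none of the routes you list provides the frequency-uniform estimate it needs. The ``polynomials are extremal'' comparison for $\ker\prod_j(D^2+\lambda_j^2)$ is at least as strong as the theorem itself, and you give no proof of it. As you note, zero counting cannot establish it: $\sin(Wt)\in\Tcal_2$ has about $2W/\pi$ zeros in $[-1,1]$. The complex-analytic route fails for a related reason. A two-constants or Bernstein--Walsh argument on $\C\setminus[-1,1]$ needs an a priori bound $\log|g(z)|\le O(k)\,G(z)+\log\|g\|_{L^\infty[-1,1]}$ off the real axis. But $g(t)=\cos(Wt)\in\Tcal_2$ has $\|g\|_{L^\infty[-1,1]}=1$ and $|g(iy)|=\cosh(Wy)$, which is unbounded as $W\to\infty$. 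Likewise, Jensen-type zero counts for $k$-term sums can grow like $R\max_j|\lambda_j|$ rather than $O(k)$. Frequency splitting does not rescue this either. On the real axis a single large frequency produces no growth at all, so the obstruction is clustering, not size. Splitting at a threshold is ill-conditioned whenever two frequencies $\lambda_0,\lambda_0+\varepsilon$ straddle it: $e^{i\lambda_0t}-e^{i(\lambda_0+\varepsilon)t}$ has norm $O(\varepsilon)$, while each piece has norm $\asymp1$.

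The missing idea is a representation in which the frequencies enter only through unimodular (inner) factors. The paper sets $F(u)=g(1-u)$ and weights by $e^{-\alpha u}$ with $\alpha=9k/2$, so that $e^{-\alpha u/2}F(u)=\sum_ja_je^{-\zeta_ju}$ with $\Re\zeta_j=\alpha/2$. Under the Laplace transform these exponentials are reproducing kernels of $H^2(\C_+)$. Their Takenaka--Malmquist orthonormalization is built from Blaschke factors $(s-\overline{\zeta_\ell})/(s+\zeta_\ell)$. Each factor is realized as the Volterra operator $h\mapsto h-\alpha\int_0^te^{-\zeta(t-u)}h(u)\,du$; evaluating at $t=-r$ gives $|\phi_m(-r)|\le\sqrt\alpha\,e^{\alpha r/2}L_m(-\alpha r)\le\sqrt\alpha\,e^{\alpha r/2}e^{2\sqrt{m\alpha r}}$, uniformly in the $\lambda_j$. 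Erd\'elyi's infinite--finite range inequality then bounds the weighted half-line norm by $A_{\mathrm{fr}}\int_0^2|F|^2$. Cauchy--Schwarz in this basis yields the edge bound $k\,e^{3\sqrt{2}\,k\sqrt\delta}$ directly. For $x\ge2$ no endpoint subtlety is needed. The paper applies Tur\'an's first main theorem on $k$ equally spaced nodes in $[-1,1]$, which is uniform because $|e^{-i\lambda_jh}|=1$, and then the endpoint Nikolskii inequality, giving $(x+1)^{O(k)}$.

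Your Step 3 also runs in the wrong direction. From $e^{3\sqrt{2}\,k\arcosh x}$ and $\arcosh(1+\delta)\le\sqrt{2\delta}$ you only get $e^{6k\sqrt\delta}$ with prefactor $k^2$, which is not the stated edge bound. The paper proves the edge bound first and obtains the global bound on $[1,2]$ from $\sqrt\delta\le\arcosh(1+\delta)$. Your optimality argument, the confluent limit to $T_{k-1}$, matches the paper's.
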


\subsection{Applications of the extrapolation inequality}
\label{subsec:intro-alg}

The main algorithmic consequence is a new filtering primitive for clustered Fourier signals.  Chen and Price encode
outside-window growth by a power-law parameter $S$ and design a filter whose decay compensates for that power law
\cite{ChenPrice2019}.  For sparse Fourier sums, Theorem~\ref{thm:main} shows that the intrinsic growth profile is instead
$$
        \exp\!\left(O\!\left(k\arcosh\frac{|t|}{T}\right)\right)\,,
$$
with the endpoint form $\exp\left(O(k\sqrt\delta)\right)$ at $|t|=T(1+\delta)$.  In
Section~\ref{subsec:chen-price-app}, we construct a bandlimited window directly against this Chebyshev envelope:
$$
        \operatorname{supp}\widehat H_{k,T}
        \subseteq[-O_\eta(k^2/T),O_\eta(k^2/T)]
$$
and
$$
        \int_{-T}^{T}|1-H_{k,T}|^2|g|^2
        +\int_{\mathbb R\setminus[-T,T]}|H_{k,T}|^2|g|^2
        \le \eta\int_{-T}^{T}|g|^2
        \qquad(g\in\Tcal_k)\,.
$$
Thus the power-law parameter $S$ is unnecessary for the sparse-Fourier class.  Substituting this filter into
the nonuniform-sampling algorithm in \cite{ChenPrice2019} improves the additive cluster-center resolution from
$$
        \Delta+\widetilde O(k^3/T)
        \qquad\text{to}\qquad
        \Delta+O(k^2/T),
$$
while preserving its sample complexity up to the logarithmic factors already present in their theorem.  %

The second application concerns \emph{extrapolative active regression}.
Labels are acquired on the training interval $[-1,1]$, while the learned
sparse Fourier model is evaluated on the adjacent exterior region
$$
        \mathcal X_\Delta
        :=
        [1,1+\Delta]\cup[-1-\Delta,-1].
$$
This is a structured covariate-shift problem: the test points lie outside the
support of the training distribution, but the regression function remains in
the same sparse Fourier model class.  In
Section~\ref{subsec:active-learning-app}, we reinterpret
Theorem~\ref{thm:main} as a frequency-uniform exterior bound on the statistical
leverage scores of Fourier feature spaces.  We then distinguish the integral
of these pointwise leverage scores from the actual worst-case $L^2$ transfer
constant from $[-1,1]$ to $\mathcal X_\Delta$, and bound both quantities.
In the nontrivial regime $k^{-2}\lesssim\Delta\leq1$, the true transfer
constant has exponential scale
$$
        \exp\!\left(\Theta(k\sqrt{\Delta})\right)
$$
up to polynomial factors.  Consequently, any in-domain active-regression
error guarantee transfers directly to an $L^2$-prediction guarantee on
$\mathcal X_\Delta$, with polynomial overhead when
$k\sqrt{\Delta}=O(\log k)$.  A single confluent Chebyshev construction gives
the matching lower bound, showing that this extrapolation range is essentially
sharp for any black-box implication based only on the in-domain $L^2$ error.
\subsection{Technical overview}
\label{subsec:technical-overview}

We first discuss two straightforward approaches (local propagation argument and Remez/Tur\'an inequalities) and explain why they fail. Then, we introduce our techniques and the outline of our proof for the main result.  
\paragraph{Why local propagation gives the wrong edge exponent.}
A basic endpoint Nikolskii inequality says that a $k$-term exponential sum on an interval cannot put more than an
$O(k^2)$ fraction of its $L^2$-energy at a single endpoint.  In the normalization used later, one such form is
$$
        |g(1)|^2\le O(k^2)\int_{-1}^{1}|g(t)|^2dt .
$$
This inequality is classical in the theory of exponential sums; we use the Denisov--Erdelyi form (Lemma~\ref{lem:nikolskii}), and related peak-to-average estimates go back to Turan-type and Nikolskii-type inequalities
for exponential sums \cite{BorweinErdelyi2000,BorweinErdelyi2006,Erdelyi2016,Kos2008}.  It immediately yields a local
propagation argument.  Let
$$
        E(s)=\int_{-1+s}^{1+s}|g(t)|^2dt .
$$
Then $E'(s)=|g(1+s)|^2-|g(-1+s)|^2\le |g(1+s)|^2$, and the endpoint inequality applied on the shifted interval gives
$E'(s)\le O(k^2)E(s)$.  Gronwall's inequality gives $E(\delta)\le e^{O(k^2\delta)}E(0)$, and hence
$e^{O(k^2\delta)}$ pointwise growth.  This improves the logarithmic local-propagation loss in earlier sparse-Fourier
analyses, but it still misses the Chebyshev edge scale $e^{O(k\sqrt\delta)}$ when $k^{-2}\ll\delta\ll1$.

\paragraph{Why black-box Tur\'an--Nazarov bounds miss the endpoint scale.}
A Remez-type inequality is a propagation-of-smallness statement: if a
low-complexity function is small on a measurable subset $E$ of an interval
$I$, how large can it be on the rest of $I$?  For \emph{real algebraic
polynomials} of degree at most $k-1$, the sharp answer is given by the
classical Remez inequality
\begin{equation}\label{eq:poly-remez-overview}
        \|p\|_{L^\infty(I)}
        \le
        T_{k-1}\!\left(\frac{2|I|}{|E|}-1\right)
        \|p\|_{L^\infty(E)}\,,
\end{equation}
where $|\cdot|$ denotes Lebesgue measure and $T_{k-1}$ is the Chebyshev
polynomial of the first kind; see, e.g.,
\cite{BrudnyiYomdin2016}.  In our endpoint geometry,
$$
        E=[-1,1],
        \qquad
        I=[-1,1+\delta]\,,
$$
so \eqref{eq:poly-remez-overview} becomes
$$
        \|p\|_{L^\infty[-1,1+\delta]}
        \le
        T_{k-1}(1+\delta)
        \|p\|_{L^\infty[-1,1]}\,.
$$
Thus the sharp polynomial Remez inequality already exhibits the desired
Chebyshev behavior:
$$
        T_{k-1}(1+\delta)
        =
        \cosh\!\left((k-1)\arcosh(1+\delta)\right)\,,
$$
whose exponential scale near the endpoint is
$\exp\left(\Theta(k\sqrt{\delta})\right)$.

For a $k$-Fourier sparse signal
$$
        g(t)=\sum_{j=1}^{k}v_j e^{i\lambda_jt}\,,
$$
the standard measurable Tur\'an--Nazarov inequality gives the more generic
bound
\begin{equation}\label{eq:TN-overview}
        \|g\|_{L^\infty(I)}
        \le
        \left(A\frac{|I|}{|E|}\right)^{k-1}
        \|g\|_{L^\infty(E)}\,,
\end{equation}
where $A>1$ is an absolute constant
\cite{Nazarov1994,BrudnyiYomdin2016}.  Applying
\eqref{eq:TN-overview} to the same intervals and then using the endpoint
Nikolskii inequality above yields only
$$
\begin{aligned}
        |g(1+\delta)|
        \le &~ 
        O(k)
        \left(A\left(1+\frac{\delta}{2}\right)\right)^{k-1}
        \|g\|_{L^2[-1,1]}                                      \\
        = &~ 
        k^{O(1)}
        \exp\!\left((k-1)\log A+O(k\delta)\right)
        \|g\|_{L^2[-1,1]} \,.
\end{aligned}
$$
In the \emph{outer regime} when $x\geq 2$, the fixed factor $A^{k-1}=\exp\left(\Theta(k)\right)$ is harmless
because then $k\arcosh x=\Omega(k)$.  However, near the \emph{endpoint}, $\delta=o(1)$ and the target
$$
        \exp\left(O\bigl(k\sqrt{\delta}\bigl)\right)
$$
can be subexponential in $k$.  Hence the black-box
Tur\'an--Nazarov estimate loses precisely the endpoint information that is
needed here.  Our proof recovers that information by exploiting the
one-sided weighted geometry of the problem, rather than reducing it only to
the measure ratio $|I|/|E|$.

\paragraph{Our techniques for the endpoint bound.} The key step of our proof is to convert endpoint extrapolation into a one-sided weighted problem.  Let $F(u):=g(1-u)$. Then, $g(1+\delta)=F(-\delta)$, while the known energy on $[-1,1]$ becomes energy of $F$ on $[0,2]$.  We compare
this finite interval to the \emph{weighted half-line norm}:
$$
        \int_0^\infty |F(u)|^2e^{-\alpha u}du \,.
$$
After absorbing the weight by writing $H(u)=e^{-\alpha u/2}F(u)$, the relevant functions are finite sums of decaying
exponentials
$$
        H(u)=\sum_j a_j e^{-(\alpha/2-i\lambda_j)u}\,,
$$
where all exponents now lie on the vertical line $\Re z=\alpha/2$ in the right half-plane $\C_+$.

Then, we consider the Hardy space $H^2(\mathbb C_+)$ realized as the image of the Laplace transform of $L^2(0,\infty)$. In that space, the functions
$e^{-(\alpha/2-i\lambda_j)u}$ become \emph{reproducing kernels} at points on a vertical line.  Gram--Schmidt of such kernels is
explicit: it is the Takenaka--Malmquist basis. In Lemma~\ref{lem:TM}, we prove that the Laplace transform of the $m$-th basis vector $\phi_m$ is
$$
        \frac{\sqrt\alpha}{s+\zeta_m}\prod_{\ell<m}\frac{s-\overline{\zeta_\ell}}{s+\zeta_\ell},
        \qquad \text{where }\zeta_j=\alpha/2-i\lambda_j ~~~\forall j\in [k].
$$
The factors $(s-\overline\zeta)/(s+\zeta)$ are inner functions: on the boundary $s=i\omega$, they have modulus one.
Thus multiplying by them preserves the Hardy, equivalently $L^2(0,\infty)$, norm.  This basis gives a stable way to
write the point-evaluation norm at $-\delta$ without inverting a possibly ill-conditioned Vandermonde matrix.

We can expand $H(u)$ in this orthonormal basis:
\begin{align*}
    H = \sum_{m=0}^{k-1} \langle H,\phi_m\rangle \phi_m\,.
\end{align*}
To bound $|H(u)|$ at the endpoint window, we need to evaluate the basis functions at negative points.  In the time domain, multiplication by the inner factor
$(s-\overline\zeta)/(s+\zeta)$ is the Volterra operator
$$
        (T_\zeta h)(t)=h(t)-\alpha\int_0^t e^{-\zeta(t-u)}h(u)du .
$$
At $t=-r$ the sign reverses, giving
$$
        |(T_\zeta h)(-r)|\le |h(-r)|+
        \alpha\int_0^r e^{\alpha(r-s)/2}|h(-s)|ds .
$$
Iterating this inequality through the Takenaka--Malmquist construction yields the frequency-independent majorant (Lemma~\ref{lem:lag-majorant}):
$$
        |\phi_m(-r)|\le \sqrt\alpha\,e^{\alpha r/2}L_m(-\alpha r),
$$
where $L_m$ is a Laguerre polynomial. Using the algebraic fact that $L_m(-y)\le e^{2\sqrt{my}}$ for all $y\geq 0$ gives the desired square-root exponent. 

The remaining step is to relate the weighted half-line norm back to the observed interval $[0,2]$.  Erdelyi's
infinite--finite range inequality (Lemma~\ref{lem:finite-range}) says that for a $k$-Fourier sparse $F$,
$$
        \int_0^\infty |F(u)|^2e^{-\alpha u}du
        \le O(1)\int_0^{9k/\alpha}|F(u)|^2e^{-\alpha u}du \,.
$$
Choosing $\alpha=9k/2$ makes $9k/\alpha=2$, so the right-hand side is controlled by the original energy on
$[0,2]$.  Combining this choice with the Laguerre estimate gives
$$
        \exp\left(O(\sqrt{k\alpha\delta})\right)=\exp\left(O(k\sqrt\delta)\right)\,.
$$
This proves the endpoint bound.

\paragraph{Proof sketch of the outer regime and sharpness.}
For $x\ge2$, we use a separate coefficient-explicit Tur\'an extrapolation (Proposition~\ref{prop:outer}).  We place $k$ grid points inside
$[-1,1]$, express $g(x)$ as a controlled linear combination of those values, and then use the endpoint Nikolskii
inequality (Lemma~\ref{lem:nikolskii}) to pass from $L^\infty$ to $L^2$.  The resulting growth is $(x+1)^{O(k)}=e^{O(k\arcosh x)}$, which is
sharp away from the endpoint.  Optimality is shown by the confluent construction
$$
        T_{k-1}\!\left(\frac{e^{i\varepsilon t}-1}{i\varepsilon}\right)\to T_{k-1}(t),
$$
which realizes Chebyshev growth as a limit of $k$-sparse Fourier sums.

\subsection*{Roadmap.}
Section~\ref{sec:prelim} collects the notation and analytic preliminaries used
throughout the paper.  Sections~\ref{sec:halfline} and~\ref{sec:edge} together
establish the extrapolation bound in the endpoint window $1\leq x\leq2$. Section~\ref{sec:outer} treats the outer regime $x\geq2$. Section~\ref{sec:main-proof}
combines the two regimes to prove the main theorem.  Section~\ref{sec:sharpness}
proves the matching lower bound.
Section~\ref{sec:applications} discusses the two algorithmic applications.

\section{Preliminaries}\label{sec:prelim}

\paragraph{Sparse Fourier sums and norms.}
For $k\ge1$, let
\begin{equation}\label{eq:Tk-def}
        \Tcal_k:=\Bigl\{F(t)=\sum_{j=1}^k a_j e^{i\lambda_j t}:
        a_j\in\C,\ \lambda_j\in\R\Bigr\}
\end{equation}
denote the class of exponential sums with at most $k$ terms.  Repeated frequencies are always merged.  We define the norms:
$$
        \|h\|_{L^2(I)}:=\Bigl(\int_I|h(t)|^2dt\Bigr)^{1/2},
        \qquad
        \|h\|_{L^\infty(I)}:=\sup_{t\in I}|h(t)|,
$$
and the averaged norm:
$$
        \|h\|_{2,T}^2:=\mathbb E_{t\sim[-T,T]}\left[|h(t)|^2\right]
        :=\frac1{2T}\int_{-T}^{T}|h(t)|^2dt .
$$

\paragraph{Hyperbolic notation.}
For $x\ge1$,
$$
        \arcosh x:=\log\bigl(x+\sqrt{x^2-1}\bigr).
$$
We shall use the following endpoint comparison.

\begin{lemma}[Elementary $\arcosh$ comparisons]\label{lem:arcosh}
For $0\le\delta\le1$,
\begin{equation}\label{eq:arcosh-compare}
        \sqrt\delta\le \arcosh(1+\delta)\le \sqrt{2\delta}\le2\sqrt\delta .
\end{equation}
\end{lemma}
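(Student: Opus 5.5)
Write $a:=\arcosh(1+\delta)\ge 0$, so that $\cosh a = 1+\delta$. Since $\arcosh$ is increasing on $[1,\infty)$, each inequality in \eqref{eq:arcosh-compare} can be moved to the $\cosh$ side: $\sqrt\delta\le a$ is equivalent to $\cosh\sqrt\delta\le 1+\delta$, and $a\le\sqrt{2\delta}$ is equivalent to $1+\delta\le\cosh\sqrt{2\delta}$. The last inequality $\sqrt{2\delta}\le 2\sqrt\delta$ holds trivially because $\sqrt2\le 2$. So the plan is to prove the two $\cosh$ inequalities from the Taylor series $\cosh y=\sum_{n\ge0} y^{2n}/(2n)!$.

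For the upper bound, set $y=\sqrt{2\delta}$. All terms of the series are nonnegative, so $\cosh y\ge 1+y^2/2=1+\delta$. This gives $a\le\sqrt{2\delta}$, and it holds for every $\delta\ge0$.

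For the lower bound, set $y=\sqrt\delta\in[0,1]$. Then
$$
        \cosh y-1=\sum_{n\ge1}\frac{y^{2n}}{(2n)!}\le y^2\sum_{n\ge1}\frac{1}{(2n)!}=y^2(\cosh 1-1),
$$
using $y^{2n}\le y^2$ for $y\le1$ and $n\ge1$. Since $\cosh 1-1\approx 0.543<1$, this gives $\cosh\sqrt\delta\le 1+\delta$, hence $\sqrt\delta\le a$. This step is where the restriction $\delta\le1$ is used: for large $\delta$ the inequality fails, because $\arcosh(1+\delta)$ grows only logarithmically. I expect no real obstacle here. The only point needing care is this numerical check of $\cosh 1-1<1$. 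If one prefers to avoid it, an alternative is to note that $\phi(y)=y^2-(\cosh y-1)$ satisfies $\phi(0)=\phi'(0)=0$ and $\phi''(y)=2-\cosh y\ge 2-\cosh 1>0$ on $[0,1]$, so $\phi\ge0$ there.
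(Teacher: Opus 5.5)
Your proof is correct and follows essentially the same route as the paper: the lower bound comes from $\cosh y\le 1+y^2$ on $[0,1]$ plus monotonicity, and the upper bound from $\cosh y\ge 1+y^2/2$ (the paper evaluates this at $y=\arcosh(1+\delta)$, you at $y=\sqrt{2\delta}$, which is equivalent). You also justify $\cosh y\le 1+y^2$ through the series bound $\cosh y-1\le y^2(\cosh 1-1)$, which the paper only asserts; the paper's separate integral argument for $2\sqrt\delta$ is redundant given $\sqrt{2\delta}\le 2\sqrt\delta$.
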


\begin{proof}
For the lower bound, $\cosh y\le1+y^2$ for $0\le y\le1$; setting $y=\sqrt\delta$ gives
$\cosh\sqrt\delta\le1+\delta$, and monotonicity of $\arcosh$ gives the claim.  For the upper bound,
$$
        \arcosh(1+\delta)=\int_1^{1+\delta}\frac{du}{\sqrt{u^2-1}}
        \le \int_0^\delta \frac{dv}{\sqrt v}=2\sqrt\delta,
$$
and the sharper $\sqrt{2\delta}$ follows from $\cosh y\ge1+y^2/2$ with $y=\arcosh(1+\delta)$.
\end{proof}

\paragraph{Hardy-space and Laplace transform.}
Let $\C_+=\{s\in\C:\Re s>0\}$.  The Hardy space $H^2(\C_+)$ is realized as the image of $L^2(0,\infty)$ under
the Laplace transform
$$
        \Lap h(s)=\int_0^\infty h(t)e^{-st}dt,
        \qquad \Re s>0,
$$
with boundary inner product
$$
        \langle G_1,G_2\rangle_{H^2}
        =\frac1{2\pi}\int_{-\infty}^{\infty}G_1(i\omega)\overline{G_2(i\omega)}d\omega .
$$
We use the following standard facts.

\begin{fact}[Paley--Wiener isometry {\cite[Chapter~11]{Duren1970}}]\label{fact:isometry}
The Laplace transform is an isometry from $L^2(0,\infty)$ onto $H^2(\C_+)$.
\end{fact}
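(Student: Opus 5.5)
The plan is to prove the standard Paley--Wiener isometry in two stages: first the isometry onto its image, then surjectivity.

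\emph{Isometry.} For $h\in L^2(0,\infty)$, extend $h$ by zero to all of $\R$. For $s=\sigma+i\omega$ with $\sigma>0$, we have
$$
\Lap h(\sigma+i\omega)=\widehat{h_\sigma}(\omega),
\qquad h_\sigma(t):=h(t)e^{-\sigma t}\mathbf 1_{t>0}.
$$
Since $h_\sigma\in L^1\cap L^2$, the function $\Lap h$ is holomorphic on $\C_+$; this follows from differentiating under the integral sign, using the domination $|t\,h(t)e^{-st}|\le |h(t)|\,t\,e^{-\sigma_0 t}$ on $\Re s\ge\sigma_0$. Plancherel then gives
$$
\frac1{2\pi}\int|\Lap h(\sigma+i\omega)|^2d\omega=\int_0^\infty|h(t)|^2e^{-2\sigma t}dt\le\|h\|_{L^2}^2 .
$$
So $\sup_{\sigma>0}$ of the line integrals is finite, and $\Lap h\in H^2(\C_+)$. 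As $\sigma\to0^+$, $h_\sigma\to h$ in $L^2$ by dominated convergence. Hence $\Lap h(\sigma+i\cdot)\to\widehat h$ in $L^2(\R)$, and this limit is the boundary function used in the inner product. Plancherel therefore yields $\|\Lap h\|_{H^2}=\|h\|_{L^2(0,\infty)}$, and polarization gives preservation of the inner product.

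\emph{Surjectivity.} This is the main step. Given $G\in H^2(\C_+)$, put $g_\sigma(\omega)=G(\sigma+i\omega)$. These functions are uniformly bounded in $L^2$, so I will take their inverse Fourier transforms $k_\sigma:=\mathcal F^{-1}g_\sigma$. The goal is to show that $k_\sigma(t)=e^{-\sigma t}k(t)$ for a single function $k$. To see this, apply Cauchy's theorem to $G(s)e^{st}$ on rectangles between the lines $\Re s=\sigma_1$ and $\Re s=\sigma_2$. The horizontal edges vanish in an averaged sense, since $\int_{\sigma_1}^{\sigma_2}\int|G|^2\,d\omega\,d\sigma<\infty$ lets us choose a sequence of heights along which the horizontal contributions tend to $0$. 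Testing against Schwartz functions makes this rigorous.

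Next I show that $k$ vanishes on $t<0$. We have $\int|k(t)|^2e^{-2\sigma t}dt=2\pi\|g_\sigma\|^2\le C$ for all $\sigma$. If $k\neq0$ on a set of positive measure in $t<0$, then letting $\sigma\to\infty$ forces this integral to blow up, a contradiction. Letting $\sigma\to0$ gives $k\in L^2(0,\infty)$ by monotone convergence. Finally, $\Lap k(\sigma+i\omega)=\widehat{k_\sigma}(\omega)=G(\sigma+i\omega)$, so $G=\Lap k$.

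Since this is a classical fact, the paper may simply cite \cite{Duren1970}. If a proof is included, the only delicate point is the contour-shift and horizontal-edge control in the surjectivity part.
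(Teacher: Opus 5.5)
Your proposal is correct. It is the classical Paley--Wiener argument. For the isometry, you use Plancherel on each vertical line $\Re s=\sigma$ together with $he^{-\sigma t}\to h$ in $L^2$. For surjectivity, you use a contour shift to show that $e^{\sigma t}k_\sigma(t)$ does not depend on $\sigma$; then $\sigma\to\infty$ forces $k=0$ a.e.\ on $(-\infty,0)$, and $\sigma\to0^+$ gives $k\in L^2(0,\infty)$.

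The paper gives no proof of its own and only cites Duren. It also introduces $H^2(\C_+)$ directly as (``realized as'') the Laplace image of $L^2(0,\infty)$ with the boundary inner product. Read that way, ``onto'' is built in, and only your isometry paragraph has content. Your surjectivity paragraph buys something more: it identifies that image with the intrinsic Hardy space of holomorphic $G$ with $\sup_{\sigma>0}\int|G(\sigma+i\omega)|^2\,d\omega<\infty$. That is the space in which Facts~\ref{fact:kernel} and~\ref{fact:inner} are formulated in the cited sources. The paper never actually needs this identification. Every element of $H^2(\C_+)$ it handles is a proper rational function with poles in the open left half-plane. By partial fractions, as in Lemma~\ref{lem:TM}, each such function is explicitly the Laplace transform of a finite exponential sum.

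There are two small slips, neither affecting the logic.

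First, with the Fourier convention of your first paragraph, $\int|k(t)|^2e^{-2\sigma t}\,dt=\|k_\sigma\|_{L^2}^2=\frac1{2\pi}\|g_\sigma\|_{L^2}^2$, not $2\pi\|g_\sigma\|^2$.

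Second, if you make the contour shift rigorous by testing, use $\varphi\in C_c^\infty(\R)$ rather than arbitrary Schwartz functions. For large $\sigma$ the weight $e^{\sigma t}\varphi(t)$ need not be square-integrable, and $\int\varphi(t)e^{st}\,dt$ need not extend holomorphically to the strip. In fact your sequence-of-heights argument already works without test functions. Take $\sigma_1<\sigma_2$. Cauchy's theorem bounds the difference of the truncated integrals $\frac{e^{\sigma_j t}}{2\pi}\int_{-Y}^{Y}G(\sigma_j+i\omega)e^{i\omega t}\,d\omega$, $j=1,2$, by $\frac{e^{\sigma_2|t|}}{2\pi}\sum_{\pm}\int_{\sigma_1}^{\sigma_2}|G(x\pm iY)|\,dx$. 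This bound tends to $0$ along your chosen heights, locally uniformly in $t$. Meanwhile the truncations converge to $k_{\sigma_j}$ in $L^2(\R)$, which gives the a.e.\ identity you need.
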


\begin{fact}[Reproducing kernel of $H^2(\C_+)$ {\cite[Chapter~11]{Duren1970}}]\label{fact:kernel}
For $w\in\C_+$, the function $k_w(s)=1/(s+\overline w)$ belongs to $H^2(\C_+)$ and satisfies
$$
        \langle G,k_w\rangle_{H^2}=G(w),
        \qquad
        \|k_w\|_{H^2}^2=\frac1{2\Re w} .
$$
\end{fact}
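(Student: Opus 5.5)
The claim to prove is Fact~\ref{fact:kernel}: for $w\in\C_+$, $k_w(s)=1/(s+\overline w)$ lies in $H^2(\C_+)$, reproduces point values, and has squared norm $1/(2\Re w)$. The plan is to pass to the time side with the Paley--Wiener isometry (Fact~\ref{fact:isometry}), so no contour integration or boundary-value theory is needed beyond that isometry.

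\textbf{Step 1: identify $k_w$ as a Laplace transform.} Set $e_w(t):=e^{-\overline w t}$ for $t>0$. Since $\Re\overline w=\Re w>0$, this function lies in $L^2(0,\infty)$, and
\[
\int_0^\infty|e_w(t)|^2\,dt=\int_0^\infty e^{-2(\Re w)t}\,dt=\frac1{2\Re w}.
\]
A direct computation gives
\[
\Lap{e_w}(s)=\int_0^\infty e^{-(s+\overline w)t}\,dt=\frac1{s+\overline w}=k_w(s),
\]
which converges for $\Re s>0$ because $\Re(s+\overline w)>0$. Hence $k_w\in H^2(\C_+)$, and by the isometry $\|k_w\|_{H^2}^2=\|e_w\|_{L^2(0,\infty)}^2=1/(2\Re w)$.

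\textbf{Step 2: the reproducing identity.} Take $G\in H^2(\C_+)$. Because the Laplace transform is onto, $G=\Lap h$ for some $h\in L^2(0,\infty)$. An isometry between Hilbert spaces preserves inner products by polarization, so
\[
\langle G,k_w\rangle_{H^2}=\langle h,e_w\rangle_{L^2(0,\infty)}=\int_0^\infty h(t)\,\overline{e^{-\overline w t}}\,dt=\int_0^\infty h(t)e^{-wt}\,dt=\Lap h(w)=G(w).
\]
The integral converges absolutely by Cauchy--Schwarz, since $e^{-wt}\in L^2(0,\infty)$.

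\textbf{Potential obstacles.} The only delicate point is making sure the $H^2$ inner product defined through boundary values on $i\R$ agrees with the one transported by the Laplace isometry. This agreement is exactly the content of Fact~\ref{fact:isometry}, which we take as given. One might also ask whether $G(w)$ is the interior value of $G$ rather than a limit of boundary values. That is automatic here, because $G$ is defined on $\C_+$ directly by the Laplace integral. There is no genuine obstacle: the statement is a routine corollary of the isometry, and the cited Duren reference records it.
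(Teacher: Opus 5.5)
Your proof is correct. The paper does not prove this fact; it records it as standard with a citation to Duren, so the only comparison is with the usual textbook route. There, the reproducing identity is obtained on the boundary side from the Cauchy integral representation $G(w)=\frac{1}{2\pi}\int_{-\infty}^{\infty}\frac{G(i\omega)}{w-i\omega}\,d\omega$ for $G\in H^2(\C_+)$. This is literally $\langle G,k_w\rangle_{H^2}$, because $\overline{k_w(i\omega)}=1/(w-i\omega)$. The norm is then read off as $\|k_w\|_{H^2}^2=k_w(w)=1/(2\Re w)$.

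You instead transport everything through Fact~\ref{fact:isometry}. You write $k_w=\Lap{e^{-\overline w t}}$, so the norm becomes the elementary integral $\int_0^\infty e^{-2(\Re w)t}\,dt$, and polarization turns the reproducing identity into the definition of $\Lap{h}(w)$. Since the paper already takes the Paley--Wiener isometry as a black box, your route is the more economical one. It also matches how the kernel is used later: in Lemma~\ref{lem:TM} the normalized kernel $\sqrt\alpha/(s+\zeta_m)$ is exactly the Laplace transform of $\sqrt\alpha\,e^{-\zeta_m t}$, i.e.\ a scaled copy of your $e_w$ with $w=\overline{\zeta_m}$. The boundary-value route has the advantage of not depending on the Laplace realization of $H^2(\C_+)$.
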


\begin{fact}[Inner multipliers {\cite{Duren1970,GarciaRoss2015}}]\label{fact:inner}
If $\Theta$ is holomorphic on $\C_+$, $|\Theta|\le1$ there, and $|\Theta(i\omega)|=1$ for a.e.\ $\omega\in\R$, then multiplication by $\Theta$ is an isometry of $H^2(\C_+)$ into itself.  Finite products of inner functions are inner.
\end{fact}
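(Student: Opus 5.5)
The plan is to work with the standard description of $H^2(\C_+)$ as the holomorphic functions $G$ on $\C_+$ with
$$
        \|G\|_{H^2}^2=\sup_{x>0}\frac1{2\pi}\int_{-\infty}^{\infty}|G(x+i\omega)|^2\,d\omega<\infty .
$$
Every such $G$ has nontangential boundary values $G(i\omega)$ for a.e.\ $\omega$. The supremum equals the limit as $x\downarrow0$, and this limit is $\frac1{2\pi}\int|G(i\omega)|^2d\omega$; this boundary $L^2$ norm is the norm the paper uses. By Fact~\ref{fact:isometry}, this description agrees with the Laplace-image definition, so it suffices to argue in this function-theoretic form.

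\textbf{Step 1: $\Theta G\in H^2(\C_+)$.} Fix $G\in H^2(\C_+)$. The product $\Theta G$ is holomorphic on $\C_+$. Since $|\Theta|\le1$ there, on every vertical line $\Re s=x>0$ we have $|\Theta G|\le|G|$ pointwise. Hence
$$
        \sup_{x>0}\int|\Theta G(x+i\omega)|^2d\omega\le\sup_{x>0}\int|G(x+i\omega)|^2d\omega ,
$$
so $\Theta G\in H^2$ with $\|\Theta G\|_{H^2}\le\|G\|_{H^2}$.

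\textbf{Step 2: equality of norms.} Since $\Theta$ is bounded and holomorphic, Fatou's theorem gives nontangential boundary values $\Theta(i\omega)$ a.e. The hypothesis $|\Theta(i\omega)|=1$ refers to these values. At every $\omega$ outside the union of the two exceptional null sets, both limits exist, so the boundary function of $\Theta G$ is $\Theta(i\omega)G(i\omega)$ a.e. Therefore
$$
        \|\Theta G\|_{H^2}^2=\frac1{2\pi}\int|\Theta(i\omega)|^2|G(i\omega)|^2d\omega=\frac1{2\pi}\int|G(i\omega)|^2d\omega=\|G\|_{H^2}^2 .
$$
Multiplication by $\Theta$ is linear. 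By the polarization identity it then also preserves inner products, so it is an isometry of $H^2(\C_+)$ into itself.

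\textbf{Step 3: finite products.} If $\Theta_1,\dots,\Theta_n$ are inner, their product is holomorphic on $\C_+$ and bounded by $1$ in modulus there. Its boundary values are the products of the individual boundary values, which hold off a finite union of null sets. These boundary values therefore have modulus one a.e., so the product is inner.

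The only delicate point is the identification in Step 2 of the boundary trace of $\Theta G$ with the pointwise product of the traces. I would handle it as above: Fatou's theorem for bounded $\Theta$ and the $H^2$ boundary theory for $G$ both give nontangential limits a.e., and the product of two convergent limits converges to the product of the limits. The rest is the monotone comparison in Step 1 together with the standard fact that the $H^2$ norm is the boundary $L^2$ norm.
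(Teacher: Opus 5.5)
Your proof is correct. The paper does not prove this fact; it quotes it as standard from Duren (Ch.~11) and Garcia--Ross, and what you wrote is the textbook argument those sources contain. You take the function-theoretic definition $\sup_{x>0}\int|G(x+i\omega)|^2\,d\omega<\infty$ and get membership from $|\Theta G|\le|G|$ on vertical lines. You then identify the trace of $\Theta G$ with $\Theta(i\omega)G(i\omega)$ a.e.\ via Fatou's theorem and nontangential limits, and finish by polarization. Your reading of Fact~\ref{fact:isometry} as the Paley--Wiener identification of this space with the Laplace image is the right one, and it is what gives meaning to $G(i\omega)$ in the paper's boundary inner product.

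Your version proves more than the paper actually uses. The only inner functions that appear are finite products of $B_\ell(s)=(s-\overline{\zeta_\ell})/(s+\zeta_\ell)$. These are continuous on the closed half-plane and satisfy $|B_\ell(i\omega)|=1$ for \emph{every} $\omega$, so the a.e.\ boundary theory can be avoided.

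One way is to note that $|B_\ell(x+i\omega)-B_\ell(i\omega)|\le 4x/\alpha$ uniformly in $\omega$. Combined with the $L^2$ convergence $G(x+i\cdot)\to G(i\cdot)$, this identifies the trace of $B_\ell G$ directly. Alternatively, \eqref{eq:Tzeta-symbol} in Lemma~\ref{lem:lag-majorant} gives $B_\ell\,\Lap{h}=\Lap{T_{\zeta_\ell}h}$ with $T_{\zeta_\ell}$ bounded on $L^2(0,\infty)$ and $\widehat{T_{\zeta_\ell}h}(\omega)=B_\ell(i\omega)\widehat h(\omega)$, and Plancherel then gives the isometry.

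Your route yields the full statement for arbitrary inner functions, including singular ones. The elementary route avoids Fatou's theorem and is all that the Takenaka--Malmquist construction in Lemma~\ref{lem:TM} needs.
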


\paragraph{Two analytic inputs.}
The proof uses two inequalities of Erdelyi for pure-imaginary exponential sums.  We state them in the normalizations used below.

\begin{lemma}[Endpoint Nikolskii inequality {\cite[Theorem~2.3]{Erdelyi2016}}]\label{lem:nikolskii}
For every $F\in\Tcal_k$,
\begin{equation}\label{eq:nik-01}
        \|F\|_{L^\infty[0,1]}\leq \frac{\pi k}{2}\,\|F\|_{L^2[0,1]}.
\end{equation}
Consequently, for every $g\in\Tcal_k$,
\begin{equation}\label{eq:nik-11}
        \|g\|_{L^\infty[-1,1]}\leq \frac{\pi k}{2\sqrt2}\,\|g\|_{L^2[-1,1]}.
\end{equation}
\end{lemma}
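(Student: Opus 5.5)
The first inequality \eqref{eq:nik-01} is exactly Erdelyi's Nikolskii-type inequality for exponential sums with real frequencies \cite[Theorem~2.3]{Erdelyi2016}. We take it as given and only derive the rescaled consequence \eqref{eq:nik-11}. The approach is a single affine change of variables, using that $\Tcal_k$ is invariant under real translations and dilations of the time variable.

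\textbf{Step 1: invariance of $\Tcal_k$.} Given $g(t)=\sum_{j=1}^k v_je^{i\lambda_jt}\in\Tcal_k$, define $G(u):=g(2u-1)$ for $u\in\R$. Then
$$
G(u)=\sum_{j=1}^k \bigl(v_je^{-i\lambda_j}\bigr)e^{i(2\lambda_j)u}.
$$
So $G$ is again a sum of at most $k$ exponentials with real frequencies $2\lambda_j$. Distinct $\lambda_j$ stay distinct, so no merging issue arises, and $G\in\Tcal_k$.

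\textbf{Step 2: compare norms.} The map $u\mapsto 2u-1$ sends $[0,1]$ onto $[-1,1]$. Hence $\|G\|_{L^\infty[0,1]}=\|g\|_{L^\infty[-1,1]}$. Substituting $t=2u-1$, $dt=2\,du$, gives
$$
\|G\|_{L^2[0,1]}^2=\int_0^1|g(2u-1)|^2du=\tfrac12\int_{-1}^1|g(t)|^2dt=\tfrac12\|g\|_{L^2[-1,1]}^2 .
$$

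\textbf{Step 3: apply \eqref{eq:nik-01} to $G$.} This yields
$$
\|g\|_{L^\infty[-1,1]}=\|G\|_{L^\infty[0,1]}\le\frac{\pi k}{2}\|G\|_{L^2[0,1]}=\frac{\pi k}{2\sqrt2}\|g\|_{L^2[-1,1]},
$$
which is \eqref{eq:nik-11}.

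\textbf{Possible obstacle: matching the cited form.} The only nonroutine point is making sure the cited theorem is used in the form \eqref{eq:nik-01}, i.e.\ as a bound on the full sup-norm over $[0,1]$ with constant $\pi k/2$. If the source only states an endpoint bound $|F(0)|\le C k\|F\|_{L^2[0,1]}$, I would first extend it to every $t_0\in[0,1]$. Take the longer of the two subintervals $[t_0-\ell,t_0]$ and $[t_0,t_0+\ell]$ inside $[0,1]$, where $\ell\ge 1/2$, and apply the endpoint bound to the translated and rescaled sum, which still lies in $\Tcal_k$ by Step~1. This reproduces \eqref{eq:nik-01} up to an absolute factor, possibly changing the constant $\pi k/2$ by a constant multiple. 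Since all later uses only need an $O(k)$ constant, this would suffice. The rest is routine bookkeeping.
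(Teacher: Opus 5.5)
Your proposal is correct and matches the paper's argument. The paper likewise cites \eqref{eq:nik-01} directly from Erd\'elyi's Theorem~2.3 and obtains \eqref{eq:nik-11} by applying it to $F(u)=g(2u-1)$, using $\|F\|_{L^\infty[0,1]}=\|g\|_{L^\infty[-1,1]}$ and $\|F\|_{L^2[0,1]}=2^{-1/2}\|g\|_{L^2[-1,1]}$, so your endpoint-only fallback (which would lose a factor $\sqrt2$ in the constant) is not needed.
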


\begin{proof}
Inequality \eqref{eq:nik-01} is the sharp Denisov form recorded by Erdelyi.  For \eqref{eq:nik-11}, apply
\eqref{eq:nik-01} to $F(u)=g(2u-1)$.  Then
$\|F\|_{L^\infty[0,1]}=\|g\|_{L^\infty[-1,1]}$ and
$\|F\|_{L^2[0,1]}=2^{-1/2}\|g\|_{L^2[-1,1]}$.
\end{proof}

\begin{lemma}[Weighted infinite--finite range inequality {\cite[Theorem~3.1]{Erdelyi2016}}]\label{lem:finite-range}
For every $F\in\Tcal_k$ and every $\alpha>0$,
\begin{equation}\label{eq:finite-range}
        \int_0^\infty |F(u)|^2e^{-\alpha u}du
        \le A_{\mathrm{fr}}\int_0^{9k/\alpha}|F(u)|^2e^{-\alpha u}du,
\end{equation}
where one may take $A_{\mathrm{fr}}=8191$.
\end{lemma}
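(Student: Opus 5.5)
The plan is to derive the lemma from Erdelyi's Theorem~3.1 in \cite{Erdelyi2016} by a dilation, rather than to reprove it. Our only task is to match normalizations exactly: the weight $e^{-\alpha u}$, the cutoff $9k/\alpha$, and the constant $A_{\mathrm{fr}}=8191$. The reference point is the normalized case $\alpha=1$, whose target form is
$$
\int_0^\infty |G(v)|^2e^{-v}dv\le 8191\int_0^{9k}|G(v)|^2e^{-v}dv \qquad(G\in\Tcal_k).
$$

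First, I would check that Erdelyi's Theorem~3.1, specialized to pure-imaginary exponents $i\lambda_j$ with $\lambda_j\in\R$ and complex coefficients, yields exactly this normalized inequality. Two bookkeeping points need care.
\begin{enumerate}[label=(\roman*)]
\item Erdelyi indexes exponential sums by the number of terms, possibly as $n+1$ terms, or by the degree $n$. I would confirm that his range, written in our count of $k$ terms after merging repeated frequencies, is at most $9k$. Enlarging the range only enlarges the right-hand side, so any range bounded by $9k$ suffices.
\item I would confirm the value of his constant. I expect it to arise as a geometric sum $\sum_{j<13}2^j=2^{13}-1=8191$ in his proof. If his statement gives a smaller constant, the lemma holds a fortiori.
\end{enumerate}
If his theorem is stated for a two-sided weight $e^{-|t|}$ on $\R$, I would instead apply it to $G$ restricted suitably, or use its one-sided version in the proof. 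This is the step I expect to require the most care.

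Second, I would handle general $\alpha>0$ by scaling. Given $F(u)=\sum_j a_je^{i\lambda_ju}\in\Tcal_k$, define $G(v):=F(v/\alpha)=\sum_j a_j e^{i(\lambda_j/\alpha)v}$. This is again in $\Tcal_k$, since real frequencies remain real and distinct frequencies remain distinct. Substituting $u=v/\alpha$ gives
$$
\int_0^\infty|F(u)|^2e^{-\alpha u}du=\frac1\alpha\int_0^\infty|G(v)|^2e^{-v}dv .
$$
Likewise,
$$
\int_0^{9k/\alpha}|F(u)|^2e^{-\alpha u}du=\frac1\alpha\int_0^{9k}|G(v)|^2e^{-v}dv .
$$
Applying the normalized inequality to $G$ and multiplying both sides by $1/\alpha$ yields \eqref{eq:finite-range} with $A_{\mathrm{fr}}=8191$.

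The only real obstacle is the faithful transcription in the first step; the scaling step is routine. If Erdelyi's published range constant turned out to be stated differently, for example $c(n+1)$ with some $c\le 9$, I would absorb the discrepancy using monotonicity of the right-hand side in the cutoff. I would not change the constant $8191$.
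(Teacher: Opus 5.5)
Your proposal matches the paper's proof. The paper likewise quotes Erd\'elyi's Theorem~3.1 at $\alpha=1$, which is stated for the class $E_k^-$ and so covers $\Tcal_k$, and then rescales via $F(t/\alpha)$, exactly as in your change of variables. The one detail you left open is the constant: in Erd\'elyi's statement it is $1+8190e^{-k/10}\le 8191$, not a geometric sum, and your ``a fortiori'' fallback already covers this.
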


\begin{proof}
For $\alpha=1$, this is Erdelyi's infinite--finite range inequality, stated for $E_k^-$ and hence in particular for
$\Tcal_k$, with constant $1+8190e^{-k/10}\le8191$.  For general $\alpha$, apply the $\alpha=1$ statement to
$H(t)=F(t/\alpha)$ and change variables.
\end{proof}

\section{A weighted half-line comparison theorem}\label{sec:halfline}

For an integer $m\geq0$, let $L_m$ denote the ordinary Laguerre polynomial,
\begin{equation}\label{eq:laguerre-def}
        L_m(y)=\sum_{\ell=0}^m\binom{m}{\ell}\frac{(-y)^\ell}{\ell!},
\end{equation}
so that, for $y\geq0$,
\begin{equation}\label{eq:laguerre-neg}
        L_m(-y)=\sum_{\ell=0}^m\binom{m}{\ell}\frac{y^\ell}{\ell!}\;\geq\;1.
\end{equation}
The key endpoint estimate is the following.

\begin{theorem}[Laguerre half-line comparison]\label{thm:laguerre-halfline}
Let $F\in\Tcal_k$, let $\alpha>0$, and let $r\geq0$.  Then $F$ extends to an entire function\footnote{The extension is simply the defining formula $\sum a_je^{i\mu_jt}$ read for $t\in\C$.}, and
\begin{equation}\label{eq:halfline}
        |F(-r)|^2\leq \alpha\sum_{m=0}^{k-1}L_m(-\alpha r)^2\int_0^\infty|F(u)|^2e^{-\alpha u}\,du.
\end{equation}
\end{theorem}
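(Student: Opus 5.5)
We follow the route outlined in the technical overview. The idea is to absorb the weight, work in the Hardy space, orthonormalize explicitly, and bound each basis function at negative arguments.

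\textbf{Reduction.} Write $F(t)=\sum_{j=1}^k a_je^{i\mu_jt}$ with distinct $\mu_j$ (merging repeated frequencies only lowers $k$, and the right side of \eqref{eq:halfline} increases with $k$). Set $H(u)=e^{-\alpha u/2}F(u)=\sum_j a_je^{-\zeta_ju}$ with $\zeta_j=\alpha/2-i\mu_j$. Then $\int_0^\infty|F|^2e^{-\alpha u}du=\|H\|_{L^2(0,\infty)}^2$. Since $e^{\alpha r/2}$ multiplies the evaluation at $-r$, we get $F(-r)=e^{-\alpha r/2}H(-r)$, and the extension of $H$ to $\C$ is the same defining formula. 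Let $V=\operatorname{span}\{e^{-\zeta_ju}\}_{j\le k}\subset L^2(0,\infty)$, and let $\phi_0,\dots,\phi_{k-1}$ be the Takenaka--Malmquist orthonormal basis of $V$, defined through its Laplace transforms
\[
\Lap{\phi_m}(s)=\frac{\sqrt\alpha}{s+\zeta_m}\prod_{\ell<m}\frac{s-\overline{\zeta_\ell}}{s+\zeta_\ell}.
\]
Orthonormality follows from Facts~\ref{fact:isometry}--\ref{fact:inner}. The norm is $\|\phi_m\|^2=\alpha\|k_{\overline{\zeta_m}}\|^2=\alpha/(2\Re\zeta_m)=1$, since the inner product uses $1/(s+\zeta_m)$ times inner factors. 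For orthogonality with $n<m$, factor out the common inner product $\prod_{\ell<n}$; what remains is a pairing of $\frac{1}{s+\zeta_n}$ against a function vanishing at $\overline{\zeta_n}$. More precisely, by Fact~\ref{fact:kernel} $\langle G,1/(s+\zeta_n)\rangle=G(\overline{\zeta_n})$, and $G$ contains the factor $(s-\overline{\zeta_n})$. Each $\Lap{\phi_m}$ is a rational function with simple poles among $-\zeta_0,\dots,-\zeta_m$, so $\phi_m\in V$ and the $\phi_m$ span $V$. Since $H\in V$, we have $H=\sum_m\langle H,\phi_m\rangle\phi_m$ as an identity of exponential sums, hence also at $-r$. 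Cauchy--Schwarz then gives
\[
|H(-r)|^2\le\|H\|^2\sum_{m<k}|\phi_m(-r)|^2.
\]
It therefore suffices to show $|\phi_m(-r)|\le\sqrt\alpha\,e^{\alpha r/2}L_m(-\alpha r)$, because the factor $e^{\alpha r}$ cancels against $|F(-r)|^2=e^{-\alpha r}|H(-r)|^2$.

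\textbf{Majorant (main step).} Write $\phi_m=T_{\zeta_0}\cdots T_{\zeta_{m-1}}\psi_m$, where $\psi_m(t)=\sqrt\alpha e^{-\zeta_mt}$ and $T_\zeta$ is the operator with transform multiplier $(s-\overline\zeta)/(s+\zeta)=1-\alpha/(s+\zeta)$, using $\zeta+\overline\zeta=\alpha$. Thus $(T_\zeta h)(t)=h(t)-\alpha\int_0^te^{-\zeta(t-u)}h(u)du$. These operators commute, and for exponential sums the formula holds identically for all complex $t$ because both sides are entire. At $t=-r$, substitute $u=-s$ and use $|e^{-\zeta(s-r)}|=e^{\alpha(r-s)/2}$ to get
\[
|(T_\zeta h)(-r)|\le|h(-r)|+\alpha\int_0^re^{\alpha(r-s)/2}|h(-s)|ds.
\]
Set $M_h(r)=e^{-\alpha r/2}|h(-r)|$. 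The inequality becomes $M_{Th}\le(I+\alpha\mathcal V)M_h$, where $\mathcal V$ is the positive Volterra integration $\int_0^r$. The operator is monotone, and $M_{\psi_m}=\sqrt\alpha$. Hence
\[
M_{\phi_m}(r)\le\sqrt\alpha\,\big((I+\alpha\mathcal V)^m1\big)(r)=\sqrt\alpha\sum_\ell\binom m\ell\frac{(\alpha r)^\ell}{\ell!}=\sqrt\alpha L_m(-\alpha r),
\]
using $\mathcal V^\ell1=r^\ell/\ell!$ and \eqref{eq:laguerre-neg}. This is the bound we need.

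The step I expect to be most delicate is justifying the Volterra representation at negative $t$: the identity is proved on $(0,\infty)$ via Laplace transforms, and it must be extended to all of $\C$. Both sides are finite exponential sums, since the integral of an exponential sum against $e^{-\zeta(t-u)}$ is again an exponential sum. They agree on $(0,\infty)$, so by analyticity they agree everywhere. The remaining points, including the ordering of the inner factors, follow by commutativity.
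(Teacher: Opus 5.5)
Your proposal is correct and follows the paper's proof essentially step for step: you absorb the weight into $H$, build the Takenaka--Malmquist basis from inner Blaschke factors, apply Cauchy--Schwarz to the basis expansion evaluated at $-r$, and iterate the Volterra majorant through $\phi_m=T_{\zeta_0}\cdots T_{\zeta_{m-1}}\psi_m$ to reach $\sqrt\alpha\,e^{\alpha r/2}L_m(-\alpha r)$. The only cosmetic differences are that you reduce to distinct frequencies by noting the right-hand side is monotone in $k$ (the paper instead pads with zero-coefficient frequencies), and that you get the Laguerre polynomial by expanding $(I+\alpha\mathcal V)^m1$ binomially rather than through the paper's Pascal-rule induction on $P_q$.
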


We prove Theorem~\ref{thm:laguerre-halfline} in the rest of this section, using the Hardy-space notation and facts collected in Section~\ref{sec:prelim}.

\begin{lemma}[Takenaka--Malmquist basis on a vertical line {\cite{Takenaka1925,Malmquist1926}; cf.\ \cite{Nikolski2002,GarciaRoss2015}}]\label{lem:TM}
Fix $\alpha>0$ and \emph{distinct} real numbers $\lambda_0,\ldots,\lambda_{d-1}$, and set
$$
        \zeta_j=\frac\alpha2-i\lambda_j,\qquad 0\leq j\leq d-1.
$$
Then
the $d$-dimensional subspace
$$
        \Scal=\operatorname{span}\{e^{-\zeta_jt}:0\leq j\leq d-1\}\subset L^2(0,\infty)
$$
has an orthonormal basis $\phi_0,\ldots,\phi_{d-1}$ whose Laplace transforms are
\begin{equation}\label{eq:TM-formula}
        \Lap{\phi_m}(s)=\frac{\sqrt\alpha}{s+\zeta_m}\prod_{\ell=0}^{m-1}B_\ell(s),
        \qquad
        B_\ell(s):=\frac{s-\overline{\zeta_\ell}}{s+\zeta_\ell},
        \qquad 0\leq m\leq d-1.
\end{equation}
Moreover each $\phi_m$ lies in $\operatorname{span}\{e^{-\zeta_\ell t}:0\leq\ell\leq m\}$, with a nonzero
coefficient on $e^{-\zeta_mt}$.
\end{lemma}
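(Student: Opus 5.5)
The plan is to work entirely on the Laplace side and use Facts~\ref{fact:isometry}, \ref{fact:kernel}, and \ref{fact:inner}. By Fact~\ref{fact:kernel}, the Laplace transform of $e^{-\zeta_j t}$ is
$$
\frac{1}{s+\zeta_j}=k_{\overline{\zeta_j}}(s),
$$
and $\overline{\zeta_j}=\alpha/2+i\lambda_j\in\C_+$. Hence $\Lap{\Scal}$ is the span of the reproducing kernels $k_{w_j}$ with $w_j:=\overline{\zeta_j}$, and these points are distinct because the $\lambda_j$ are distinct. Write $\psi_m(s)$ for the right-hand side of \eqref{eq:TM-formula}. It then suffices to show three things: (i) each $\psi_m$ is the Laplace transform of an element of $\operatorname{span}\{e^{-\zeta_\ell t}:\ell\le m\}$ with nonzero coefficient on $e^{-\zeta_m t}$; (ii) $\|\psi_m\|_{H^2}=1$; (iii) $\langle\psi_m,\psi_n\rangle_{H^2}=0$ for $n<m$. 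Once these hold, Fact~\ref{fact:isometry} transfers everything to $L^2(0,\infty)$. By (i) and linear independence of the distinct exponentials, the $\phi_m$ form a triangular family, so they span $\Scal$, and (ii)--(iii) make them orthonormal.

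For (i), I would carry out a partial fraction decomposition. The function $\psi_m$ is rational with poles only at $-\zeta_0,\dots,-\zeta_m$, and these are simple poles because the $\lambda_j$ are distinct. Its numerator has degree $m$ and its denominator has degree $m+1$, so $\psi_m(s)=\sum_{\ell\le m}c_\ell/(s+\zeta_\ell)$. Inverting gives $\phi_m=\sum_\ell c_\ell e^{-\zeta_\ell t}$. The coefficient $c_m$ is the residue at $-\zeta_m$, namely
$$
c_m=\sqrt\alpha\prod_{\ell<m}B_\ell(-\zeta_m).
$$
This is nonzero because the zeros of $B_\ell$ are at $\overline{\zeta_\ell}\in\C_+$, whereas $-\zeta_m$ lies in the left half-plane.

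For (ii), note that each $B_\ell$ is inner. On $s=i\omega$ we have $|i\omega-\overline{\zeta_\ell}|=|i\omega+\zeta_\ell|$, since $\overline{\zeta_\ell}=-\overline{(-\zeta_\ell)}$ reflects $-\zeta_\ell$ across the imaginary axis. In $\C_+$ the point $s$ is closer to $\overline{\zeta_\ell}\in\C_+$ than to $-\zeta_\ell$, so $|B_\ell|\le1$ there. By Fact~\ref{fact:inner}, multiplication by $\prod_{\ell<m}B_\ell$ is an isometry. Therefore
$$
\|\psi_m\|^2=\alpha\|k_{\overline{\zeta_m}}\|^2=\frac{\alpha}{2\Re\zeta_m}=1.
$$

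For (iii), take $n<m$ and factor out the common inner product $\prod_{\ell<n}B_\ell$ using the isometry, which preserves inner products by polarization. This reduces the claim to
$$
\Bigl\langle \tfrac{1}{s+\zeta_m}\textstyle\prod_{n\le\ell<m}B_\ell,\ \tfrac{1}{s+\zeta_n}\Bigr\rangle=0 .
$$
The function $1/(s+\zeta_n)$ is $k_{\overline{\zeta_n}}$, so by the reproducing property the left side equals the first function evaluated at $\overline{\zeta_n}$. That value vanishes because $B_n(\overline{\zeta_n})=0$.

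The main obstacle is bookkeeping rather than depth. One has to keep track of the convention $k_w=1/(s+\overline w)$, so that the evaluation point is $\overline{\zeta_n}$, and check that it coincides exactly with the zero of $B_n$. One also has to justify that the isometry preserves inner products, not only norms; this follows from polarization.
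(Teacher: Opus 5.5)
Your proposal is correct and follows the paper's proof essentially step for step. Both arguments show that the Blaschke factors are inner, obtain the unit norm from the normalized reproducing kernel at $\overline{\zeta_m}$, get orthogonality by cancelling the common inner factor and applying the reproducing property at the zero $\overline{\zeta_n}$ of $B_n$, and establish triangularity by partial fractions. The only differences are cosmetic: you justify $c_m\neq0$ by locating the zeros and poles of $B_\ell$ rather than by computing each factor explicitly, and you make the polarization step explicit.
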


\begin{proof}
First note that each $e^{-\zeta_jt}$ lies in $L^2(0,\infty)$ because $\Re\zeta_j=\alpha/2>0$, and that
exponentials with distinct exponents are linearly independent, so $\dim\Scal=d$.

Next, we show that $\{\phi_m\}$ form an orthonormal basis.
\begin{itemize}
\item \emph{The Blaschke factors $B_\ell$ are inner.}  Each $B_\ell$ is a M\"obius transformation whose only pole,
$-\zeta_\ell$, lies in the closed left half-plane, so $B_\ell$ is holomorphic on $\C_+$.  On the boundary
$s=i\omega$, writing $\zeta_\ell=\alpha/2-i\lambda_\ell$,
$$
        |i\omega-\overline{\zeta_\ell}|=\Bigl|-\frac\alpha2+i(\omega-\lambda_\ell)\Bigr|
        =\Bigl|\frac\alpha2+i(\omega-\lambda_\ell)\Bigr|=|i\omega+\zeta_\ell|,
$$
so $|B_\ell(i\omega)|=1$; by the maximum principle $|B_\ell|\leq1$ on $\C_+$.  Hence $B_\ell$, and every
product of the $B_\ell$'s, is inner in the sense of Fact~\ref{fact:inner}.

\item \emph{Unit norms.}  Apply Fact~\ref{fact:kernel} with $w=\overline{\zeta_m}$, so that
$\overline w=\zeta_m$ and $2\Re w=\alpha$: the factor $\sqrt\alpha/(s+\zeta_m)=\sqrt{2\Re w}\,k_w(s)$ is
the normalized reproducing kernel at $\overline{\zeta_m}$ and has norm $1$.  Since the prefactor
$\prod_{\ell<m}B_\ell$ is inner, Fact~\ref{fact:inner} gives $\|\Lap{\phi_m}\|=1$. Thus, by Fact~\ref{fact:isometry}, $\|\phi_m\|_{L^2(0,\infty)}=1$.

\item \emph{Orthogonality.}  Let $m<n$ and write
$$
\begin{aligned}
        \Lap{\phi_n}&=\Theta\,B_m\Psi, &
        \Lap{\phi_m}&=\Theta\,\kappa_m,\\
        \Theta&=\prod_{\ell=0}^{m-1}B_\ell, &
        \kappa_m(s)&=\frac{\sqrt\alpha}{s+\zeta_m}, &
        \Psi&=\Bigl(\prod_{\ell=m+1}^{n-1}B_\ell\Bigr)\kappa_n .
\end{aligned}
$$
Here $\Psi\in H^2(\C_+)$, being a bounded holomorphic multiple of $\kappa_n\in H^2(\C_+)$.  By
Fact~\ref{fact:inner} the common inner factor $\Theta$ cancels, and then the reproducing property
(Fact~\ref{fact:kernel}) at $w=\overline{\zeta_m}$ applies:
$$
        \langle\Lap{\phi_n},\Lap{\phi_m}\rangle
        =\langle B_m\Psi,\kappa_m\rangle
        =\sqrt\alpha\,\bigl(B_m\Psi\bigr)\bigl(\overline{\zeta_m}\bigr)=0,
$$
because $B_m(\overline{\zeta_m})=\dfrac{\overline{\zeta_m}-\overline{\zeta_m}}
{\overline{\zeta_m}+\zeta_m}=0$. Thus, 
$\{\phi_m\}_{m=0}^{d-1}$ are orthonormal in $L^2(0,\infty)$.

\item \emph{Triangularity and spanning.}  The rational function \eqref{eq:TM-formula} is proper (degree of
denominator exceeds degree of numerator) with simple poles at the distinct points
$-\zeta_0,\ldots,-\zeta_m$. By partial-fraction expansion, we have
$$
        \Lap{\phi_m}(s)=\sqrt{\alpha}\frac{\prod_{\ell=0}^{m-1}(s-\overline{\zeta_\ell})}{\prod_{\ell=0}^{m}(s+\zeta_\ell)}=\sum_{\ell=0}^m\frac{c_{m\ell}}{s+\zeta_\ell}\,.     
$$
Since $\Lap{e^{-\zeta_\ell t}}=1/(s+\zeta_\ell)$, we get that
\begin{align*}
    \phi_m(t)=\sum_{\ell=0}^m c_{m\ell}\,e^{-\zeta_\ell t}\in \mathrm{span}\{e^{-\zeta_0t},\dots,e^{-\zeta_m t}\}\,.
\end{align*}
Moreover, the coefficient $c_{mm}$ is nonzero:
$$
        c_{mm}=\lim_{s\to-\zeta_m}(s+\zeta_m)\Lap{\phi_m}(s)
        =\sqrt\alpha\prod_{\ell=0}^{m-1}\frac{-\zeta_m-\overline{\zeta_\ell}}{-\zeta_m+\zeta_\ell}
        =\sqrt\alpha\prod_{\ell=0}^{m-1}\frac{-\alpha+i(\lambda_m-\lambda_\ell)}{i(\lambda_m-\lambda_\ell)}\ne 0\,,
$$
each factor being nonzero because its numerator has real part $-\alpha\neq0$ and its denominator is
nonzero by the distinctness of the $\lambda_j$.
\end{itemize}
The lemma is then proved.
\end{proof}

The next lemma is the heart of the matter: at exterior
points the basis functions grow at most like Laguerre polynomials, \emph{uniformly in the frequencies}
$\lambda_0,\ldots,\lambda_{d-1}$.

\begin{lemma}[Pointwise Laguerre majorant for the Takenaka--Malmquist basis]\label{lem:lag-majorant}
With the notation of Lemma~\ref{lem:TM}, for every $r\geq0$ and every $0\leq m\leq d-1$,
\begin{equation}\label{eq:majorant}
        |\phi_m(-r)|\leq\sqrt\alpha\,e^{\alpha r/2}\,L_m(-\alpha r).
\end{equation}
\end{lemma}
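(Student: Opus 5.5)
We argue by induction on $m$, working directly in the time domain as suggested in the technical overview. Write $\psi_m:=\Lap^{-1}$ applied to $\prod_{\ell<m}B_\ell$ times $\sqrt\alpha/(s+\zeta_m)$, i.e.\ $\phi_m$ itself, and note the factorization
$$
\Lap{\phi_m}=B_{m-1}\cdot\frac{s+\zeta_{m-1}}{s+\zeta_m}\,\Lap{\phi_{m-1}}\,.
$$
This mixes indices, so instead we build $\phi_m$ in the reverse order. Set $h_0(t)=\sqrt\alpha\,e^{-\zeta_m t}$, whose transform is $\sqrt\alpha/(s+\zeta_m)$. Then apply the inner factors $B_{m-1},\dots,B_0$ one at a time: $h_{j+1}=T_{\zeta_{\ell}}h_j$ with $\ell=m-1-j$, so that $\phi_m=h_m$. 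Since $B_\ell=1-\alpha/(s+\zeta_\ell)$ (because $\zeta_\ell+\overline{\zeta_\ell}=\alpha$), multiplication by $B_\ell$ corresponds in time to $T_\zeta h(t)=h(t)-\alpha\int_0^t e^{-\zeta(t-u)}h(u)\,du$. Each $h_j$ is a finite exponential sum, hence entire, and this identity persists for all complex $t$ by analytic continuation, in particular for $t=-r$.

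At $t=-r$, substitute $u=-s$ to get
$$
(T_\zeta h)(-r)=h(-r)+\alpha\int_0^r e^{\zeta(r-s)}h(-s)\,ds .
$$
Since $|e^{\zeta(r-s)}|=e^{\alpha(r-s)/2}$, this gives the frequency-free inequality
$$
|(T_\zeta h)(-r)|\le|h(-r)|+\alpha\int_0^r e^{\alpha(r-s)/2}|h(-s)|\,ds .
$$
Write $M_j(r):=e^{-\alpha r/2}|h_j(-r)|$. The inequality becomes $M_{j+1}(r)\le M_j(r)+\alpha\int_0^r M_j(s)\,ds$. The base case is $M_0(r)=\sqrt\alpha\,e^{-\alpha r/2}|e^{\zeta_m r}|=\sqrt\alpha$. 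Because the operator $V:f\mapsto f+\alpha\int_0^r f$ is positive (monotone), we get $M_j\le\sqrt\alpha\,(V^j\mathbf 1)(r)$.

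It remains to identify $V^m\mathbf 1$. Expand $V^m=\sum_\ell\binom m\ell\alpha^\ell J^\ell$, where $J$ is the integration operator. Since $J^\ell\mathbf 1=r^\ell/\ell!$, we obtain $V^m\mathbf 1(r)=\sum_\ell\binom m\ell(\alpha r)^\ell/\ell!=L_m(-\alpha r)$ by \eqref{eq:laguerre-neg}. Multiplying back by $e^{\alpha r/2}$ gives \eqref{eq:majorant}.

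The main obstacle is justifying the time-domain form of the inner multiplier and its evaluation at negative times. Specifically, we must check that $\Lap{T_\zeta h}=B_\zeta\Lap h$ (a convolution identity with $\Lap{\alpha e^{-\zeta t}}=\alpha/(s+\zeta)$). We must also check that the integral formula for $T_\zeta h$, valid for $t\ge0$, is an identity of entire functions. This holds because both sides are finite exponential sums: the convolution of exponentials is computed in closed form, so its continuation to $t<0$ is exactly the same integral taken along $[0,t]$. With the ordering of the factors handled as above (the $B_\ell$ commute, so any order works), the rest is the positivity bookkeeping.
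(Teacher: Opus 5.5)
Your proof is correct and follows the paper's argument essentially step for step. You use the same factorization $\phi_m=T_{\zeta_0}\cdots T_{\zeta_{m-1}}(\sqrt\alpha\,e^{-\zeta_m t})$, the same sign-reversal estimate at $t=-r$, and the same positivity induction; the only cosmetic difference is that you identify $V^m\mathbf 1$ by binomially expanding the commuting operators $I$ and $\alpha J$, whereas the paper checks the closed form of $P_q$ by induction with Pascal's rule.
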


\begin{proof}
We prove this lemma in the following steps:
\begin{itemize}
\item \emph{Step 1: an integral operator realizing the inner factors.}
For $\zeta\in\C$ with $\Re\zeta=\alpha/2$, define, for entire $h$,
\begin{equation}\label{eq:Tzeta}
        (T_\zeta h)(t)=h(t)-\alpha\int_0^te^{-\zeta(t-u)}h(u)\,du,\qquad t\in\C,
\end{equation}
the integral being taken over the straight segment from $0$ to $t$.  Then $T_\zeta h$ is again entire, and
\begin{align*}
    T_\zeta h = h - \alpha \left(e^{-\zeta t}\mathbf{1}_{t\geq 0}\right) * h\,.
\end{align*}
If $h\in L^2(0,\infty)$, then $T_\zeta h$ is also in $L^2(0,\infty)$, and by the convolution theorem for the Laplace transform,
\begin{equation}\label{eq:Tzeta-symbol}
        \Lap{T_\zeta h}(s)=\Bigl(1-\frac{\alpha}{s+\zeta}\Bigr)\Lap h(s)
        =\frac{s+\zeta-\alpha}{s+\zeta}\,\Lap h(s)
        =\frac{s-\overline\zeta}{s+\zeta}\,\Lap h(s)\,.
\end{equation}

\item \emph{Step 2: factorization of $\phi_m$.}
Let $h_m(t)=\sqrt\alpha\,e^{-\zeta_mt}$, so that $\Lap{h_m}(s)=\sqrt\alpha/(s+\zeta_m)$.  Applying
\eqref{eq:Tzeta-symbol} repeatedly, the entire function
$$
        \psi_m:=T_{\zeta_0}T_{\zeta_1}\cdots T_{\zeta_{m-1}}h_m
$$
restricts to a function in $L^2(0,\infty)$ whose Laplace transform is
\begin{align*}
    \Lap{\psi_m}(s) = &~\frac{s-\overline{\zeta_0}}{s+\zeta_0}\Lap{ (T_{\zeta_1}\cdots T_{\zeta_{m-1}}h_m)}(s)=\cdots \\
    = &~ \prod_{\ell=0}^{m-1}\frac{s-\overline{\zeta_\ell}}{s+\zeta_\ell}\, \Lap{h_m}(s) = \frac{\sqrt\alpha}{s+\zeta_m}\prod_{\ell=0}^{m-1}B_\ell(s)=\Lap{\phi_m}(s)\,,
\end{align*}
where each application of \eqref{eq:Tzeta-symbol} is legitimate because, by Step~1, every intermediate
function again lies in $L^2(0,\infty)$.
By injectivity of the
Laplace transform, $\psi_m=\phi_m$ almost everywhere on $(0,\infty)$; both being entire, the identity
theorem gives $\psi_m=\phi_m$ on all of $\C$.  Hence,
\begin{equation}\label{eq:phi-factorized}
        \phi_m=T_{\zeta_0}T_{\zeta_1}\cdots T_{\zeta_{m-1}}h_m
\end{equation}
as entire functions.

\item \emph{Step 3: induction on the number of factors.}
Define
$$
        \phi_{m}^{(m)}:=h_m,
        \qquad
        \phi_m^{(m-q)}:=T_{\zeta_{m-q}}\cdots T_{\zeta_{m-1}}h_m,
        \qquad 1\leq q\leq m .
$$
We claim that for all $0\leq q\leq m$ and $r\geq0$,
\begin{equation}\label{eq:induction}
        \left|\phi_{m}^{(m-q)}(-r)\right|\leq\sqrt\alpha\,e^{\alpha r/2}P_q(r)\,,
\end{equation}
where 
\begin{align*}
    P_0:=1,\quad P_{q+1}(r):=P_q(r)+\alpha\int_0^rP_q(s)\,ds\,.
\end{align*}
We prove by induction. For $q=0$ this holds with equality of the moduli: 
\begin{align*}
    |h_m(-r)|=\sqrt\alpha\,|e^{\zeta_m r}|
=\sqrt\alpha\,e^{r\Re\zeta_m}=\sqrt\alpha\,e^{\alpha r/2}\,.
\end{align*}
Assume \eqref{eq:induction} holds for $q-1$.  Evaluating
\eqref{eq:Tzeta} at $t=-r$ and substituting $u=-s$ gives
$$
        \phi_{m}^{(m-q)}(-r)=(T_{\zeta_{m-q}}\phi_m^{(m-q+1)})(-r)=\phi_m^{(m-q+1)}(-r)+\alpha\int_0^re^{\zeta_{m-q}(r-s)}\phi_m^{(m-q+1)}(-s)\,ds \,.
$$
Since $\Re\zeta_{m-q}=\alpha/2$ and $r-s\geq0$, we have $|e^{\zeta_{m-q}(r-s)}|=e^{\alpha(r-s)/2}$. Thus,
\begin{align*}
        |\phi_{m}^{(m-q)}(-r)| 
        \leq&~ |\phi_m^{(m-q+1)}(-r)| + \alpha \int_{0}^r |e^{\zeta_{m-q}(r-s)}|\cdot |\phi_m^{(m-q+1)}(-s)|\,ds\\
        \leq &~ \sqrt\alpha\,e^{\alpha r/2}P_{q-1}(r)
        +\alpha\int_0^re^{\alpha(r-s)/2}\cdot\sqrt\alpha\,e^{\alpha s/2}P_{q-1}(s)\,ds\\
        = &~ \sqrt\alpha\,e^{\alpha r/2}\Bigl(P_{q-1}(r)+\alpha\int_0^rP_{q-1}(s)\,ds\Bigr)\\
        = &~ \sqrt\alpha\,e^{\alpha r/2} P_{q}(r)\,.
\end{align*}
By induction, \eqref{eq:induction} holds for all $0\leq q\leq m$.

\item \emph{Step 4: identification $P_q(r)=L_q(-\alpha r)$.}
We verify by induction that
\begin{equation}\label{eq:Pq-formula}
        P_q(r)=\sum_{\ell=0}^q\binom{q}{\ell}\frac{(\alpha r)^\ell}{\ell!}.
\end{equation}
For $q=0$ both sides equal $1$.  Assuming \eqref{eq:Pq-formula},
\begin{align*}
        P_{q+1}(r)
        = &~ \sum_{\ell=0}^q\binom{q}{\ell}\frac{(\alpha r)^\ell}{\ell!}
        +\alpha\sum_{\ell=0}^q\binom{q}{\ell}\frac{\alpha^\ell}{\ell!}\cdot\frac{r^{\ell+1}}{\ell+1}\\
        = &~ \sum_{\ell=0}^{q+1}\biggl[\binom{q}{\ell}+\binom{q}{\ell-1}\biggr]\frac{(\alpha r)^\ell}{\ell!}\\
        = &~ \sum_{\ell=0}^{q+1}\binom{q+1}{\ell}\frac{(\alpha r)^\ell}{\ell!}\,,
\end{align*}
where the second step uses the convention $\binom{q}{-1}=\binom{q}{q+1}=0$, and the third step follows from  Pascal's rule.  Comparing \eqref{eq:Pq-formula} with the definition of the Laguerre polynomial in
\eqref{eq:laguerre-neg} gives $P_q(r)=L_q(-\alpha r)$.
\end{itemize}
Therefore, we obtain
\begin{align*}
    |\phi_m(-r)|=|\phi_m^{(m-m)}(-r)|\leq \sqrt\alpha\,e^{\alpha r/2}P_m(r) = \sqrt{\alpha}\,e^{\alpha r/2}L_m(-\alpha r)\quad \forall r\geq 0\,,
\end{align*}
which completes the proof of the lemma.
\end{proof}

\begin{proof}[Proof of Theorem~\ref{thm:laguerre-halfline}]
We may assume that
\begin{align*}
    \int_0^\infty|F(u)|^2e^{-\alpha u}\,du<\infty\,,
\end{align*}
since otherwise there is nothing to prove.  We may also assume that
$F(t)=\sum_{j=0}^{k-1}a_je^{i\lambda_jt}$ with \emph{distinct} frequencies
$\lambda_0,\ldots,\lambda_{k-1}$: merge repeated frequencies by adding their coefficients and, if fewer
than $k$ distinct frequencies remain, pad the representation with unused distinct frequencies carrying
zero coefficients.
Define
$$
        H(u):=e^{-\alpha u/2}F(u)=\sum_{j=0}^{k-1}a_je^{-(\alpha/2-i\lambda_j)u}
        =\sum_{j=0}^{k-1}a_je^{-\zeta_ju},
$$
with $\zeta_j:=\alpha/2-i\lambda_j$ as in Lemma~\ref{lem:TM}.  Thus $H$ is an entire function whose
restriction to $(0,\infty)$ lies in the space $\Scal$ of Lemma~\ref{lem:TM}, and
\begin{equation}\label{eq:H-norm}
        \|H\|_{L^2(0,\infty)}^2=\int_0^\infty|F(u)|^2e^{-\alpha u}\,du\,,
        \qquad
        H(-r)=e^{\alpha r/2}F(-r)\,.
\end{equation}

Expand $H$ in the orthonormal basis of Lemma~\ref{lem:TM}:
$$
        H=\sum_{m=0}^{k-1}\langle H,\phi_m\rangle\,\phi_m\qquad\text{in }L^2(0,\infty).
$$
This identity is initially an $L^2$-identity, hence both sides agree only almost everywhere on $(0,\infty)$.  However, both sides have canonical entire representatives: $H$ is a finite linear combination of the exponentials $e^{-\zeta_j t}$, and, by the triangular structure in Lemma~\ref{lem:TM}, so are the $\phi_m$. By the identity theorem, they agree on all of $\mathbb C$, and the expansion may be evaluated at $t=-r$. Then, we have
\begin{align*}
        |H(-r)|^2
        =&~ \left|\sum_{m=0}^{k-1}\langle H,\phi_m\rangle\,\phi_m(-r)\right|^2\\
        \leq &~ \left(\sum_{m=0}^{k-1}|\phi_m(-r)|^2\right)\sum_{m=0}^{k-1}|\langle H,\phi_m\rangle|^2\\
        = &~ \left(\sum_{m=0}^{k-1}|\phi_m(-r)|^2\right)\|H\|_{L^2(0,\infty)}^2\\
        \leq &~ \alpha\,e^{\alpha r}\left(\sum_{m=0}^{k-1}L_m(-\alpha r)^2\right)\|H\|_{L^2(0,\infty)}^2\,,
\end{align*}
where the second step follows from the Cauchy--Schwarz inequality, the third step follows from the orthonormality of $\phi_m$, and the last step follows from Lemma~\ref{lem:lag-majorant}.

Therefore, by~\eqref{eq:H-norm},
$$
        |F(-r)|^2=e^{-\alpha r}|H(-r)|^2
        \leq\alpha\sum_{m=0}^{k-1}L_m(-\alpha r)^2\int_0^\infty|F(u)|^2e^{-\alpha u}\,du\,,
$$
which proves the theorem.
\end{proof}

\begin{remark}[Relation to Erd\'elyi's endpoint estimates]\label{rem:erdelyi}
At $r=0$ and $\alpha=1$, Theorem~\ref{thm:laguerre-halfline} reads
$$
        |F(0)|\leq\sqrt k\,\bigl\|F(u)e^{-u/2}\bigr\|_{L^2(0,\infty)}\,,
$$
which is the pure-imaginary specialization of the half-line endpoint estimate used in
Erd\'elyi's M\"untz--Legendre proof of his weighted endpoint inequalities \cite[Lemma~12.4]{Erdelyi2016};
see \cite{BorweinErdelyiZhang1994} for the M\"untz--Legendre background.  If instead $r=0$ and $\alpha=9k$,
then $9k/\alpha=1$, and combining the display above (rescaled) with Lemma~\ref{lem:finite-range} gives
$$
        |F(0)|^2\leq 9k\cdot k\,A_{\mathrm{fr}}\int_0^1|F(u)|^2e^{-9ku}\,du,
        \qquad\text{i.e.}\qquad
        |F(0)|\lesssim k\,\bigl\|F(u)e^{-9ku/2}\bigr\|_{L^2[0,1]},
$$
which is the constant scale of Erd\'elyi's weighted finite-interval theorem \cite[Theorem~3.2]{Erdelyi2016}.
\end{remark}

\section[The endpoint window]{The endpoint window $1\leq x\leq2$}\label{sec:edge}

In this section we prove the main result for $1\leq x\leq2$ by combining
Theorem~\ref{thm:laguerre-halfline} with the finite-range inequality of Lemma~\ref{lem:finite-range} and
the following elementary growth estimate for Laguerre polynomials on the negative axis.

\begin{lemma}[Elementary Laguerre growth {\cite[Chapter~V]{Szego1975}}]\label{lem:lag-growth}
For every integer $m\geq0$ and every $y\geq0$,
\begin{equation}\label{eq:lag-growth}
        0\leq L_m(-y)\leq e^{2\sqrt{my}}\,.
\end{equation}
\end{lemma}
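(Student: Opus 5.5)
We will prove Lemma~\ref{lem:lag-growth} directly from the explicit expansion \eqref{eq:laguerre-neg}. Nonnegativity is immediate, since for $y\ge0$ every term $\binom{m}{\ell}y^\ell/\ell!$ is nonnegative; indeed $L_m(-y)\ge1$. The whole content is therefore the upper bound. The case $m=0$ is trivial, since $L_0\equiv1=e^0$, so we assume $m\ge1$.

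For the upper bound we compare the expansion termwise with the power series of an exponential. The elementary inequality $\binom{m}{\ell}\le m^\ell/\ell!$ gives
$$
        L_m(-y)=\sum_{\ell=0}^m\binom{m}{\ell}\frac{y^\ell}{\ell!}
        \le\sum_{\ell=0}^{\infty}\frac{(my)^\ell}{(\ell!)^2}.
$$
It then suffices to bound $(\ell!)^2$ from below by $(2\ell)!/4^\ell$. This follows from $\binom{2\ell}{\ell}\le 4^\ell$, which gives $(2\ell)!\le 4^\ell(\ell!)^2$, i.e.\ $1/(\ell!)^2\le 4^\ell/(2\ell)!$. Hence
$$
        L_m(-y)\le\sum_{\ell\ge0}\frac{(4my)^\ell}{(2\ell)!}
        =\sum_{\ell\ge0}\frac{(2\sqrt{my})^{2\ell}}{(2\ell)!}
        =\cosh\bigl(2\sqrt{my}\bigr)\le e^{2\sqrt{my}}.
$$
This is exactly \eqref{eq:lag-growth}.

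There is essentially no obstacle here. The only point needing care is choosing a comparison that produces the sharp constant $2$ in the exponent rather than a crude one. For example, bounding $\sum (my)^\ell/(\ell!)^2$ by its largest term times the number of terms would lose a polynomial factor. The central-binomial trick identifies the series with $\cosh(2\sqrt{my})$ exactly, which is the modified Bessel series $I_0(2\sqrt{my})$ majorized by $\cosh$. If we wanted to avoid this trick, an alternative is the Cauchy--Schwarz bound $\bigl(\sum a_\ell\bigr)^2\le\bigl(\sum \lambda^{-\ell}\bigr)\bigl(\sum\lambda^\ell a_\ell^2\bigr)$. We will use the binomial route since it is one line.
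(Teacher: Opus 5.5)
Your proof is correct and is essentially the paper's own argument: you bound $\binom{m}{\ell}\le m^\ell/\ell!$ to get $\sum_\ell (my)^\ell/(\ell!)^2$, then use $(2\ell)!\le 4^\ell(\ell!)^2$ to majorize by $\cosh(2\sqrt{my})\le e^{2\sqrt{my}}$. The only wording slip is that this step majorizes the series $I_0(2\sqrt{my})$ by $\cosh(2\sqrt{my})$ rather than identifying the two exactly, as your own later parenthetical says.
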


\begin{proof}
Nonnegativity is clear from \eqref{eq:laguerre-neg}.

For the upper bound, we have
\begin{align*}
        L_m(-y)=&~ \sum_{\ell=0}^m\binom m\ell\frac{y^\ell}{\ell!}
        \leq\sum_{\ell=0}^\infty\frac{(my)^\ell}{(\ell!)^2}=\sum_{\ell=0}^\infty\frac{(2\sqrt{my})^{2\ell}}{4^\ell (\ell!)^2}\\
        \leq &~  \sum_{\ell=0}^\infty\frac{(2\sqrt{my})^{2\ell}}{(2\ell)!}
        =\cosh (2\sqrt{my})\leq e^{2\sqrt{my}}\,,
\end{align*}
where the first inequality follows from $\binom m\ell\leq\frac{m^\ell}{\ell!}$, the second inequality follows from $(2\ell)!=\binom{2\ell}{\ell}(\ell!)^2\leq4^\ell(\ell!)^2$, and the third inequality follows from $\cosh x\leq e^x$. This proves
\eqref{eq:lag-growth}.
\end{proof}

\begin{proposition}[Edge extrapolation]\label{prop:edge}
For every $g\in\Tcal_k$ and every $0\leq\delta\leq1$,
\begin{equation}\label{eq:edge-sharp}
        |g(1+\delta)|\leq\sqrt{\frac{9A_{\mathrm{fr}}}2}\;k\,
        \exp\{3\sqrt2\,k\sqrt\delta\}\,\|g\|_{L^2[-1,1]} .
\end{equation}
\end{proposition}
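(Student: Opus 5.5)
Set $F(u):=g(1-u)$. Since $g\in\Tcal_k$, so is $F$, and we have $g(1+\delta)=F(-\delta)$ and $\|g\|_{L^2[-1,1]}^2=\int_0^2|F(u)|^2du$. The plan has three steps: apply Theorem~\ref{thm:laguerre-halfline} with $r=\delta$, bound the weighted half-line integral by Lemma~\ref{lem:finite-range}, and bound the Laguerre sum by Lemma~\ref{lem:lag-growth}.

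Take $\alpha=9k/2$, so that $9k/\alpha=2$. Theorem~\ref{thm:laguerre-halfline} gives
$$|F(-\delta)|^2\le \alpha\sum_{m=0}^{k-1}L_m(-\alpha\delta)^2\int_0^\infty|F(u)|^2e^{-\alpha u}du.$$
Lemma~\ref{lem:finite-range} then bounds the half-line integral by
$$A_{\mathrm{fr}}\int_0^2|F(u)|^2e^{-\alpha u}du\le A_{\mathrm{fr}}\int_0^2|F(u)|^2du,$$
using $e^{-\alpha u}\le1$. The last integral is exactly $\|g\|_{L^2[-1,1]}^2$.

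For the Laguerre sum, Lemma~\ref{lem:lag-growth} gives, for each $m\le k-1$,
$$L_m(-\alpha\delta)\le e^{2\sqrt{m\alpha\delta}}\le e^{2\sqrt{(k-1)\cdot 9k\delta/2}}\le e^{3\sqrt2\,k\sqrt\delta},$$
since $2\sqrt{9k^2\delta/2}=3\sqrt2\,k\sqrt\delta$. Hence $\sum_{m=0}^{k-1}L_m(-\alpha\delta)^2\le k\,e^{6\sqrt2\,k\sqrt\delta}$.

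Multiplying the three bounds gives
$$|g(1+\delta)|^2\le \frac{9k}{2}\cdot k\,e^{6\sqrt2 k\sqrt\delta}\cdot A_{\mathrm{fr}}\|g\|_{L^2[-1,1]}^2.$$
Taking square roots yields exactly \eqref{eq:edge-sharp}, with constant $\sqrt{9A_{\mathrm{fr}}/2}\,k$.

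There is no real obstacle here; all the analytic work sits in Theorem~\ref{thm:laguerre-halfline}. The only points to check are the orientation of the reflection $u=1-t$, which maps $[-1,1]$ onto $[0,2]$ and the exterior point $1+\delta$ onto $-\delta$, and that $F\in\Tcal_k$ (its frequencies are $-\lambda_j$, with coefficients multiplied by $e^{i\lambda_j}$). The hypothesis $\delta\le1$ is not needed in this argument; it only matters when the bound is later compared with $\arcosh$.
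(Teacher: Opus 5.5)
Your proposal is correct and follows the paper's proof step for step: the reflection $F(u)=g(1-u)$, Theorem~\ref{thm:laguerre-halfline} with $\alpha=9k/2$ and $r=\delta$, Lemma~\ref{lem:finite-range} with truncation point $9k/\alpha=2$ followed by $e^{-\alpha u}\le1$, and the termwise bound $L_m(-\alpha\delta)^2\le e^{6\sqrt2\,k\sqrt\delta}$ from Lemma~\ref{lem:lag-growth}. Your side remark is also right: nothing in this chain uses $\delta\le1$, which only enters when the bound is compared with $\arcosh$.
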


\begin{proof}
Let $F(u):=g(1-u)$ for $u\in\R$. Then $F\in \Tcal_k$, $F(-\delta)=g(1+\delta)$, and
\begin{equation}\label{eq:F-energy}
        \int_0^2|F(u)|^2\,du=\int_{-1}^1|g(t)|^2\,dt=\|g\|_{L^2[-1,1]}^2 \, .
\end{equation}
Now apply Theorem~\ref{thm:laguerre-halfline} to $F$ with $\alpha=\frac{9k}{2}$ and $r=\delta$:
\begin{equation}\label{eq:edge-chain}
\begin{aligned}
        |g(1+\delta)|^2=|F(-\delta)|^2
        \leq &~ \alpha\sum_{m=0}^{k-1}L_m(-\alpha\delta)^2\int_0^\infty|F(u)|^2e^{-\alpha u}\,du\\
        \leq &~ A_{\mathrm{fr}}\,\alpha\sum_{m=0}^{k-1}L_m(-\alpha\delta)^2\int_0^2|F(u)|^2e^{-\alpha u}\,du\\
        \leq &~ A_{\mathrm{fr}}\,\alpha\sum_{m=0}^{k-1}L_m(-\alpha\delta)^2\int_0^2|F(u)|^2\,du\,,
\end{aligned}
\end{equation}
where the second step follows from Lemma~\ref{lem:finite-range} with the truncation point $9k/\alpha=2$, and the third step follows from $e^{-\alpha u}\leq 1$ for $u\in [0,2]$.

By Lemma~\ref{lem:lag-growth} with $y=\alpha\delta=\tfrac{9k}2\delta$, we can upper bound the Laguerre polynomials:
\begin{equation}\label{eq:edge-laguerre}
        \sum_{m=0}^{k-1}L_m(-\alpha\delta)^2\leq \sum_{m=0}^{k-1} \exp\left(4\sqrt{m\alpha \delta}\right)\leq k\exp\left(4\sqrt{k\alpha\delta}\right)
        =k\exp\left(6\sqrt2\,k\sqrt\delta\right)\,.
\end{equation}
Combining \eqref{eq:F-energy}, \eqref{eq:edge-chain}, \eqref{eq:edge-laguerre},
$$
        |g(1+\delta)|^2\leq\frac{9A_{\mathrm{fr}}}2\,k^2\,
        \exp\left(6\sqrt2\,k\sqrt\delta\right)\,\|g\|_{L^2[-1,1]}^2 \,.
$$
Taking square roots gives \eqref{eq:edge-sharp}.
\end{proof}

\section[The outer regime]{The outer regime $x\geq2$}\label{sec:outer}

For $x\geq2$, a
coefficient-explicit Tur\'an-type extrapolation is already on the Chebyshev scale, because
$\arcosh x\asymp\log x$ there.

\begin{lemma}[Tur\'an's first main theorem]\label{lem:turan-finite}
Let $z_1,\ldots,z_k\in\C$ satisfy $|z_j|\geq1$,  $b_1,\ldots,b_k\in\C$, and
$s_n=\sum_{j=1}^kb_jz_j^n$ for integers $n\geq0$.  Then for every integer $m\geq0$,
\begin{equation}\label{eq:turan}
        |s_0|\leq k2^{k-1}\binom{m+k-1}{k-1}\max_{m+1\leq n\leq m+k}|s_n| .
\end{equation}
For $m=0$, the prefactor can be replaced by $2^k-1$.
\end{lemma}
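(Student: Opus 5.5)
We would look for coefficients $c_{m+1},\ldots,c_{m+k}\in\C$, depending only on the nodes $z_j$ and not on the $b_j$, such that
$$
        s_0=\sum_{n=m+1}^{m+k}c_ns_n\qquad\text{for all }b_1,\ldots,b_k .
$$
Then $|s_0|\le\bigl(\sum_n|c_n|\bigr)\max_{m+1\le n\le m+k}|s_n|$, and it remains to bound $\sum_n|c_n|$.

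\textbf{Reduction to a remainder problem.} The identity above holds for all $b$ exactly when $\sum_nc_nz_j^n=1$ for every $j$. We may assume the $z_j$ are distinct, since equal nodes can be merged into one term. Set $w_j=1/z_j$, so that $|w_j|\le1$. Multiplying the condition by $w_j^{m+k}$, it becomes $A(w_j)=w_j^{m+k}$ for every $j$, where
$$
        A(w)=\sum_{i=0}^{k-1}a_iw^i,\qquad a_i=c_{m+k-i}.
$$
So we take $A$ to be the remainder of $w^{m+k}$ upon division by $p(w)=\prod_j(w-w_j)$. This $A$ interpolates $w^{m+k}$ at the $k$ distinct nodes, which is exactly what is required.

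\textbf{Bounding the remainder coefficients.} Write $p(w)=\sum_{r=0}^k(-1)^{k-r}e_{k-r}w^r$, where $e_i$ are the elementary symmetric polynomials of the $w_j$. Since $|w_j|\le1$, we have $|e_i|\le\binom ki$. Also $1/p$ expanded at infinity has coefficients given by the complete homogeneous symmetric polynomials $h_n(w_1,\ldots,w_k)$. Long division, or equivalently the generating-function identity for $w^N\bmod p$, then expresses each $a_i$ as a signed sum
$$
        a_i=\sum_{r>i}\pm\,e_{k-r}\,h_{N-k-(r-1-i)},\qquad N=m+k,
$$
in which every $h$-index is at most $m$. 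Because $|w_j|\le1$, each factor satisfies $|h_n|\le h_n(1,\ldots,1)=\binom{n+k-1}{k-1}\le\binom{m+k-1}{k-1}$. Summing the $e$-factors gives $\sum_r|e_{k-r}|\le2^k$, so $|a_i|\lesssim2^k\binom{m+k-1}{k-1}$. Summing over the $k$ coefficients gives the shape $k2^{k-1}\binom{m+k-1}{k-1}$.

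\textbf{Main obstacle.} The step I expect to be most delicate is the exact bookkeeping in the division formula needed to obtain the sharp constant $2^{k-1}$ rather than $2^k$. If this does not close directly, I would first try pairing the top term $e_0=1$ separately from the others, or run an induction on $k$ that peels off one node at a time. The latter replaces $s_n$ by $s_{n+1}-z_ks_n$ and tracks how the binomial factor and the power of $2$ grow at each step.

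\textbf{The case $m=0$.} Here $N=k$, and the remainder is simply $A(w)=w^k-p(w)=-\sum_{i<k}(-1)^{k-i}e_{k-i}w^i$. Hence
$$
        \sum_i|a_i|\le\sum_{i=1}^k\binom ki=2^k-1,
$$
which is the improved prefactor claimed for $m=0$.
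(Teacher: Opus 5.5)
\textbf{Gap in the coefficient formula.} Your reduction is correct. It is in fact the paper's argument in reversed variables. Let $Q(w)=\sum_{j=0}^{m}h_{m-j}w^j$ be the quotient of $w^{m+k}$ by $p$, and put $a(z)=z^kp(1/z)$ and $H_m(z)=z^mQ(1/z)$. Then $z^{m+k}A(1/z)=1-a(z)H_m(z)=\varphi(z)$, so your remainder coefficients are exactly the paper's $\varphi_n$. Your $m=0$ computation is also fine.

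The problem is the formula for the $a_i$, which carries all the quantitative content, and as written it is false. For $k=1$, the remainder of $w^{m+1}$ modulo $w-w_1$ is $w_1^{m+1}=h_{m+1}$, whereas your formula gives $\pm e_0h_{m}=\pm w_1^m$. There is a correct identity of the shape you wrote, a sum over $r>i$, obtained from $p(w)\sum_{n\ge0}h_nw^{-k-n}=1$. Writing $p_r=(-1)^{k-r}e_{k-r}$, it reads $A_i=\sum_{r=i+1}^{k}p_r\,h_{m+r-i}$. Its $h$-indices lie in $[m+1,m+k]$, so bounding them by $\binom{n+k-1}{k-1}$ gives factors as large as $\binom{m+2k-1}{k-1}$. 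So your claim that every $h$-index is at most $m$ is not supported. Even granting it, your count yields $k2^k\binom{m+k-1}{k-1}$, not the stated $k2^{k-1}\binom{m+k-1}{k-1}$.

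\textbf{The fix.} Read the coefficients off $A=w^{m+k}-pQ$ directly. For $0\le i\le k-1$,
\[
A_i=-\sum_{r=0}^{i}p_r\,h_{m-i+r},\qquad h_n:=0\ \text{for } n<0 .
\]
Now every $h$-index is at most $m$. Moreover $e_0=p_k$ never occurs, because $r\le i\le k-1$. Each $e_t$ with $1\le t\le k$ occurs as $p_{k-t}$ only in the $A_i$ with $k-t\le i\le k-1$, that is, in exactly $t$ coefficients. Hence
\[
\sum_{i=0}^{k-1}|A_i|\le\binom{m+k-1}{k-1}\sum_{t=1}^{k}t\binom kt=k2^{k-1}\binom{m+k-1}{k-1}.
\]
This is precisely the paper's bookkeeping: it groups by the index $i$ of $a_i$ and notes that $a_i$ feeds at most $i$ of the $\varphi_n$. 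So the sharp constant $2^{k-1}$ comes out directly, and you need neither a separate treatment of the top term nor an induction on $k$.
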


\begin{remark}
Lemma~\ref{lem:turan-finite} is classical: it is a weak form of Tur\'an's first main theorem, whose sharp
version, due independently to Makai and to de Bruijn, has the smaller constant
$\sum_{j=0}^{k-1}\binom{m+j}{j}2^j$ in place of $k2^{k-1}\binom{m+k-1}{k-1}$ (and agrees with the constant
$2^k-1$ at $m=0$); see \cite{Turan1984} or \cite[Chapter~5]{Montgomery1994}.  We include the short standard
proof to keep the note self-contained and the constants explicit.
\end{remark}
\begin{proof}
Let
$$
        a(z):=\prod_{j=1}^k\Bigl(1-\frac z{z_j}\Bigr)=\sum_{i=0}^ka_iz^i .
$$
Expanding the product, $a_i=(-1)^ie_i(z_1^{-1},\ldots,z_k^{-1})$,
where
\begin{align*}
    e_i(x_1,\dots,x_k)=\sum_{S\in \binom{[k]}{i}}\prod_{j\in S}x_j
\end{align*}
is the $i$-th elementary symmetric polynomial. Since $|z_j|\geq 1$, we have for $0\leq i\leq k$,
\begin{equation}\label{eq:a-coeff}
        |a_i|\leq \sum_{S\in \binom{[k]}{i}}\prod_{j\in S}\left|z_j^{-1}\right|\leq \binom ki\,.
\end{equation}

For any $z\in \C$, $|z|<1\leq\min_j|z_j|$, $a(z)^{-1}$ has the following series expansion:
$$
        \frac1{a(z)}=\prod_{j=1}^k\frac1{1-z/z_j}
        =\prod_{j=1}^k\sum_{n=0}^\infty\Bigl(\frac z{z_j}\Bigr)^n
        =\sum_{\ell=0}^\infty h_\ell\,z^\ell,
        \qquad
        h_\ell=h_\ell(z_1^{-1},\ldots,z_k^{-1})\,,
$$
where 
\begin{align*}
h_\ell(x_1,\dots,x_k)=\sum_{\substack{n_1+\cdots+n_k=\ell\\n_i\geq 0}} x_1^{n_1}\cdots x_k^{n_k}
\end{align*}
is the complete homogeneous symmetric polynomial of degree $\ell$.  The number
of monomials in $h_\ell$ is
$\binom{\ell+k-1}{k-1}$, and each $|z_j^{-1}|\leq 1$.  Hence, for any $\ell\geq 0$,
\begin{equation}\label{eq:h-coeff}
        |h_\ell|\leq\binom{\ell+k-1}{k-1}\,.
\end{equation}

Let $H_m(z):=\sum_{\ell=0}^mh_\ell z^\ell$ be the $m$-th partial sum of the series expansion of $a(z)^{-1}$, and
$$
        \varphi(z):=1-a(z)H_m(z)=\sum_{n}\varphi_n z^n \,.
$$
Since $H_m\equiv1/a\pmod{z^{m+1}}$ (as formal power series), we get $a\,H_m\equiv1\pmod{z^{m+1}}$, i.e.
\emph{all coefficients of $\varphi$ of degree $\leq m$ vanish}.  Also
$\deg\varphi\leq\deg a+\deg H_m\leq k+m$, and $\varphi(z_j)=1-a(z_j)H_m(z_j)=1$ because $a(z_j)=0$.
Therefore,
\begin{equation}\label{eq:s0-expansion}
        s_0=\sum_{j=1}^kb_j=\sum_{j=1}^kb_j\varphi(z_j)
        =\sum_{j=1}^kb_j\sum_{n=m+1}^{m+k}\varphi_n z_j^n
        =\sum_{n=m+1}^{m+k}\varphi_n s_n \,.
\end{equation}

It remains to bound the coefficients of $\varphi$.  For $n\geq m+1$,
$$
        \varphi_n=-\sum_{\substack{i+\ell=n\\ 1\leq i\leq k,\ 0\leq\ell\leq m}}a_ih_\ell
$$
Summing over the relevant range of $n$ and grouping
by $i$: for fixed $i\in\{1,\ldots,k\}$, the constraint $0\leq\ell=n-i\leq m$ together with
$m+1\leq n\leq m+k$ forces $n\in\{m+1,\ldots,m+i\}$, i.e.\ at most $i$ values of $n$.  Using
\eqref{eq:a-coeff}, \eqref{eq:h-coeff}, and the monotonicity
$\binom{\ell+k-1}{k-1}\leq\binom{m+k-1}{k-1}$ for $\ell\leq m$, we obtain that
$$
        \sum_{n=m+1}^{m+k}|\varphi_n|
        \leq\sum_{i=1}^k i\binom ki\cdot\binom{m+k-1}{k-1}
        =k2^{k-1}\binom{m+k-1}{k-1}\,.
$$
Combined with \eqref{eq:s0-expansion} and the triangle inequality, this proves \eqref{eq:turan}.

If $m=0$, then $H_0=h_0=1$ and $\varphi=1-a$ has coefficients $-a_1,\ldots,-a_k$, so
$\sum_{\nu=1}^k|\varphi_\nu|\leq\sum_{i=1}^k\binom ki=2^k-1$.
\end{proof}

\begin{proposition}[Outer extrapolation]\label{prop:outer}
For every $g\in\Tcal_k$ and every $x\geq2$,
\begin{equation}\label{eq:outer}
        |g(x)|\leq k^2\exp\left(4k\,\arcosh x\right)\,\|g\|_{L^2[-1,1]} \,.
\end{equation}
\end{proposition}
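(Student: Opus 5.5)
The plan is to reduce to Tur\'an's first main theorem (Lemma~\ref{lem:turan-finite}) on an arithmetic progression of sample points that starts at $x$ and walks back into $[-1,1]$. Then I pass from $L^\infty[-1,1]$ to $L^2[-1,1]$ with the endpoint Nikolskii inequality \eqref{eq:nik-11}.

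\textbf{Setup.} Write $g(t)=\sum_j v_je^{i\lambda_jt}$. The case $k=1$ is trivial, since then $|g|$ is constant and $|g(x)|=\|g\|_{L^2[-1,1]}/\sqrt2$. For $k\ge2$, fix the step $h=2/k$ and set
$$
s_n:=g(x-nh)=\sum_{j=1}^k b_jz_j^n,\qquad b_j=v_je^{i\lambda_jx},\quad z_j=e^{-i\lambda_jh}.
$$
Here $|z_j|=1\ge1$ and $s_0=g(x)$. Choose the integer $m\ge0$ by $m+1=\lceil k(x-1)/2\rceil$. Then $x-(m+1)h\le1$. Moreover the window $\{x-nh: m+1\le n\le m+k\}$ has length $(k-1)h<2$ and its top point exceeds $1-h$. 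Hence its bottom point exceeds $1-kh=-1$, so all $k$ sample points lie in $[-1,1]$.

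\textbf{Chain of bounds.} Lemma~\ref{lem:turan-finite} and then \eqref{eq:nik-11} give
$$
|g(x)|\le k2^{k-1}\binom{m+k-1}{k-1}\|g\|_{L^\infty[-1,1]}\le \frac{\pi k^2}{2\sqrt2}\,2^{k-1}\binom{m+k-1}{k-1}\|g\|_{L^2[-1,1]}.
$$
This is where the $k^2$ prefactor arises. The constant $\pi/(2\sqrt2)\approx1.11$ slightly exceeds $1$, so I will absorb it, together with the slack below, into the exponential.

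\textbf{Bounding the binomial.} The key point is to avoid the crude bound $2^{m+k}$, which would be exponential in $kx$ rather than in $k\log x$. Instead I use $\binom{N}{k-1}\le (eN/(k-1))^{k-1}$ with $N=m+k-1\le k(x-1)/2+k-1$. Since $k/(k-1)\le2$, this gives $N/(k-1)\le x$ (up to the ceiling), so $\binom{m+k-1}{k-1}\le (ex)^{k-1}$. I would double-check this rounding, but any bound of the form $(Cx)^{k-1}$ suffices. The total is then at most
$$
1.12\,k^2(2ex)^{k-1}\|g\|_{L^2[-1,1]}.
$$

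\textbf{Comparison with the target.} It remains to check that $1.12\,(2ex)^{k-1}\le e^{4k\arcosh x}=(x+\sqrt{x^2-1})^{4k}$ for $x\ge2$. Since $x+\sqrt{x^2-1}\ge x\ge2$, we have
$$
(x+\sqrt{x^2-1})^{4}\ge x^4\ge 8x\cdot x^3/8\ge 2e\,x\quad\text{when }x\ge2,
$$
because $x^3/8\ge1$ and $8>2e$. There is also an extra factor $(x+\sqrt{x^2-1})^4\ge 16>1.12$ to absorb the Nikolskii constant.

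The main obstacle is this counting step: tuning $h$ and $m$ so that the grid both fits in $[-1,1]$ and yields a binomial growing only like $x^{O(k)}$. The remaining constant bookkeeping is routine, with the generous exponent $4k$ leaving room.
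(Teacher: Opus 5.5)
Your proof is correct and is essentially the paper's argument: you apply Tur\'an's first main theorem on the progression $t_n=x-nh$ walked back into $[-1,1]$, bound the coefficient by $\binom{N}{k-1}\le(eN/(k-1))^{k-1}$, and pass to $L^2$ with the Nikolskii inequality \eqref{eq:nik-11}. The differences are cosmetic. You take $h=2/k$ instead of the paper's $h=2/(k+1)$; your rounding does check out, since $m<k(x-1)/2$ and $k\le 2(k-1)$ give $N/(k-1)<x$. This yields the slightly cleaner $\binom{m+k-1}{k-1}\le(ex)^{k-1}$ and lets you compare directly with $(x+\sqrt{x^2-1})^{4k}$, whereas the paper passes through the intermediate bound $k^2(3e(x+1))^k$, valid for all $x\ge1$.
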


\begin{proof}
We prove an intermediate bound: for all $x\geq1$ and all $k\geq1$, 
\begin{equation}\label{eq:outer-intermediate}
        |g(x)|\leq k^2\bigl(3e(x+1)\bigr)^k\,\|g\|_{L^2[-1,1]} \,.
\end{equation}
Note that for $x\geq2$ we have $\log\bigl(3e(x+1)\bigr)\leq4\arcosh x$: indeed, $x+1\leq2x$ gives
$\log\bigl(3e(x+1)\bigr)\leq\log(6e)+\log x$; moreover $\log x\leq\log\bigl(x+\sqrt{x^2-1}\bigr)=\arcosh x$,
and $\arcosh x\geq1$ because $\cosh1<2\leq x$, so $\log(6e)<2.8<3\leq3\arcosh x$.  Substituting into
\eqref{eq:outer-intermediate} gives \eqref{eq:outer}.

For $k=1$, $g(t)=v_1e^{i\lambda_1t}$, so $|g(x)|=|v_1|$ while
$\|g\|_{L^2[-1,1]}=\sqrt2\,|v_1|$; thus $|g(x)|\leq\|g\|_{L^2[-1,1]}$ and
\eqref{eq:outer-intermediate} holds trivially.  

Assume $k\geq2$. To apply Tur\'an's first main theorem (Lemma~\ref{lem:turan-finite}), we construct the extrapolation nodes as follows:
$$
        h=\frac2{k+1},\qquad
        m=\max\left(0,\left\lceil\frac{x-1}h\right\rceil-1\right),
        \qquad
        t_n:=x-n h\quad(n\geq0).
$$
We claim that the $k$ consecutive nodes $t_{m+1},\ldots,t_{m+k}$ lie in $[-1,1]$, and
\begin{equation}\label{eq:m-bound}
        m\leq (k+1)(x-1)/2 .
\end{equation}
We prove the claim by considering $m\geq 1$ and $m=0$ separately:
\begin{itemize}
\item If $m\geq1$, then $m+1=\lceil(x-1)/h\rceil$, so 
\begin{align}\label{eq:m_range}
    (x-1)/h\leq m+1<(x-1)/h+1\,.
\end{align}
The first inequality in~\eqref{eq:m_range} gives
\begin{align*}
    t_{m+1}=x-(m+1)h\leq x-(x-1)=1\,.
\end{align*}
And the second inequality in~\eqref{eq:m_range} gives $t_{m+1}>1-h$, which implies that
$$
        t_{m+k}=t_{m+1}-(k-1)h>1-kh=1-\frac{2k}{k+1}=\frac{1-k}{k+1}\geq-1 \,.
$$
We also have
\begin{align*}
    m\leq\lceil(x-1)/h\rceil-1<(x-1)/h=(k+1)(x-1)/2\,.
\end{align*}
\item If $m=0$, then $\lceil(x-1)/h\rceil\leq1$, i.e.\ $x\leq1+h$, so $t_1=x-h\leq1$. And since $x\geq 1$, we have $t_k=x-kh\geq 1-kh \geq-1$ as before. And \eqref{eq:m-bound} trivially holds for $m=0$.
\end{itemize}
Let $g(t)=\sum_{j=1}^kv_je^{i\lambda_jt}$ and set
$$
        z_j:=e^{-i\lambda_jh},\qquad b_j:=v_je^{i\lambda_jx} .
$$
Then $|z_j|=1$ and
$$
        s_n=\sum_{j=1}^kb_jz_j^n=\sum_{j=1}^kv_je^{i\lambda_j(x-n h)}=g(t_n)\,.
$$
In particular, $s_0=g(x)$, and by the claim, for $m+1\leq n \leq m+k$, $|s_n|\leq\|g\|_{L^\infty[-1,1]}$.
Now, we can apply Lemma~\ref{lem:turan-finite} and get that
\begin{equation}\label{eq:outer-raw}
        |g(x)|\leq k2^{k-1}\binom{m+k-1}{k-1}\,\|g\|_{L^\infty[-1,1]}\,.
\end{equation}
We bound the binomial coefficient as follows:
\begin{equation}\label{eq:binom-bound}
\begin{aligned}
    \binom{m+k-1}{k-1} \leq &~ \left(e\frac{m+k-1}{k-1}\right)^{k-1}
    \leq  \left(e\frac{\frac{(k+1)(x-1)}{2}+k-1}{k-1}\right)^{k-1}\\
    \leq &~ \left(e\frac{\frac{3(k-1)(x-1)}{2}+k-1}{k-1}\right)^{k-1} \leq \left(\frac{3e(x+1)}{2}\right)^{k-1}\,,
\end{aligned}
\end{equation}
where the second step follows from~\eqref{eq:m-bound}, and the third step follows from $k\geq 2$. 
Combining \eqref{eq:outer-raw}, \eqref{eq:binom-bound} gives
\begin{align*}
    |g(x)|\leq k2^{k-1} \left(\frac{3e(x+1)}{2}\right)^{k-1}\, \|g\|_{L^\infty[-1,1]} = k (3e(x+1))^{k-1} \, \|g\|_{L^\infty[-1,1]}\,.
\end{align*}
Then, we apply the endpoint Nikolskii inequality (Lemma~\ref{lem:nikolskii}):
$$
        |g(x)|\leq k (3e(x+1))^{k-1}\cdot\frac{\pi k}{2\sqrt2}\,\|g\|_{L^2[-1,1]}
        =k^2\,\frac{\pi}{2\sqrt2}\,\bigl(3e(x+1)\bigr)^{k-1}\,\|g\|_{L^2[-1,1]}\,.
$$
Since $\frac\pi{2\sqrt2}<1.2<3e(x+1)$, this is at most $k^2(3e(x+1))^k\|g\|_{L^2[-1,1]}$, which proves the intermediate bound
\eqref{eq:outer-intermediate}.
\end{proof}

\section{Proof of the main theorem}\label{sec:main-proof}
\begin{proof}[Proof of Theorem~\ref{thm:main}]
For $k=1$, $g(t)=v_1e^{i\lambda_1t}$ and $|g(x)|=|v_1|=2^{-1/2}\|g\|_{L^2[-1,1]}$, so both estimates hold
trivially.  Assume $k\geq2$ and let $\rho=\arcosh x\geq0$.

If $1\leq x\leq2$, write $x=1+\delta$ with $0\leq\delta\leq1$. Then,  Proposition~\ref{prop:edge} gives \eqref{eq:main-edge}:
\begin{align*}
        |g(1+\delta)|\leq\sqrt{\frac{9A_{\mathrm{fr}}}2}\;k\,
        e^{3\sqrt2\,k\sqrt\delta}\,\|g\|_{L^2[-1,1]}
        \leq 192\,k\,e^{3\sqrt2\,k\rho}\,\|g\|_{L^2[-1,1]}\,,
\end{align*}
where the second step uses $A_{\mathrm{fr}}\leq8191$ (so that $\sqrt{9A_{\mathrm{fr}}/2}<192$) together
with the bound $\sqrt\delta\leq\arcosh(1+\delta)=\rho$ from~\eqref{eq:arcosh-compare}.  In particular, the
first inequality of this display proves \eqref{eq:main-edge}.

If $x\geq2$, Proposition~\ref{prop:outer} gives
$$
        |g(x)|\leq k^2e^{4k\rho}\,\|g\|_{L^2[-1,1]}\,.
$$
Thus, for $x\geq 1$,
\begin{align*}
    |g(x)|\leq \max\left\{192\,k\,e^{3\sqrt2\,k\rho},k^2e^{4k\rho}\right\}\,\|g\|_{L^2[-1,1]}
    \leq 192\,k^2e^{3\sqrt{2}k\rho}\,\|g\|_{L^2[-1,1]}\,,
\end{align*}
using $k\leq k^2$ and $4\leq3\sqrt2$; this proves \eqref{eq:main-global}.
\end{proof}

\section{Proof of the lower bound}\label{sec:sharpness}

\begin{proposition}[Chebyshev lower bound; cf.\ {\cite[Remark~8.6]{ChenPrice2019}}]\label{prop:sharpness}
For every $k\geq1$ and every $x\geq1$,
$$
        \sup_{0\neq g\in\Tcal_k}
        \frac{|g(x)|}{\|g\|_{L^2[-1,1]}}
        \geq
        \frac{1}{2\sqrt2}
        \exp\!\left((k-1)\arcosh x\right).
$$
Consequently the exponent $\Theta(k\arcosh x)$ in Theorem~\ref{thm:main}
is sharp up to absolute constants and polynomial factors in $k$.
\end{proposition}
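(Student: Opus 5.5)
The plan is to exhibit, for each $\varepsilon>0$, an explicit $g_\varepsilon\in\Tcal_k$ that converges to the Chebyshev polynomial $T_{k-1}$ locally uniformly on $\R$, and then pass to the limit in the ratio. Set
$$
        u_\varepsilon(t):=\frac{e^{i\varepsilon t}-1}{i\varepsilon},
        \qquad
        g_\varepsilon(t):=T_{k-1}\bigl(u_\varepsilon(t)\bigr).
$$
Since $T_{k-1}$ is a polynomial of degree $k-1$, expanding $u_\varepsilon^n$ for $n\le k-1$ by the binomial theorem writes $g_\varepsilon$ as a linear combination of $e^{i j\varepsilon t}$ with $0\le j\le k-1$. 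Thus $g_\varepsilon$ has at most $k$ distinct real frequencies $0,\varepsilon,\dots,(k-1)\varepsilon$, so $g_\varepsilon\in\Tcal_k$.

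Next, $u_\varepsilon(t)\to t$ uniformly on compact subsets of $\R$ (indeed $|u_\varepsilon(t)-t|\le \varepsilon t^2/2$). Since $T_{k-1}$ is continuous, $g_\varepsilon\to T_{k-1}$ uniformly on $[-1,x]$. Hence $\|g_\varepsilon\|_{L^2[-1,1]}\to\|T_{k-1}\|_{L^2[-1,1]}\le\sqrt2$, using $|T_{k-1}|\le1$ on $[-1,1]$. Similarly $|g_\varepsilon(x)|\to T_{k-1}(x)=\cosh((k-1)\arcosh x)\ge\tfrac12 e^{(k-1)\arcosh x}$. For $\varepsilon$ small the denominator is nonzero, since $T_{k-1}\not\equiv0$. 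The supremum is therefore at least the limiting ratio
$$
\frac{\tfrac12 e^{(k-1)\arcosh x}}{\sqrt2}=\frac{1}{2\sqrt2}\,e^{(k-1)\arcosh x}.
$$
Because the limit inequality may be approached rather than attained, I will phrase this via $\limsup_{\varepsilon\to0}$ of the ratio, which is at least this value. No strictness is needed, since the supremum dominates each ratio.

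For the case $k=1$, $T_0\equiv1$ and the bound reads $1/(2\sqrt2)\le 1/\sqrt2$, which is consistent with the above.

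The sharpness consequence follows by comparing with Theorem~\ref{thm:main}. The upper exponent is $3\sqrt2\,k\arcosh x$, and the lower exponent is $(k-1)\arcosh x\ge\tfrac k2\arcosh x$ for $k\ge2$. So the exponents agree up to absolute constants, and the prefactors differ only polynomially in $k$.

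The only mildly delicate step is justifying membership in $\Tcal_k$, namely that the binomial expansion produces at most $k$ frequencies; it does, because all frequencies are multiples $j\varepsilon$ with $0\le j\le k-1$. The limit argument itself is routine.
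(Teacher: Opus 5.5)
Your proposal is correct and follows essentially the same route as the paper: the confluent substitution $T_{k-1}\bigl((e^{i\varepsilon t}-1)/(i\varepsilon)\bigr)\in\Tcal_k$, uniform convergence to $T_{k-1}$ on compact sets via the estimate $\varepsilon t^2/2$, and then the bounds $\|T_{k-1}\|_{L^2[-1,1]}\le\sqrt2$ and $T_{k-1}(x)=\cosh((k-1)\arcosh x)\ge\frac12 e^{(k-1)\arcosh x}$. Your extra remarks are sound, namely the $k=1$ check, the nonvanishing of the denominator for small $\varepsilon$, and the comparison $(k-1)\ge k/2$ used for the sharpness claim.
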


\begin{proof}
Let $n=k-1$ and $T_{n}$ be the Chebyshev polynomial of the first kind of degree $n$.  For $0<\varepsilon\leq1$, define
$$
        w_\varepsilon(t):=
        \frac{e^{i\varepsilon t}-1}{i\varepsilon},
        \qquad
        P_\varepsilon(t):=
        T_n(w_\varepsilon(t))\,.
$$
Since $T_n$ is a polynomial of degree $n$, and since
$$
        w_\varepsilon(t)^m
        =
        (i\varepsilon)^{-m}
        \sum_{\ell=0}^m
        \binom m\ell(-1)^{m-\ell}e^{i\ell\varepsilon t},
        \qquad 0\leq m\leq n\,,
$$
we have
$$
        P_\varepsilon\in
        \operatorname{span}\{e^{i\ell\varepsilon t}:0\leq \ell\leq n\}
        \subset \Tcal_k \,.
$$

We next show that $P_\varepsilon\to T_n$ uniformly on $[-1,x]$.
Put $R=x$.  For real $|t|\leq R$, the elementary Taylor remainder
estimate
$$
        |e^{iu}-1-iu|\leq \frac{u^2}{2},
        \qquad u\in\mathbb R,
$$
gives
$$
        |w_\varepsilon(t)-t|
        =
        \frac{|e^{i\varepsilon t}-1-i\varepsilon t|}{\varepsilon}
        \leq
        \frac{\varepsilon t^2}{2}
        \leq
        \frac{\varepsilon R^2}{2}.
$$
Thus, as $\varepsilon\rightarrow 0$, $w_\varepsilon(t)\to t$ uniformly for $t\in[-R,R]$.
Since $T_n$ is a fixed polynomial, it follows that
$$
        P_\varepsilon(t)=T_n(w_\varepsilon(t))
        \longrightarrow T_n(t)
$$
uniformly for $t\in[-R,R]$, hence in particular uniformly on
$[-1,1]$ and pointwise at $x$.  Therefore
$$
        P_\varepsilon(x)\to T_n(x),
        \qquad
        \|P_\varepsilon\|_{L^2[-1,1]}
        \to
        \|T_n\|_{L^2[-1,1]} .
$$
Since $P_\varepsilon\in\Tcal_k$, we get
$$
        \sup_{0\neq g\in\Tcal_k}
        \frac{|g(x)|}{\|g\|_{L^2[-1,1]}}
        \geq
        \lim_{\varepsilon\to0}
        \frac{|P_\varepsilon(x)|}
             {\|P_\varepsilon\|_{L^2[-1,1]}}
        =
        \frac{|T_n(x)|}{\|T_n\|_{L^2[-1,1]}} .
$$

Finally, $|T_n(t)|\leq1$ for $-1\leq t\leq1$, so
$$
        \|T_n\|_{L^2[-1,1]}\leq \sqrt2 .
$$
For $x\geq1$,
$$
        T_n(x)
        =
        \cosh\!\left(n\arcosh x\right)
        \geq
        \frac12
        \exp\!\left(n\arcosh x\right).
$$
Combining the last two estimates gives
$$
        \frac{|T_n(x)|}{\|T_n\|_{L^2[-1,1]}}
        \geq
        \frac{1}{2\sqrt2}
        \exp\!\left(n\arcosh x\right),
$$
and since $n=k-1$, the proposition follows.
\end{proof}

\section{Algorithmic applications}
\label{sec:applications}

In this section we give two algorithmic applications of Theorem~\ref{thm:main}.  In
Section~\ref{subsec:chen-price-app}, we apply it to the Chen--Price clustered-frequency algorithm, replacing their
power-law growth surrogate by a Chebyshev-matched filter and improving the attainable resolution.  In
Section~\ref{subsec:active-learning-app}, we bound both exterior Fourier-feature leverage scores and the associated worst-case
$L^2$ transfer constant, obtaining sharp guarantees for extrapolative active regression.

\subsection{Improved clustered-frequency estimation}
\label{subsec:chen-price-app}

Chen and Price \cite{ChenPrice2019} study the following problem.  In this subsection we use their
cycles-per-unit-time frequency convention, so that our angular frequency is $\lambda=2\pi f$.  An unknown signal is
observed on a window $[-T,T]$ through noisy samples
$$
        y(t)=g(t)+\zeta(t),
$$
where $\operatorname{supp}(\widehat g)\subseteq[f_0-\Delta,f_0+\Delta]$, the center
$f_0\in[-F,F]$ is unknown, and
$$
        \mathbb E_{t\sim[-T,T]}\left[|\zeta(t)|^2\right]
        \le \epsilon\,\mathbb E_{t\sim[-T,T]}\left[|g(t)|^2\right]
$$
for a sufficiently small absolute constant $\epsilon$.  The goal is to estimate $f_0$.

Their generic theorem is stated for an admissible signal class $\mathcal F$.  It uses the peak-to-average parameter
$$
        R(\mathcal F):=\sup_{0\ne g\in\mathcal F}
        \frac{\sup_{t\in[-T,T]}|g(t)|^2}
             {\mathbb E_{t\sim[-T,T]}[|g(t)|^2]}
$$
and an exponent $S$ for which every $g\in\mathcal F$ satisfies the power-law majorant
$$
        |g(t)|^2
        \le \operatorname{poly}(R)\,\left|\frac tT\right|^S
        \mathbb E_{u\sim[-T,T]}[|g(u)|^2]
        ,
        \qquad |t|\ge T\,.
$$
When $\mathcal{F}=\mathcal{T}_k$,  Chen and Price prove $R=O(k^3\log^2k)$ and $S=O(k^2\log k)$.  Let
$$
        \Delta_{\mathrm{CP}}:=\Delta+\widetilde O(k^3/T)
$$
and $\delta_{\mathrm{fail}}\in(0,1)$ denote the failure probability. Their specialized
nonuniform-sampling theorem uses
$$
        O\!\left(k\log^2k\cdot
        \log\frac{F}{\Delta_{\mathrm{CP}}\delta_{\mathrm{fail}}}\right)
$$
samples and returns $\widehat f_0$ satisfying
$|\widehat f_0-f_0|=O(\Delta_{\mathrm{CP}})$ with probability at least
$1-\delta_{\mathrm{fail}}$.

For simplicity, define
$$
        \|g\|_{2,T}^2:=\mathbb E_{t\sim[-T,T]}\left[|g(t)|^2\right]
        =\frac1{2T}\int_{-T}^{T}|g(t)|^2\,dt\,.
$$
Rescaling Theorem~\ref{thm:main} gives the intrinsic outside-growth envelope
\begin{equation}
\label{eq:scaled-cheb-envelope}
        |g(Tx)|
        \le C_0k^2\exp\!\left(3\sqrt2\,k\arcosh x\right)\|g\|_{2,T},
        \qquad \forall x\ge1\,,
\end{equation}
and, in the endpoint regime,
\begin{equation}
\label{eq:scaled-cheb-edge}
        |g(T(1+\delta))|
        \le C_0k\exp\!\left(3\sqrt2\,k\sqrt\delta\right)\|g\|_{2,T},
        \qquad \forall 0\le\delta\le1\,,
\end{equation}
where one may take $C_0=192\sqrt2$.  The next theorem constructs the filter directly against this Chebyshev
envelope, without first replacing it by a monomial majorant.  The construction is a multiscale product of powers of
$\sinc$; single-scale versions underlie the filters of \cite{CKPS2016,ChenPrice2019}, while compactly bandlimited
kernels with $e^{-c\sqrt{|t|}}$-type decay go back to Ingham \cite{Ingham1934}.  We use the Fourier transform
convention
\begin{align*}
    \widehat f(\xi)=\int_{\mathbb R}f(t)e^{-i\xi t}\,dt\,.
\end{align*}

\begin{theorem}[Chebyshev-scale truncation window]
\label{thm:cheb-matched-filter}
For every $0<\eta<1$, there is a constant $C_\eta$ such that the following holds.  For every integer
$k\ge1$ and every $T>0$, there exists a real even function $H_{k,T}$ satisfying
$$
        0\le H_{k,T}(t)\le1,
        \qquad
        \operatorname{supp}\widehat {H_{k,T}}\subseteq[-C_\eta k^2/T,\,C_\eta k^2/T]\,,
$$
and, for every $g\in\Tcal_k$,
\begin{equation}
\label{eq:cheb-filter-energy}
        \int_{-T}^{T}|1-H_{k,T}(t)|^2|g(t)|^2\,dt
        +\int_{\R\setminus[-T,T]}|H_{k,T}(t)|^2|g(t)|^2\,dt
        \le\eta\int_{-T}^{T}|g(t)|^2\,dt\, .
\end{equation}
\end{theorem}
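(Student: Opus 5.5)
The plan is to build $H_{k,T}$ as a smoothed indicator,
$$
H_{k,T}:=\mathbf 1_{[-T(1-\gamma),\,T(1-\gamma)]}*K,
$$
where $K\ge0$ is even, $\int K=1$, and $\operatorname{supp}\widehat K\subseteq[-B,B]$ with $B=C_\eta k^2/T$. Then $0\le H_{k,T}\le1$, $H_{k,T}$ is real and even, and $\operatorname{supp}\widehat{H_{k,T}}\subseteq\operatorname{supp}\widehat K$. I will take $\gamma=c\eta/k^2$ for a small absolute $c$. For the kernel I will use the multiscale product of $\sinc$ powers mentioned before the statement, in Ingham's style,
$$
K(t)\propto\prod_{j\ge1}\sinc(b_jt)^{2n_j},\qquad \sum_j 2n_jb_j\le B,
$$
with geometrically decreasing scales and slowly growing powers. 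The target is a tail estimate
$$
\int_{|s|>u}K(s)\,ds\le C\exp\bigl(-c_1\sqrt{Bu}\bigr)\qquad\text{for all } u>0 .
$$
Since $\sqrt{Bu}=k\sqrt{C_\eta u/T}$, this decay lives on exactly the Chebyshev scale $k\sqrt\delta$ of Theorem~\ref{thm:main}. Establishing this tail bound from the product structure, that is, choosing $b_j,n_j$ and summing the per-scale contributions, is the step I expect to be the main technical obstacle. I would first try the classical Ingham choice $b_j\propto B/j^2$ with $n_j\propto j$, and estimate the tail of $K$ by keeping only the factor whose scale matches $u$.

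For the interior term, split $[-T,T]$ into the core $|t|\le T(1-2\gamma)$ and the two edge strips. On the core, $1-H_{k,T}(t)\le\int_{|s|>\gamma T}K$. Since $\gamma TB=cC_\eta\eta$, this is at most $C\exp(-c_1\sqrt{cC_\eta\eta})$, which is $\le\eta/4$ once $C_\eta$ is large. On the strips, of total length $4\gamma T$, I bound $|1-H_{k,T}|\le1$ and use the rescaled Nikolskii inequality (Lemma~\ref{lem:nikolskii}) in the form $\sup_{[-T,T]}|g|^2\le \frac{\pi^2k^2}{8T}\int_{-T}^T|g|^2$. This gives a strip contribution of at most $4\gamma T\cdot\frac{\pi^2k^2}{8T}\int|g|^2\le\frac\eta4\int_{-T}^T|g|^2$ by the choice of $c$.

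For the exterior term at $|t|=T(1+\delta)$, the distance to the plateau is at least $T\delta$, so
$$
H_{k,T}(t)\le\int_{|s|>T\delta}K\le Ce^{-c_1k\sqrt{C_\eta\delta}} .
$$
I combine this with the rescaled envelope \eqref{eq:scaled-cheb-envelope}–\eqref{eq:scaled-cheb-edge}, namely $|g(T(1+\delta))|^2\le C_0^2k^4e^{6\sqrt2k\arcosh(1+\delta)}\|g\|_{2,T}^2$. In the edge range $\delta\le1$ I use $\arcosh(1+\delta)\le2\sqrt\delta$ from Lemma~\ref{lem:arcosh}. Choosing $C_\eta$ so that $2c_1\sqrt{C_\eta}\ge 24\sqrt2+2$, the integrand is at most $k^4e^{-2k\sqrt\delta}$ times constants. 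Integrating in $T\,d\delta$ gives $\int_0^\infty e^{-2k\sqrt\delta}\,d\delta=O(k^{-2})$, so this part is at most $O(k^2)\,\|g\|_{2,T}^2\,T$ times the prefactor. To absorb the remaining $k^2$ and reach $\eta$, I will use the extra room in $C_\eta$. Concretely, for $k\sqrt\delta\ge1$ the spare exponential factor $e^{-k\sqrt\delta}$ kills $k^{O(1)}$ once $C_\eta$ is large; for $\delta\le k^{-2}$ I keep the near-edge kernel mass bound $e^{-c_1\sqrt{C_\eta\gamma'}}$ as in the interior estimate, on a strip of length $O(T/k^2)$.

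In the far range $x=1+\delta\ge2$, I use $\arcosh x\le\log(2x)$, so the growth is only polynomial, $(2x)^{6\sqrt2k}$. The kernel decay $e^{-c_1k\sqrt{C_\eta(x-1)}}$ dominates it uniformly, so $\int_2^\infty(\cdot)\,dx$ is exponentially small in $k\sqrt{C_\eta}$. Summing the three contributions gives \eqref{eq:cheb-filter-energy}.
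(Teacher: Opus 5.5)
\textbf{Gap in the near-edge exterior step.} The estimate for $0\le\delta\le1$ does not close as written. There you bound $|g(T(1+\delta))|^2$ by the global envelope, whose squared prefactor is $C_0^2k^4$. You then integrate against $H_{k,T}^2\le C^2e^{-2c_1k\sqrt{C_\eta\delta}}$ and substitute $y=k\sqrt\delta$. The best this gives is of order $k^2/a^2$ times $\int_{-T}^T|g|^2$, where $a:=2c_1\sqrt{C_\eta}-12\sqrt2$.

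You cannot absorb the leftover $k^2$ by enlarging $C_\eta$, because $C_\eta$ may depend on $\eta$ but not on $k$. At $k\sqrt\delta\asymp1$ the kernel factor $e^{-2c_1k\sqrt{C_\eta\delta}}$ is a $k$-independent constant. It beats $k^{O(1)}$ only once $k\sqrt\delta\gtrsim\log k$. Concretely, in your splitting:
\begin{itemize}
\item the range $k^{-2}\le\delta\le1$ still contributes about $2k^2\int_1^{k}ye^{-ay}\,dy\asymp k^2e^{-a}/a$;
\item the strip $\delta\le k^{-2}$ contributes about $k^{-2}\cdot k^4\cdot e^{-2c_1\sqrt{cC_\eta\eta}}$.
\end{itemize}
Neither is at most $\eta$ uniformly in $k$. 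Enlarging $\gamma$ to $\gtrsim\log^2k/k^2$ does not rescue this, since it would destroy your interior strip bound.

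The fix is to use the endpoint estimate \eqref{eq:scaled-cheb-edge} (Proposition~\ref{prop:edge}), which you cite but do not actually use. Its squared prefactor $C_0^2k^2$ exactly cancels the $k^{-2}$ produced by the $\delta$-integration:
\[
T\int_0^1 C^2e^{-2c_1k\sqrt{C_\eta\delta}}\,C_0^2k^2e^{6\sqrt2\,k\sqrt\delta}\,\|g\|_{2,T}^2\,d\delta
\le\frac{C^2C_0^2}{\bigl(2c_1\sqrt{C_\eta}-6\sqrt2\bigr)^2}\int_{-T}^{T}|g|^2 .
\]
With this bound no splitting is needed. This is exactly how the paper closes its near range: the contribution is $C_Ak^2/B=C_A/\beta$ with $B=\beta k^2$. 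The $k^4$ envelope is affordable only for $x\ge2$, where the kernel decay is exponential in $k$.

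\textbf{The kernel parameters.} A smaller point concerns the kernel, which you single out as the main obstacle. The first parameters you propose, $b_j\propto B/j^2$ with $n_j\propto j$, violate your own constraint. Indeed $\sum_j2n_jb_j\propto B\sum_jj^{-1}$ diverges, and truncating costs a factor $\log k$ in the bandwidth. What you need is two things at once: $\asymp\sqrt{Bu}$ of the factors are at most $1/2$ on $|s|\ge u$, while $\sum_jn_jb_j=O(B)$. Two choices achieve this:
\begin{itemize}
\item $b_j\propto B/j^2$ with $n_j=1$. This is an infinite product of $\sinc^2$ factors, and it does give your tail bound for all $u>0$.
\item The paper's geometric choice $\omega_j=B4^{-j}$, $m_j=2\lceil A2^j\rceil$, $0\le j\le\lfloor\log_4B\rfloor$. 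At $T=1$ this gives the root-type tail only for $0\le r\le1$. For $r\ge1$ it gives just a polynomial tail $e^{-cA\sqrt B}r^{-cA\sqrt B}$, which suffices against the polynomial growth there.
\end{itemize}
In both cases you also need a lower bound $\gtrsim1/B$ on the normalizing integral of the product (the paper's Step~2). This is what turns pointwise decay of the product into a tail bound for the probability density $K$.

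Apart from these two points, your architecture is the paper's. Shrinking the plateau by $\gamma=c\eta/k^2$ and paying Nikolskii on the edge strips is a valid substitute for the paper's integrated interior defect $\int_{-1}^1|1-H_{A,B}|^2\le C_A/B$.
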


\begin{proof}
It suffices to prove the case $T=1$: if $H_{k,1}$ has been constructed, set
$H_{k,T}(t):=H_{k,1}(t/T)$, so that
$\operatorname{supp}\widehat{H_{k,T}}=T^{-1}\operatorname{supp}\widehat{H_{k,1}}$, and apply the $T=1$
estimate to $u\mapsto g(Tu)\in\Tcal_k$; changing variables back yields \eqref{eq:cheb-filter-energy}.

We prove the $T=1$ case in the following seven steps.
\begin{itemize}
\item \emph{Step 1: a multiscale sinc-product kernel.}  Fix $A\ge1$ (an absolute constant chosen later).  For convenience, we will use $C_A$ to denote all constant factors that depend only on $A$ (so that in different equations, the meaning of $C_A$ may differ). Let $B\ge16$.  Let
$$
        J:=\lfloor\log_4B\rfloor,\qquad
        \omega_j:=B4^{-j},\qquad
        m_j:=2\lceil A2^j\rceil\qquad~\forall0\le j\le J\,,
$$
and define the kernel function
\begin{equation}
\label{eq:sinc-product-filter}
        \Psi_{A,B}(u):=\prod_{j=0}^J\left(\sinc(\omega_j u)\right)^{m_j},
        \quad
        Z_{A,B}:=\int_\R\Psi_{A,B}(u)\,du,
        \quad
        K_{A,B}(u):=\frac{\Psi_{A,B}(u)}{Z_{A,B}}\,,
\end{equation}
with the convention $\sinc(0)=1$.  Every $m_j$ is even, so $\Psi_{A,B}\ge0$; the factor $j=0$ decays like
$|u|^{-m_0}$ with $m_0\ge2$, so $\Psi_{A,B}$ is integrable.  Moreover, $\Psi_{A,B}(0)=1$, so
$Z_{A,B}>0$ and $K_{A,B}$ is an even probability density.

Under our Fourier-transform convention,
$$
        \widehat{\sinc(\omega\,\cdot)}(\xi)
        =\frac{\pi}{\omega}\,\mathbf 1_{[-\omega,\omega]}(\xi)
$$
(up to irrelevant endpoint values).  By the convolution theorem of the Fourier transform, up to the factor $(2\pi)^{-1}$,
 $\widehat{(\sinc(\omega\,\cdot))^m}$ is supported in the Minkowski sum of
$m$ copies of $[-\omega,\omega]$, namely $[-m\omega,m\omega]$.  Applying this once more to the product over
$j$,
\begin{align}\label{eq:supp_hat_KAB}
        \operatorname{supp}\widehat {K_{A,B}}\subseteq[-\sigma,\sigma]\,,
\end{align}
where
$$
        \sigma:=\sum_{j=0}^Jm_j\omega_j
        \le\sum_{j=0}^J(2A2^j+2)B4^{-j}\le4AB+\tfrac83B\le7AB \,.
$$
Dividing by the nonzero scalar $Z_{A,B}$ only multiplies the Fourier transform by $Z_{A,B}^{-1}$, and therefore
does not change its support.

\item \emph{Step 2: normalization.}  We claim that
\begin{equation}
\label{eq:Z-lower-filter}
        Z_{A,B}\ge \frac{2}{e\sqrt{8A}}\frac{1}{B}\,.
\end{equation}
Indeed, for $|z|\le1$, $\sinc(z)>0$ and $\log(\sinc(z))\ge-z^2$, since
\begin{align*}
    \frac{\sin z}z\ge &~ 1-\frac{z^2}6 \qquad \text{and}\qquad e^{-z^2}\le1-z^2+\frac{z^4}2\le1-\frac{z^2}6\,.
\end{align*}
Thus, for $|u|\le1/(\sqrt{8A}\,B)$, where every
$|\omega_ju|=|B4^{-j} u|\le Bu\le1$, we have
$$
        -\log\left(\Psi_{A,B}(u)\right)\le\sum_{j=0}^Jm_j(\omega_ju)^2
        \le B^2u^2\sum_{j=0}^J(2A2^j+2)16^{-j}\le8AB^2u^2\le1\,.
$$
Hence $\Psi_{A,B}\ge e^{-1}$ on that interval.  Because $\Psi_{A,B}\ge0$ on the whole real line,
we may lower-bound its total integral by integrating over this subinterval alone:
$$
        Z_{A,B}=\int_{\mathbb R}\Psi_{A,B}(u)\,du
        \ge \int_{-1/(\sqrt{8A}B)}^{1/(\sqrt{8A}B)}\Psi_{A,B}(u)\,du
        \ge \frac{2}{e\sqrt{8A}}\frac1B,
$$
which is \eqref{eq:Z-lower-filter}.

\item \emph{Step 3: tails of the kernel.}  Let $P_{A,B}(r):=\int_r^\infty K_{A,B}(u)\,du$ for $r\ge0$ be the tail of the kernel.  We claim that there are
an absolute constant $c>0$ and constants $C_A$ (depending only on $A$) such that the following tail bounds hold:
\begin{align}
\label{eq:kernel-far-tail}
        P_{A,B}(r)&\le C_A\,e^{-cA\sqrt B}\,r^{-cA\sqrt B}\,,&&\forall r\ge1\,,\\
\label{eq:kernel-root-tail}
        P_{A,B}(r)&\le C_A\bigl(1+\sqrt{Br}\bigr)e^{-cA\sqrt{Br}}\,,&&\forall 0\le r\le1 \,.
\end{align}
\begin{itemize}
\item \emph{Far tail ($r\geq 1$):}  Let $J':=\lfloor\log_4(B/2)\rfloor\le J$, so that
$\omega_j=B4^{-j}\ge2$ for all $j\le J'$.  For every $u\ge r\ge1$ and $j\le J'$,
\begin{align}\label{eq:bound_sinc}
    |\sinc(\omega_j u)|
    =\left|\frac{\sin(\omega_ju)}{\omega_ju}\right|
    \le \frac{1}{\omega_ju}\le \frac{1}{2u}\,.
\end{align}
For $J'<j\le J$ we use the trivial bound $|\sinc(\omega_j u)|\le1$.  Hence, $ \Psi_{A,B}(u)\le (2u)^{-M}$, where
$$
    M:=\sum_{j=0}^{J'}m_j
    \ge2A\,2^{J'}\ge\frac{A\sqrt B}{2\sqrt2}\,.
$$
By \eqref{eq:Z-lower-filter}, for $r\ge1$,
$$
        P_{A,B}(r)
        \le C_AB\,\frac{2^{-M}}{M-1}\,r^{-(M-1)}.
$$
By choosing $A$ larger than an absolute constant, $M-1\ge
A\sqrt B/(4\sqrt{2})$.  Also $\log B\le\sqrt B$ for $B\ge16$, and therefore
$$
        C_AB\,\frac{2^{-M}}{M-1}
        \le C_Ae^{-c_1A\sqrt B}
$$
for an absolute $c_1>0$, while
$$
        r^{-(M-1)}\le r^{-c_2A\sqrt B}
$$
for $c_2:=1/(4\sqrt{2})$.  Taking $c:=\min\{c_1,c_2\}$ proves
\eqref{eq:kernel-far-tail}.

\item \emph{Near tail ($r\in [0,1]$):}  If $Br\le4$, then \eqref{eq:kernel-root-tail} holds trivially after enlarging $C_A$,
since $P_{A,B}(r)\le1$.  Assume $Br>4$.  For each $u\in[r,1]$, let
$j_*(u):=\lfloor\log_4(Bu/2)\rfloor\in[0,J]$.
Then, $\omega_ju = B4^{-j} u\ge2$ for all $j\le j_*(u)$, and  as in~\eqref{eq:bound_sinc}, we have $|\sinc(\omega_ju)|\le\frac12$ for those $j$. Thus, $\Psi_{A,B}(u)\leq 2^{-M_u}$, where
$$
        M_u:=\sum_{j=0}^{j_*(u)} m_j = \sum_{j=0}^{j_*(u)} 2\lceil A2^j\rceil \geq 2A2^{j_*(u)} \geq 2 A \cdot \frac{\sqrt{Bu}}{2\sqrt2}\geq  cA\sqrt{Bu}\,.
$$
Hence, by \eqref{eq:Z-lower-filter} and the substitution $y=\sqrt{Bu}$ (so $B\,du=2y\,dy$), we have
\begin{align*}
        P_{A,B}(r)
        &\le\frac1{Z_{A,B}}\int_r^1e^{-cA\sqrt{Bu}}\,du+P_{A,B}(1)\\
        &\le C_A\int_{\sqrt{Br}}^\infty ye^{-cAy}\,dy+P_{A,B}(1)\\
        &\le C_A(1+\sqrt{Br})e^{-cA\sqrt{Br}}.
\end{align*}
where in the last step, $P_{A,B}(1)\le C_Ae^{-cA\sqrt B}$ by \eqref{eq:kernel-far-tail} is absorbed because $r\le1$.
\end{itemize}

\item \emph{Step 4: the window and its interior defect.}  Define
\begin{equation}
\label{eq:HAB-def}
        H_{A,B}(t):=(\mathbf 1_{[-1,1]}*K_{A,B})(t)=\int_{-1}^1K_{A,B}(t-s)\,ds .
\end{equation}
Then, $H_{A,B}$ is real and even, $0\le H_{A,B}\le1$, and
$\operatorname{supp}\widehat {H_{A,B}}\subseteq[-7AB,7AB]$, since
$\widehat{H_{A,B}}=\widehat{\mathbf 1_{[-1,1]}}\cdot\widehat{K_{A,B}}$ and~\eqref{eq:supp_hat_KAB}.  Moreover, for $r\ge0$,
\begin{align}\label{eq:H_AB_near_bound}
        H_{A,B}(1+r)=\int_r^{2+r}K_{A,B}(u)\,du\le P_{A,B}(r),
\end{align}
with the same bound at $-1-r$ by symmetry. And for $t\in[-1,1]$, we have
$$
        1-H_{A,B}(t)=P_{A,B}(1-t)+P_{A,B}(1+t)\le2P_{A,B}(1-|t|) .
$$
Consequently, by \eqref{eq:kernel-root-tail} and the substitution $y=\sqrt{Br}$,
\begin{equation}
\label{eq:interior-defect-filter}
\begin{aligned}
        \int_{-1}^1|1-H_{A,B}(t)|^2\,dt\le &~ 8\int_0^1P_{A,B}(r)^2\,dr\\
        \leq &~ 8\int_0^1 C_A^2 \bigl(1+\sqrt{Br}\bigr)^2e^{-2cA\sqrt{Br}}\,dr\\
        \le &~ \frac{C_A}B\int_0^\infty
        y(1+y)^2e^{-2cAy}\,dy\le\frac{C_A}B\,.
\end{aligned}
\end{equation}
Combining with the Nikolskii bound \eqref{eq:nik-11}, we get that
\begin{equation}
\label{eq:inside-loss-filter}
\begin{aligned}
        \int_{-1}^1|1-H_{A,B}(t)|^2|g(t)|^2\,dt
        \le &~ \|g\|_{L^\infty[-1,1]}^2\int_{-1}^1|1-H_{A,B}(t)|^2\,dt\\
        \leq &~ \|g\|_{L^\infty[-1,1]}^2 \cdot \frac{C_A}{B}\\
        \leq &~ \frac{C_Ak^2}B\,\|g\|_{L^2[-1,1]}^2 \,.
\end{aligned}
\end{equation}

\item \emph{Step 5: outside leakage, near range.}  From now on $B=\beta k^2$ with $\beta\ge16$ to be chosen.
For $0\le r\le1$, Proposition~\ref{prop:edge} gives
\begin{align*}
|g(1+r)|^2\le\frac92A_{\mathrm{fr}}\,k^2e^{6\sqrt2\,k\sqrt r}\,\|g\|_{L^2[-1,1]}^2\,.
\end{align*}
By~\eqref{eq:H_AB_near_bound} and~\eqref{eq:kernel-root-tail},
\begin{align*}
        \int_0^1|H_{A,B}(1+r)|^2|g(1+r)|^2\,dr\leq &~ \int_0^1|P_{A,B}(r)|^2|g(1+r)|^2\,dr\\
        \leq &~ C_Ak^2\,\|g\|_{L^2[-1,1]}^2 \cdot \int_0^1|P_{A,B}(r)|^2e^{6\sqrt2\,k\sqrt r}\,dr\\
        \le &~ C_Ak^2\,\|g\|_{L^2[-1,1]}^2
        \int_0^1\bigl(1+\sqrt{Br}\bigr)^2\,e^{-2cA\sqrt{Br}+6\sqrt2\,k\sqrt r}\,dr \,.
\end{align*}
Since $\sqrt{Br}=\sqrt\beta\,k\sqrt r$, the exponent equals
$-(2cA-6\sqrt2/\sqrt\beta)\sqrt{Br}$.  Fix $A$ so large (absolute) that $cA\ge1$; then for $\beta\ge72$
we have $6\sqrt2/\sqrt\beta\le1\le cA$, so the exponent is at most $-cA\sqrt{Br}$. Then, with
$y=\sqrt{Br}$, we get that
$$
        \int_0^1\bigl(1+\sqrt{Br}\bigr)^2e^{-cA\sqrt{Br}}\,dr
        \le\frac2B\int_0^\infty y(1+y)^2e^{-cAy}\,dy\le\frac{C_A}B \,.
$$
Thus the near right tail can be bounded by:
\begin{align}\label{eq:H_near_tail}
    \int_0^1|H_{A,B}(1+r)|^2|g(1+r)|^2\,dr \leq \frac{C_A}{B}k^2\|g\|_{L^2[-1,1]}^2=\frac{C_A}{\beta}\|g\|_{L^2[-1,1]}^2\,.
\end{align}
And the near left tail is identical by symmetry.

\item \emph{Step 6: outside leakage, far range.}  For $r\ge1$, note that
\begin{align*}
    \arcosh (1+r)\le\log(2(1+r))\le \log ((1+r)^2)\,.
\end{align*}
By \eqref{eq:main-global} squared, we have
\begin{align*}
        |g(1+r)|^2\leq &~ Ck^4\exp\left(6\sqrt2\,k\arcosh (1+r)\right)\|g\|_{L^2[-1,1]}^2\\
        \leq &~ C\,k^4\,(1+r)^{17k}\,\|g\|_{L^2[-1,1]}^2\,,
\end{align*}
while \eqref{eq:H_AB_near_bound} and \eqref{eq:kernel-far-tail} give that
\begin{align*}
    |H_{A,B}(1+r)|^2\leq |P_{A,B}(r)|^2\le C_Ae^{-2cA\sqrt B}r^{-2cA\sqrt B} =C_Ae^{-2cA\sqrt{\beta}k}r^{-2cA\sqrt{\beta}k}\,.
\end{align*}
Using
$(1+r)^{17k}\le2^{17k}r^{17k}$ and requiring $cA\sqrt\beta\ge18$,
\begin{align*}
        \int_1^\infty|H_{A,B}(1+r)|^2|g(1+r)|^2\,dr
        &\le C_Ak^42^{17k}e^{-2cA\sqrt\beta k}
        \int_1^\infty r^{17k-2cA\sqrt\beta k}\,dr\cdot \|g\|^2_{L^2[-1,1]}\\
        &\le C_Ae^{16k-2cA\sqrt\beta k}\,\|g\|_{L^2[-1,1]}^2\,,
\end{align*}
where the second step follows from $k^42^{17k}\le e^{16k}$ for every $k\ge1$ and the integral is at most $1$.  Increase $\beta$ once more so that $(2cA-1)\sqrt\beta\ge16$; since $cA\ge1$, it is enough to
take $\beta\ge256$.  Then, we have $2cA\sqrt\beta-16\ge\sqrt\beta$, and the far right tail can be bounded by:
\begin{align}\label{eq:H_far_tail}
    \int_1^\infty|H_{A,B}(1+r)|^2|g(1+r)|^2\,dr \leq C_Ae^{-\sqrt\beta k}\|g\|_{L^2[-1,1]}^2\le C_Ae^{-\sqrt\beta}\|g\|_{L^2[-1,1]}^2\,.
\end{align}
The far left tail is identical by symmetry.

\item \emph{Step 7: wrap up.}  Set $H_{k,1}:=H_{A,\beta k^2}$.  Summing \eqref{eq:inside-loss-filter}, the two near-tail bounds \eqref{eq:H_near_tail}, and the two far-tail bounds \eqref{eq:H_far_tail}, we have
\begin{align*}
        &~ \int_{-1}^1|1-H_{k,1}|^2|g|^2\,dt+\int_{\R\setminus[-1,1]}|H_{k,1}|^2|g|^2\,dt\\
        \leq &~ \frac{C_A k^2}{\beta k^2}\|g\|_{L^2[-1,1]}^2 + 2\left(\frac{C_A}{\beta}\|g\|_{L^2[-1,1]}^2 + C_Ae^{-\sqrt\beta}\|g\|_{L^2[-1,1]}^2\right)\\
        \le &~ C_A\bigl(\beta^{-1}+e^{-\sqrt\beta}\bigr)\|g\|_{L^2[-1,1]}^2\,.
\end{align*}
We can choose $\beta=\beta(\eta)$ sufficiently large to make the right-hand side at most
$\eta\|g\|_{L^2[-1,1]}^2$, which is \eqref{eq:cheb-filter-energy} for $T=1$, with Fourier support
contained in $[-7A\beta k^2,7A\beta k^2]$, i.e.,  $C_\eta=7A\beta(\eta)$. Therefore, the case $T=1$ is proved.
\end{itemize}

By the reduction at the beginning of the proof, the theorem is then proved.
\end{proof}

\begin{remark}[Eliminating the power-law surrogate]
\label{rem:eliminating-S}
\cite{ChenPrice2019} uses the parameter $S$ to capture the tail behavior of an arbitrary signal class by a power law.  For sparse Fourier sums, however, $S$ is not intrinsic.  The true outside-window growth profile is the Chebyshev envelope \eqref{eq:scaled-cheb-envelope}, with the endpoint behavior \eqref{eq:scaled-cheb-edge}.  Theorem~\ref{thm:cheb-matched-filter} constructs the window directly against this profile, thereby removing the power-law encoding from the sparse-Fourier argument. 

For comparison only, the sharp peak-to-average scale is $R=\Theta(k^2)$, and the best power-law representation with a polynomial prefactor has $S=\Theta(k^2/\log k)$; see Propositions~\ref{prop:R-k2} and \ref{prop:S-power-optimal}.  These parameters quantify the loss in the original surrogate, but neither is needed for the direct filtering argument.
\end{remark}

\begin{corollary}[Improved clustered-frequency estimation]
\label{cor:chen-price-main-improvement}
In the clustered-frequency model above, for every failure probability
$\delta_{\mathrm{fail}}\in(0,1)$, there is an efficient algorithm using
$$
        O\!\left(
        k\log^2k\cdot
        \log\frac{F}{\Delta_*\delta_{\mathrm{fail}}}
        \right)
$$
samples and returning an estimate $\widehat f_0$ such that, with probability at least
$1-\delta_{\mathrm{fail}}$,
$$
        |\widehat f_0-f_0|\le O(\Delta_*),
        \qquad
        \Delta_*:=\Delta+O\!\left(\frac{k^2}{T}\right).
$$
Thus the additive resolution improves from $\Delta+\widetilde O(k^3/T)$ in
\cite{ChenPrice2019} to $\Delta+O(k^2/T)$, while the sample complexity is unchanged up to the logarithmic factors
already present in their theorem.
\end{corollary}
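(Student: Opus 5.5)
The plan is to rerun the Chen--Price nonuniform-sampling algorithm unchanged, except that the window is the Chebyshev-matched filter of Theorem~\ref{thm:cheb-matched-filter} rather than their power-law filter. Since we are in cycles-per-unit-time units, angular support $[-C_\eta k^2/T,C_\eta k^2/T]$ becomes frequency support of width $O(k^2/T)$. In the Chen--Price analysis, the additive resolution loss equals the frequency support of the filter. Their filter has support $\widetilde O(k^3/T)$ because it must beat the power-law growth $|t/T|^S$ with $S=O(k^2\log k)$, weighted by $\operatorname{poly}(R)$ with $R=O(k^3\log^2 k)$. Ours has support $O(k^2/T)$, so the loss becomes $\Delta_*=\Delta+O(k^2/T)$.

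I will treat their proof as a black box with explicit interface requirements on the filter $H$. The first requirement is that $H$ is real, even, and satisfies $0\le H\le1$. The second is that $\widehat H$ is supported in an interval of length $O(k^2/T)$ (in cycles). The third is the energy-preservation property, with $\eta$ a small constant: the filtered signal $Hg$ is $\sqrt\eta$-close in $L^2(\mathbb R)$ to $g\mathbf 1_{[-T,T]}$, measured relative to $\|g\|_{L^2[-T,T]}$. This third requirement is exactly inequality~\eqref{eq:cheb-filter-energy}. Granting these, the rest of their argument goes through unchanged. The spectrum of $Hg$ lies in $[f_0-\Delta,f_0+\Delta]+\operatorname{supp}\widehat H$, which has width $O(\Delta_*)$. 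The noise is handled because $H\le1$ and $\eta$ is small. The sampling/hashing/voting stage localizes the center to accuracy $O(\Delta_*)$ with the stated failure probability. So I will list the places in their proof where $R$ and $S$ enter and check that each one consumes only these three properties.

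The sample complexity must stay $O(k\log^2k\cdot\log\frac{F}{\Delta_*\delta_{\mathrm{fail}}})$. The $k\log^2k$ factor in their bound comes from the sampling distribution and the concentration needed to estimate $\|Hg\|$-type quantities. It depends on $R$ only through importance sampling on $[-T,T]$. We can keep their sampling scheme and accept $R=O(k^3\log^2k)$, or substitute the sharper Nikolskii-type $R=O(k^2)$ from Lemma~\ref{lem:nikolskii}. Either way the count does not grow. The $\log(F/(\Delta_*\delta_{\mathrm{fail}}))$ term comes from the number of rounds of the frequency search at resolution $\Delta_*$. Since $\Delta_*\le\Delta_{\mathrm{CP}}$, this logarithm is no larger.

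The main obstacle is certifying that Chen--Price use their filter only through these abstract properties. If their analysis also uses pointwise decay of $H$ outside $[-T,T]$ against the power law $|t/T|^S$, beyond the integrated leakage, we must supply a substitute. For that I would use Step~3's tail bounds~\eqref{eq:kernel-far-tail}--\eqref{eq:kernel-root-tail} together with the envelope~\eqref{eq:scaled-cheb-envelope}, exactly as in Steps~5--6. If their analysis needs the filter's behavior on time-shifted copies of $g$, the same argument applies, because $\Tcal_k$ is shift-invariant and so these copies also lie in $\Tcal_k$. I would verify these points lemma by lemma.
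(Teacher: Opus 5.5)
Your plan is the one the paper uses: put the window of Theorem~\ref{thm:cheb-matched-filter} into the Chen--Price algorithm, so that the resolution loss becomes $\Delta_*=\Delta+C_\eta k^2/T$, and check that the new window has every property their proof uses. Your third requirement, $\|Hg-g\mathbf 1_{[-T,T]}\|_{L^2(\R)}^2\le\eta\|g\|_{L^2[-T,T]}^2$, is exactly \eqref{eq:cheb-filter-energy}. The two constant-factor energy conditions of \cite[Lemma~1.6]{ChenPrice2019} follow from it by the triangle inequality, as in Lemma~\ref{lem:CP-global-energy}.

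\textbf{Gap.} Your list of requirements misses one that the statement depends on, since it claims an \emph{efficient} algorithm. The algorithm must compute $H_{k,T}$ at every queried point, because it forms the filtered samples $H(t)y(t)$ at pairs $t,\,t+\beta$. The new window $H_{k,T}=\mathbf 1_{[-T,T]}*K_{A,B}$ has no closed form: it is a normalized integral of a product of $O(\log k)$ powers of $\sinc$. You therefore need two things:
\begin{enumerate}
\item a procedure that evaluates $H_{k,T}$ to small constant \emph{relative} accuracy in $\operatorname{poly}(k)$ time;
\item an argument that the resulting perturbation of the weighted energies and the phase signal is absorbed by the slack in \cite[Lemma~8.2]{ChenPrice2019}.
\end{enumerate}

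\textbf{Why it is delicate.} Relative accuracy is the hard part. $H$ decays rapidly outside the window, and a query point $t+\beta$ can lie outside $[-T,T]$, where additive quadrature error would swamp the value.

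\textbf{How the paper closes it.} It notes that $|\beta|\le\gamma/\Delta_*$ with $\Delta_*\gtrsim k^2/T$. After normalizing $T=1$, every query therefore satisfies $|t|\le 1+a/B$. It then shows $H_{A,B}\ge 1/4$ there, using $\|K_{A,B}\|_\infty\le Z_{A,B}^{-1}\le C_AB$ and $P_{A,B}(1)\le 1/8$. Next, the Lipschitz bound $|\Psi_{A,B}'|\le C_AB$ and the tail bounds \eqref{eq:kernel-far-tail}--\eqref{eq:kernel-root-tail} let a truncated rectangle rule with step $\asymp\rho/B^2$ reach additive error $O(\rho/B)$ in both numerator and denominator. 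This becomes relative error $O(\rho)$ because $Z_{A,B}\gtrsim 1/B$ and $H\ge1/4$.

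Without this step, your argument gives the resolution and sample-complexity claims but not efficiency. Your contingency about pointwise decay against $|t/T|^S$ is not needed. The paper's verification uses only the integrated energy bound, $0\le H\le 1$, and evaluability.
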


\begin{proof}
Fix a sufficiently small absolute $\eta$ and use the window from
Theorem~\ref{thm:cheb-matched-filter}.  Appendix~\ref{app:cp-interface} verifies all the conditions for the window used in  \cite{ChenPrice2019}. As a result, we obtain an analog of
\cite[Lemma~8.2]{ChenPrice2019} with $\Delta_* = \Delta+O(k^2/T)$ and
$O(k\log^2k)$ samples per phase estimate.  Applying the same multiscale frequency-search routine as in
\cite[Theorem~1.5]{ChenPrice2019} gives the stated guarantee and the additional logarithmic factor.
\end{proof}

\subsection{Extrapolative active regression and Fourier-feature leverage scores}
\label{subsec:active-learning-app}
Fix a frequency set
$\Lambda=(\lambda_1,\ldots,\lambda_k)$ and write
$$
        E_\Lambda:=\operatorname{span}\{e^{i\lambda_jt}:1\le j\le k\}.
$$
With Lebesgue measure on $[-1,1]$, the squared norm of the evaluation functional at $x$ is the Christoffel
function
\begin{align}\label{eq:christoffel-formula}
        \Kcal_\Lambda(x)
        =\sup_{0\ne g\in E_\Lambda}\frac{|g(x)|^2}{\|g\|_{L^2[-1,1]}^2}
        =\nu(x)^*G^{-1}\nu(x)\,,
\end{align}
where
$$
        G_{ab}=\int_{-1}^1 e^{i(\lambda_b-\lambda_a)t}dt=2\sinc(\lambda_b-\lambda_a),
        \qquad
        \nu_a(x)=e^{i\lambda_ax}\,.
$$
Indeed, $c^*Gc=\|\sum_jc_je^{i\lambda_jt}\|_{L^2[-1,1]}^2$ and $\sum_jc_je^{i\lambda_jx}=\nu(x)^\mathsf T c$;
Cauchy--Schwarz in the $G$-inner product gives \eqref{eq:christoffel-formula}, with equality at the kernel vector.
Thus, Theorem~\ref{thm:main} is equivalently a uniform outside-interval bound for the Christoffel
function.

If the training distribution is the uniform probability measure
$\mu(dt)=\tfrac12\mathbf1_{[-1,1]}(t)dt$, the corresponding statistical leverage score is
$\tau_\Lambda^\mu(x)=2\Kcal_\Lambda(x)$.  Leverage-score sampling is a standard primitive for subspace embeddings
and active least squares, and Fourier-sparse leverage bounds have been used for continuous sparse Fourier regression
and kernel approximation \cite{ChenPriceActive2019,AKMMVZ2019,ErdelyiMuscoMusco2020}.  Our contribution here is a
uniform bound when the prediction point lies outside the support of the training distribution.

For the nonlinear class $\Tcal_k$, define the frequency-uniform exterior leverage relative to the averaged
training norm by
$$
        \tau_k(x):=\sup_{0\ne g\in\Tcal_k}
        \frac{|g(x)|^2}{\mathbb E_{t\sim\mu}[|g(t)|^2]}\,.
$$
Squaring the endpoint estimate \eqref{eq:main-edge} gives an absolute constant $C$ such that, for
$0\leq\delta\leq1$,
\begin{equation}
\label{eq:outside-sensitivity-app}
        \tau_k(1+\delta)+\tau_k(-1-\delta)
        \leq Ck^2e^{Ck\sqrt\delta}\,.
\end{equation}

For $0\leq\Delta\leq1$, let
$$
        \mathcal X_\Delta:=[1,1+\Delta]\cup[-1-\Delta,-1]
$$
be the adjacent exterior prediction region.  Two related quantities arise.  The integrated pointwise leverage envelope is
$$
        \mathfrak L_k(\Delta):=\int_{\mathcal X_\Delta}\tau_k(x)\,dx = \int_{\mathcal X_\Delta}\sup_{0\ne g\in\Tcal_k}
        \frac{|g(x)|^2}{\mathbb E_{t\sim\mu}[|g(t)|^2]}\,dx\,,
$$
whereas the actual worst-case $L^2$ transfer constant is
$$
        \mathfrak T_k(\Delta)
        :=\sup_{0\ne g\in\Tcal_k}
        \frac{\displaystyle\int_{\mathcal X_\Delta}|g(x)|^2\,dx}
             {\displaystyle\mathbb E_{t\sim\mu}\left[|g(t)|^2\right]}\,.
$$
Since the pointwise maximizer in $\tau_k(x)$ may depend on $x$, these quantities need not be equal; one always has
$\mathfrak T_k(\Delta)\leq\mathfrak L_k(\Delta)$.  The next proposition bounds both and proves sharpness for the
actual transfer constant using a single sparse Fourier sum.

\begin{proposition}[Exterior leverage and transfer at the Chebyshev scale]
\label{prop:exterior-leverage}
There are absolute constants $c,C>0$ such that, for every $k\geq2$ and $0\leq\Delta\leq1$,
$$
        \mathfrak T_k(\Delta)
        \leq \mathfrak L_k(\Delta)
        \leq C(1+k\sqrt\Delta)e^{Ck\sqrt\Delta}\,.
$$
Conversely, whenever $k^{-2}\leq\Delta\leq1$,
$$
        \mathfrak T_k(\Delta)
        \geq c\Delta e^{ck\sqrt\Delta}\,.
$$
Thus the exponential dependence $\exp(\Theta(k\sqrt\Delta))$ is unavoidable, up to polynomial factors, for the
true worst-case transfer from the training interval to the adjacent exterior region.
\end{proposition}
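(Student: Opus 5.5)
The inequality $\mathfrak T_k(\Delta)\le\mathfrak L_k(\Delta)$ is immediate: for each fixed $g$, the integrand $|g(x)|^2/\mathbb E_\mu|g|^2$ is pointwise at most $\tau_k(x)$. I would integrate and then take the supremum over $g$.

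For the upper bound on $\mathfrak L_k$, I would integrate \eqref{eq:outside-sensitivity-app} over $\delta\in[0,\Delta]$, which gives
\[
\mathfrak L_k(\Delta)\le Ck^2\int_0^\Delta e^{Ck\sqrt\delta}\,d\delta .
\]
Substituting $y=k\sqrt\delta$, so that $d\delta=2y\,dy/k^2$, the factor $k^2$ cancels and we get
\[
\mathfrak L_k(\Delta)\le 2C\int_0^{k\sqrt\Delta}y\,e^{Cy}\,dy\le 2C\,k\sqrt\Delta\,e^{Ck\sqrt\Delta}.
\]
This is of the stated form $C'(1+k\sqrt\Delta)e^{C'k\sqrt\Delta}$. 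The point is that the polynomial prefactor $k^2$ is absorbed by the length scale $k^{-2}$ of the edge.

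For the lower bound, I would use a single confluent Chebyshev function, as in Proposition~\ref{prop:sharpness}. Take $n=k-1\ge1$ and $P_\varepsilon=T_n(w_\varepsilon)\in\Tcal_k$. Since $P_\varepsilon\to T_n$ uniformly on $[-1,2]$, we get $\int_{\mathcal X_\Delta}|P_\varepsilon|^2\to\int_{\mathcal X_\Delta}T_n^2$ and $\mathbb E_\mu|P_\varepsilon|^2\to\mathbb E_\mu T_n^2\le1$. Hence
\[
\mathfrak T_k(\Delta)\ge\int_1^{1+\Delta}T_n(x)^2\,dx .
\]
To bound this integral from below, restrict to $x\in[1+\Delta/2,1+\Delta]$. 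There
\[
T_n(x)\ge\tfrac12\exp\bigl(n\arcosh x\bigr)\ge\tfrac12\exp\bigl(n\sqrt{\Delta/2}\bigr)
\]
by Lemma~\ref{lem:arcosh}. Therefore the integral is at least $\tfrac{\Delta}{8}\exp\bigl(2(k-1)\sqrt{\Delta/2}\bigr)$. Since $k\ge2$ we have $k-1\ge k/2$, so this is at least $c\Delta e^{ck\sqrt\Delta}$ with, for instance, $c=1/8$ in front and $c=1/\sqrt2$ in the exponent (take the smaller of the two).

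I do not expect a real obstacle. The one point to handle carefully is the passage to the limit in the ratio, which requires uniform convergence on all of $[-1-\Delta,1+\Delta]$; the proof of Proposition~\ref{prop:sharpness} already provides this on $[-R,R]$. The hypothesis $\Delta\ge k^{-2}$ is not needed for the inequality itself. It only ensures that the exponential factor is the dominant feature, so that the upper and lower bounds match in the form $\exp(\Theta(k\sqrt\Delta))$ up to polynomial factors.
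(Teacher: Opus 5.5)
Your proposal is correct and follows the paper's proof essentially step for step: pointwise domination gives $\mathfrak T_k(\Delta)\le\mathfrak L_k(\Delta)$, integrating \eqref{eq:outside-sensitivity-app} with the substitution $y=k\sqrt\delta$ gives the upper bound, and the single confluent sequence $P_\varepsilon\to T_{k-1}$, restricted to $[1+\Delta/2,1+\Delta]$ and combined with Lemma~\ref{lem:arcosh}, gives the lower bound. The only nit is that $2C\int_0^{Y}y e^{Cy}\,dy\le 2CYe^{CY}$ requires $C\ge1$ (integration by parts gives $\le 2Ye^{CY}$ in general), and you may assume $C\ge1$ by enlarging $C$.
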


\begin{proof}
The inequality $\mathfrak T_k(\Delta)\leq\mathfrak L_k(\Delta)$ follows by applying the definition of
$\tau_k(x)$ to a fixed $g$ and then integrating.  Integrating \eqref{eq:outside-sensitivity-app} and substituting
$y=k\sqrt u$ gives
$$
        \mathfrak L_k(\Delta)
        \leq Ck^2\int_0^\Delta e^{Ck\sqrt u}\,du
        =2C\int_0^{k\sqrt\Delta}ye^{Cy}\,dy
        \leq C(1+k\sqrt\Delta)e^{Ck\sqrt\Delta}\,.
$$

For the lower bound, let $n=k-1$ and use the same confluent sequence as in the proof of
Proposition~\ref{prop:sharpness}:
$$
        P_\varepsilon(t)
        =T_n\!\left(\frac{e^{i\varepsilon t}-1}{i\varepsilon}\right)
        \in\Tcal_k\,.
$$
That proof shows that $P_\varepsilon\to T_n$ uniformly on every compact real interval.  Applying this convergence
on $[-1-\Delta,1+\Delta]$ gives
$$
        \mathfrak T_k(\Delta)
        \geq
        \lim_{\varepsilon\to0}
        \frac{\displaystyle\int_{\mathcal X_\Delta}|P_\varepsilon(x)|^2\,dx}
             {\displaystyle\mathbb E_{t\sim\mu}[|P_\varepsilon(t)|^2]}
        =
        \frac{\displaystyle\int_{\mathcal X_\Delta}|T_n(x)|^2\,dx}
             {\displaystyle\mathbb E_{t\sim\mu}[|T_n(t)|^2]}\,.
$$
The denominator is at most $1$, since $|T_n(t)|\leq1$ on $[-1,1]$.  For
$u\in[\Delta/2,\Delta]$, Lemma~\ref{lem:arcosh} gives
$\arcosh(1+u)\geq\sqrt u\geq\sqrt{\Delta/2}$, and therefore
$$
        |T_n(1+u)|^2
        =\cosh^2\!\left(n\arcosh(1+u)\right)
        \geq\frac14\exp\!\left(2n\arcosh(1+u)\right)
        \geq c e^{ck\sqrt\Delta}\,.
$$
Integrating over $u\in[\Delta/2,\Delta]$, an interval of length $\Delta/2$, proves
$\mathfrak T_k(\Delta)\geq c\Delta e^{ck\sqrt\Delta}$.
\end{proof}

The transfer constant gives the precise black-box implication from an in-domain regression guarantee to exterior
prediction.

\begin{corollary}[Black-box transfer from active regression to exterior prediction]
\label{cor:active-exterior-transfer}
Fix $|\Lambda|\leq k$, and let $g_\star,\widehat g\in E_\Lambda$.  Then
$$
        \int_{\mathcal X_\Delta}|\widehat g(x)-g_\star(x)|^2\,dx
        \leq
        \mathfrak T_k(\Delta)\,
        \mathbb E_{t\sim\mu}\!\left[|\widehat g(t)-g_\star(t)|^2\right].
$$
If $g_\star$ and $\widehat g$ are each $k$-sparse but may use different frequency sets, the same conclusion
holds with $\mathfrak T_{2k}(\Delta)$.
\end{corollary}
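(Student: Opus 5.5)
The plan is to reduce the corollary directly to the definition of the transfer constant $\mathfrak T_k(\Delta)$ by applying it to the error function $h:=\widehat g-g_\star$.

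\emph{Same frequency set.} Since $g_\star,\widehat g\in E_\Lambda$ and $E_\Lambda$ is a linear space, $h\in E_\Lambda$. Because $|\Lambda|\le k$, we have $E_\Lambda\subseteq\Tcal_k$: any element $\sum_{j}c_je^{i\lambda_jt}$ of $E_\Lambda$ is an exponential sum with at most $k$ terms, after merging repeated frequencies as in the convention of \eqref{eq:Tk-def}. If $h=0$, both sides of the inequality vanish and there is nothing to prove. Otherwise $h$ is a nonzero element of $\Tcal_k$, so it is an admissible competitor in the supremum defining $\mathfrak T_k(\Delta)$. This gives
$$
\frac{\int_{\mathcal X_\Delta}|h(x)|^2\,dx}{\mathbb E_{t\sim\mu}[|h(t)|^2]}\le\mathfrak T_k(\Delta).
$$
The denominator is strictly positive, because a nonzero exponential sum is a nonzero entire function and therefore cannot vanish on all of $[-1,1]$. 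Multiplying through by the denominator yields the first claim.

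\emph{Different frequency sets.} If $g_\star$ and $\widehat g$ are each $k$-sparse but use frequency sets $\Lambda_\star$ and $\widehat\Lambda$, then $h=\widehat g-g_\star$ lies in $E_{\Lambda_\star\cup\widehat\Lambda}$. Since $|\Lambda_\star\cup\widehat\Lambda|\le2k$, we get $h\in\Tcal_{2k}$, and the same argument applies with $\mathfrak T_{2k}(\Delta)$ in place of $\mathfrak T_k(\Delta)$.

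There is no real obstacle here. The only points that need care are the closure of $E_\Lambda$ under subtraction (and the fact that subtraction at most doubles the sparsity when the frequency sets differ), and ruling out the degenerate case $h=0$, where the ratio in the definition of $\mathfrak T_k(\Delta)$ is undefined. If quantitative content is wanted, one can combine the result with Proposition~\ref{prop:exterior-leverage}, which gives $\mathfrak T_{2k}(\Delta)\le C(1+k\sqrt\Delta)e^{Ck\sqrt\Delta}$ after adjusting the constant.
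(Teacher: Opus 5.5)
Your proposal is correct and matches the paper's proof: both apply the definition of $\mathfrak T_k(\Delta)$ to $h=\widehat g-g_\star\in E_\Lambda\subseteq\Tcal_k$, and use $h\in\Tcal_{2k}$ when the frequency sets differ. Your extra care about the degenerate case $h=0$ and the strict positivity of the denominator is a harmless refinement of the paper's one-line argument.
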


\begin{proof}
The difference $h=\widehat g-g_\star$ belongs to $E_\Lambda\subseteq\Tcal_k$; apply the definition of
$\mathfrak T_k(\Delta)$.  With unrelated supports, $h\in\Tcal_{2k}$.
\end{proof}

Consequently, any active or weighted least-squares algorithm that produces an in-domain error bound
$$
        \mathbb E_{t\sim\mu}[|\widehat g(t)-g_\star(t)|^2]
        \leq\varepsilon_{\mathrm{train}}^2
$$
automatically gives exterior integrated error at most
$\mathfrak T_k(\Delta)\varepsilon_{\mathrm{train}}^2$, and the upper bound in
Proposition~\ref{prop:exterior-leverage} makes this guarantee explicit.  For example, if a regression routine has the
standard parametric risk $O(\sigma^2k/m)$ after $m$ noisy labels, then the same estimator has exterior risk at most
$$
        O\!\left(\frac{\sigma^2k}{m}\,\bigl(1+k\sqrt\Delta\bigl)e^{Ck\sqrt\Delta}\right).
$$
The overhead is polynomial in $k$ when $k\sqrt\Delta=O(\log k)$, and it is constant at the natural edge width
$\Delta=O(k^{-2})$.  Conversely, the lower bound for $\mathfrak T_k(\Delta)$ is witnessed by one confluent
Chebyshev sequence, rather than by an $x$-dependent family of pointwise extremizers.  It therefore proves that any
black-box implication based only on an in-domain $L^2$-error guarantee must incur super-polynomial amplification
when $k\sqrt\Delta/\log k\to\infty$ (in the regime $\Delta\geq k^{-2}$).

\section*{Acknowledgments}
We thank Xue Chen for helpful discussion.

We used ChatGPT Pro 5.5 and Claude Opus 4.8 to assist with proving Theorem~\ref{thm:main}. We verified the correctness and wrote the proofs with suitable modifications for clarity. The authors assume responsibility for all content.
\appendix
\section{Auxiliary comparison with the Chen--Price parameters}
\label{app:cp-parameters}

The direct filter theorem does not require the Chen--Price power-law parameter $S$.  For completeness, this appendix
states two auxiliary comparisons used in Remark~\ref{rem:eliminating-S}: the sharp peak-to-average scale and the best
possible power-law encoding of the Chebyshev envelope.

\begin{proposition}[Sharp peak-to-average ratio; cf.\ {\cite{Kos2008}}]
\label{prop:R-k2}
For every $T>0$ and every $k\ge1$,
$$
        k^2\;\le\;
        R_k:=\sup_{0\neq g\in\Tcal_k}
        \frac{\sup_{t\in[-T,T]}|g(t)|^2}{\mathbb E_{t\in[-T,T]}|g(t)|^2}
        \;\le\;\frac{\pi^2}4\,k^2 .
$$
\end{proposition}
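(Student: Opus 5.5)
The statement is Proposition~\ref{prop:R-k2}, and the plan is to prove the two inequalities separately, after first reducing to $T=1$. The reduction works because $t\mapsto g(Tt)$ preserves $\Tcal_k$, and the ratio of the supremum to the average is unchanged by rescaling. With $T=1$ the averaged norm becomes $\mathbb E_{t\in[-1,1]}|g|^2=\tfrac12\|g\|_{L^2[-1,1]}^2$.

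\emph{Upper bound.} Squaring \eqref{eq:nik-11} of Lemma~\ref{lem:nikolskii} gives
$$\|g\|_{L^\infty[-1,1]}^2\le\frac{\pi^2k^2}{8}\|g\|_{L^2[-1,1]}^2=\frac{\pi^2k^2}{4}\,\mathbb E_{t\in[-1,1]}|g(t)|^2 .$$
This already yields $R_k\le\frac{\pi^2}{4}k^2$, so no further work is needed on this side.

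\emph{Lower bound.} I would use the confluent limit from the proof of Proposition~\ref{prop:sharpness}. For any real polynomial $p$ of degree at most $n=k-1$, set $P_\varepsilon=p(w_\varepsilon)\in\Tcal_k$. Then $P_\varepsilon\to p$ uniformly on $[-1,1]$. Consequently $R_k$ is at least the polynomial peak-to-average ratio
$$\sup_{\deg p\le n}\frac{\|p\|_{L^\infty[-1,1]}^2}{\tfrac12\|p\|_{L^2[-1,1]}^2}.$$
To evaluate it, expand in orthonormal Legendre polynomials $\tilde P_j=\sqrt{(2j+1)/2}\,P_j$. The point-evaluation norm at $t=1$ is then
$$\sum_{j=0}^{n}\frac{2j+1}{2}=\frac{(n+1)^2}{2}=\frac{k^2}{2}.$$
Choosing the extremal kernel polynomial $p=\sum_j\tilde P_j(1)\tilde P_j$ gives $|p(1)|^2=\frac{k^2}{2}\|p\|_{L^2}^2$, and hence a ratio of $|p(1)|^2/(\tfrac12\|p\|_2^2)=k^2$. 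Since $\sup|p|\ge|p(1)|$, this shows $R_k\ge k^2$.

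The step I expect to need the most care is the passage to the limit: I have to confirm that both the supremum and the $L^2$ norm converge under uniform convergence, which is immediate. I also have to check the Legendre normalization $P_j(1)=1$ and $\int_{-1}^{1} P_j^2=2/(2j+1)$, which gives the exact constant $k^2$.
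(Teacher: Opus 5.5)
Your proposal is correct and follows essentially the same route as the paper. You rescale to $T=1$, take the upper bound directly from the Nikolskii inequality \eqref{eq:nik-11}, and get the lower bound from the endpoint Legendre reproducing kernel $\sum_{j}\frac{2j+1}{2}P_j$ (ratio exactly $k^2$), transferred to $\Tcal_k$ via the confluent substitution $w_\varepsilon$.
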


The upper bound is essentially known: it is the energy bound of K\'os \cite{Kos2008}, and it also
follows, with the explicit constant $\pi^2/4$, from the endpoint Nikolskii inequality of
Denisov--Erd\'elyi quoted in Lemma~\ref{lem:nikolskii}.  We record the short deduction together with the
matching lower bound because the clustered-frequency analysis of \cite{ChenPrice2019} used
$R=O(k^3\log^2k)$, and the earlier interpolation analysis of \cite{CKPS2016} used $R=O(k^4\log^3k)$.

\begin{proof}
The ratio is invariant under the change of variables $t=Tu$, so take $T=1$.
\begin{itemize}
\item \emph{Upper bound:}  By \eqref{eq:nik-11} and $\|g\|_{L^2[-1,1]}^2=2\,\mathbb E_{t\sim[-1,1]}\left[|g(t)|^2\right]$,
$$
        \sup_{t\in[-1,1]}|g(t)|^2
        \le\frac{\pi^2k^2}8\,\|g\|_{L^2[-1,1]}^2
        =\frac{\pi^2k^2}4\,\mathbb E_{t\sim[-1,1]}\left[|g(t)|^2\right]\,.
$$

\item \emph{Lower bound:}  Let $P_m$ denote the Legendre polynomials, normalized by $P_m(1)=1$ and
$\int_{-1}^1P_m^2=\frac2{2m+1}$, and let
$$
        p^*(t):=\sum_{m=0}^{k-1}\frac{2m+1}2\,P_m(t)
$$
be the reproducing kernel of the polynomial space $\Pi_{k-1}$ at the endpoint $1$.  Then
$$
        p^*(1)=\sum_{m=0}^{k-1}\frac{2m+1}2=\frac{k^2}2,
        \qquad
        \|p^*\|_{L^2[-1,1]}^2=\sum_{m=0}^{k-1}\Bigl(\frac{2m+1}2\Bigr)^2\frac2{2m+1}=\frac{k^2}2\,.
$$
Thus, we have
\begin{align*}
    \frac{|p^*(1)|^2}{\mathbb E_{t\sim[-1,1]}\left[|p^*(t)|^2\right]}=\frac{k^4/4}{k^2/4}=k^2\,.
\end{align*}
Now substitute as in the proof of
Proposition~\ref{prop:sharpness}: with $w_\varepsilon(t)=(e^{i\varepsilon t}-1)/(i\varepsilon)$, the
functions $p^*(w_\varepsilon(\cdot))\in\Tcal_k$ converge to $p^*$ uniformly on $[-1,1]$ as
$\varepsilon\to0$, so their peak-to-average ratios converge to that of $p^*$, which is at least $k^2$.
Hence, $R_k\ge k^2$.
\end{itemize}

The proposition is then proved.
\end{proof}

\begin{proposition}[Optimal power-law exponent]
\label{prop:S-power-optimal}
There are absolute constants $B_0\ge1$ and $C_1:=3\sqrt2$ such that for every $k\ge2$, the power-law outside growth
\begin{equation}
\label{eq:PG-template}
        |g(Tx)|^2\le k^B\,x^S\;\mathbb E_{u\sim[-T,T]}\left[|g(u)|^2\right],
        \qquad x\ge1,\quad g\in\Tcal_k,
\end{equation}
holds with $B=B_0$ and $S=4C_1^2\,k^2/\log k$.  

Conversely, for every fixed $B$ there are $c_B>0$ and
$k_0(B)$ such that if \eqref{eq:PG-template} holds for some $k\ge k_0(B)$, then
$S\ge c_B\,k^2/\log k$.
\end{proposition}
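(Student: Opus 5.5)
We prove the two halves separately, after rescaling to $T=1$ so that the average equals $\tfrac12\|g\|_{L^2[-1,1]}^2$.

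\emph{Upper bound.} Squaring \eqref{eq:main-global} gives
$$
|g(x)|^2\le C k^4 e^{2C_1 k\arcosh x}\,\mathbb E|g|^2 .
$$
The task is to dominate $e^{2C_1k\arcosh x}$ by $k^{O(1)}x^S$ with $S=4C_1^2k^2/\log k$, uniformly over $x\ge1$. Write $\rho=\arcosh x$. We use $\log x\ge \rho-\log 2$, and near the endpoint $\log x=\log(\cosh\rho)\ge c\rho^2$ for $\rho\le1$, say $\log\cosh\rho\ge \rho^2/2-\rho^4/12\ge\rho^2/3$. Then $x^S\ge e^{S\log x}$, and it suffices to have
$$
2C_1k\rho\le B_0\log k+S\log x \qquad\text{for all }\rho\ge0 .
$$
We split into three regimes.
\begin{itemize}
\item For $\rho\le (\log k)/k$ the left side is at most $2C_1\log k$, so any $B_0\ge 2C_1+4$ works.
\item For $(\log k)/k\le\rho\le1$ we need $2C_1k\rho\le S\rho^2/3$, i.e. $\rho\ge 6C_1 k/S=\tfrac{3\log k}{2C_1k}$. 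This holds here because $3/(2C_1)<1$. The constants may need adjusting; the exact factor $4C_1^2$ is what the AM--GM optimization $2C_1k\rho\le a\rho^2+C_1^2k^2/a$ produces. A cleaner route is therefore $2C_1k\rho\le S\rho^2/c+cC_1^2k^2/S$, with the last term equal to $O(\log k)$.
\item For $\rho\ge1$ we use $\log x\ge\rho-\log2\ge c\rho$, and $S\ge 2C_1k$ handles it for $k\ge 2$. The finitely many small $k$ are absorbed into $B_0$.
\end{itemize}
The main fiddly point is matching the constant $4C_1^2$ via this quadratic optimization. That is why I would phrase the argument as $2C_1k\rho\le \tfrac{S}{4}\rho^2+\tfrac{4C_1^2k^2}{S}$. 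The last term is $\log k$, and the factor $\rho^2/4$ is below $\log\cosh\rho$ for $\rho\le 1$. For $\rho\ge1$ we use $\log x\ge \rho/4$, so $S\log x\ge S\rho/4\ge 2C_1k\rho$ as soon as $S\ge 8C_1k$. This holds for $k\ge k_1$ because $S\asymp k^2/\log k$; the finitely many $k<k_1$ are absorbed by enlarging $B_0$, using the plain bound $k^2e^{4k\rho}$ and choosing $B_0$ large.

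\emph{Lower bound.} We use the confluent Chebyshev family from Proposition~\ref{prop:sharpness}. It gives
$$
|g(x)|^2/\mathbb E|g|^2\ge c\,e^{2(k-1)\arcosh x}\qquad\text{for every }x\ge1 .
$$
If \eqref{eq:PG-template} holds, then $k^Bx^S\ge c e^{2(k-1)\rho}$ for all $x$. Choose $x=1+\delta$ with $\delta=(A\log k/k)^2$ for a large constant $A=A_B$. Then $\rho\ge\sqrt\delta=A\log k/k$, so the right side is at least $c\,k^{2A(k-1)/k}\ge c\,k^{A}$. Take $A=B+2$ and $k$ large, so that $k^{A}c>k^{B+1}$. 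Then $x^S\ge k$, i.e. $S\log(1+\delta)\ge\log k$. Since $\log(1+\delta)\le\delta=A^2\log^2k/k^2$, this gives
$$
S\ge \frac{k^2}{A^2\log k},
$$
which is the claim with $c_B=(B+2)^{-2}$. The only care needed is choosing $\delta\le1$, which holds for $k\ge k_0(B)$, and verifying the arcosh lower bound from Lemma~\ref{lem:arcosh}.
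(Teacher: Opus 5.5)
Your proposal is correct and follows the paper's proof essentially step for step. For the upper bound it uses the same AM--GM estimate $2C_1k\rho\le \frac S4\rho^2+\log k$, combined with $\log\cosh\rho\ge\rho^2/4$ on $[0,1]$ and $\log\cosh\rho\ge\rho/4$ for $\rho\ge1$. For the lower bound it tests the confluent Chebyshev family at $\rho\asymp(B+2)\log k/k$; you parametrize by $\delta$ and use $\log(1+\delta)\le\delta$ where the paper uses $\log\cosh\rho\le\rho^2/2$.

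One step is superfluous: you need no small-$k$ absorption, because $C_1k/\log k\ge2$ (equivalently $S\ge 8C_1k$) already holds for every $k\ge2$. That is fortunate, since enlarging $B_0$ alone could not repair a deficit in the exponential rate as $\rho\to\infty$.
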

\begin{remark}
Under $\mathrm{poly}(k)$ prefactors, the optimal power-law exponent is $S=\Theta(k^2/\log k)$; in particular \cite[Theorem~1.4]{ChenPrice2019}, which gives $S=O(k^2\log k)$, improves by a factor $\log^2k$, and no further improvement of the exponent is possible.
\end{remark}

\begin{proof}
Take $T=1$.  We use the following elementary bounds:
\begin{align}
    \log\cosh\rho\ge &~ \frac{\rho^2}{4}  &&\qquad\forall 0\le\rho\le1\,,\label{eq:logcosh_1}\\
    \log\cosh\rho\ge &~ \frac{\rho}{4} &&\qquad \forall \rho\ge1\,,\label{eq:logcosh_2}\\
    \log\cosh\rho\le &~ \frac{\rho^2}{2}  &&\qquad \forall\rho\ge0\,.\label{eq:logcosh_3}
\end{align}
For \eqref{eq:logcosh_1}: $\cosh\rho\ge1+\rho^2/2$ and $\log(1+y)\ge y-y^2/2$ give
$\log\cosh\rho\ge\frac{\rho^2}2-\frac{\rho^4}8\ge\frac{\rho^2}4$ for $\rho\le1$.  For \eqref{eq:logcosh_2}:
$\cosh\rho\ge e^\rho/2$ gives $\log\cosh\rho\ge\rho-\log2\ge(1-\log2)\rho\ge\rho/4$ for $\rho\ge1$.  For
\eqref{eq:logcosh_3}: comparing Taylor coefficients, $(2\ell)!\ge2^\ell\,\ell!$ yields $\cosh\rho\le e^{\rho^2/2}$.
\begin{itemize}
\item \emph{Upper bound.}  Squaring \eqref{eq:scaled-cheb-envelope} with $x=\cosh \rho$, $\rho>0$ gives
\begin{align*}
    |g(\cosh\rho)|^2\le C_0^2k^4e^{2C_1k\rho}\,\mathbb E_{u\sim[-1,1]}\left[|g(u)|^2\right]\,,
\end{align*}
and $C_0^2k^4\le k^{B_0-1}$
for an absolute constant $B_0$ and all $k\ge2$.  It therefore suffices to show that, with
$S=4C_1^2k^2/\log k$,
$$
        2C_1k\rho\;\le\;\log k+S\log\cosh\rho\qquad\forall\rho\ge0\,.
$$
For $0\le\rho\le1$, by \eqref{eq:logcosh_1} and the AM--GM inequality,
$$
        \log k+S\log\cosh\rho\ge\log k+\frac{S\rho^2}4
        \ge2\sqrt{\log k\cdot\frac{S\rho^2}4}=\sqrt{S\log k}\;\rho=2C_1k\rho .
$$
For $\rho\ge1$, by \eqref{eq:logcosh_2},
$$
        S\log\cosh\rho\ge\frac{S\rho}4=\frac{C_1^2k^2}{\log k}\,\rho\ge2C_1k\rho,
$$
since $C_1k/\log k\ge2$ for $k\ge2$.

\item \emph{Lower bound.}  By Proposition~\ref{prop:sharpness} and
$\|g\|_{L^2[-1,1]}^2=2\,\mathbb E_{t\in[-1,1]}|g(t)|^2$, for every $\rho\ge0$,
$$
        \sup_{0\ne g\in\Tcal_k}\frac{|g(\cosh\rho)|^2}{\mathbb E_{t\in[-1,1]}|g(t)|^2}
        \ge\frac14\,e^{2(k-1)\rho} .
$$
If \eqref{eq:PG-template} holds, then for all $\rho\in[0,1]$, using \eqref{eq:logcosh_3},
$$
        2(k-1)\rho-\log4\le B\log k+\frac{S\rho^2}2 .
$$
Choose $\rho=2(B+2)\log k/k$. For $k\ge k_0(B)$, we have $\rho\in [0,1]$ and
$2(k-1)\rho\ge k\rho=2(B+2)\log k$.  Hence, for $k\ge\max\{k_0(B),4\}$,
\begin{align*}
        \frac{S2^2(B+2)^2\log^2k}{2k^2}
        \ge &~ 2(B+2)\log k-B\log k-\log4\\
        \ge &~ (2B+4-B-1)\log k=(B+3)\log k\,.
\end{align*}
Therefore, $S\ge\dfrac{(B+3)}{2(B+2)^2}\cdot\dfrac{k^2}{\log k}$.
\end{itemize}

The proposition is then proved.
\end{proof}

\section{Plugging the truncation window into the Chen--Price algorithm}
\label{app:cp-interface}

This appendix verifies that the window $H_{k,T}$ from Theorem~\ref{thm:cheb-matched-filter} can replace the filter used in the proof of \cite[Theorem~1.5]{ChenPrice2019}.  Their argument needs the filter to satisfy the following conditions: 
\begin{enumerate}\renewcommand\labelenumi{(\arabic{enumi})}
    \item $H_{k,T}$ preserves a constant fraction of the in-window signal energy while suppressing exterior energy (Lemma~1.6 in \cite{ChenPrice2019});
    \item $|H_{k,T}(x)|\leq 1.01$ for any $x$ (item (3) of Theorem 4.2 in \cite{ChenPrice2019});
    \item $H_{k,T}$ can be efficiently evaluated at the queried points (implicitly used by line~5 of Algorithm~2 in \cite{ChenPrice2019}).  
\end{enumerate}
We will verify these conditions in this section: (1) will be verified in Lemma~\ref{lem:CP-global-energy}; (2) is already proved in Theorem~\ref{thm:cheb-matched-filter}; and  (3) will be verified in Lemma~\ref{lem:CP-window-evaluation}.
The only parameter that changes is the effective cluster half-width $\Delta_*:=\Delta+C_\eta\frac{k^2}{T}$.

\begin{lemma}[Energy preservation and concentration]
\label{lem:CP-global-energy}
Let $H_{k,T}$ be the window from Theorem~\ref{thm:cheb-matched-filter}. Then, for every $g\in \mathcal{T}_k$,
\begin{equation}
\label{eq:CP-global-energy-conclusions}
\begin{aligned}
        \int_{-T}^{T}|H_{k,T}(t)g(t)|^2\,dt\geq &~ (1-\sqrt\eta)^2\int_{-T}^{T}|g(t)|^2\,dt\,,\\
        \int_{-T}^{T}|H_{k,T}(t)g(t)|^2\,dt\geq &~ \frac{(1-\sqrt\eta)^2}{(1-\sqrt\eta)^2+\eta}  \int_{\R}|H_{k,T}(t)g(t)|^2\,dt\,.
\end{aligned}
\end{equation}
In particular, for sufficiently small fixed $\eta$ (e.g., $\eta\leq1/400$), these imply the two
constant-factor energy properties in \cite[Lemma~1.6]{ChenPrice2019}.
\end{lemma}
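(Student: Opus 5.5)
The plan is to read both inequalities directly off the energy estimate \eqref{eq:cheb-filter-energy} of Theorem~\ref{thm:cheb-matched-filter}, combined with the triangle inequality in $L^2[-T,T]$. Write $\|\cdot\|$ for the $L^2[-T,T]$ norm. By \eqref{eq:cheb-filter-energy}, the two nonnegative terms separately satisfy
$$
\|(1-H_{k,T})g\|^2\le\eta\|g\|^2,
\qquad
\int_{\R\setminus[-T,T]}|H_{k,T}g|^2\le\eta\|g\|^2 .
$$

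For the first inequality, apply the reverse triangle inequality:
$$
\|H_{k,T}g\|\ge\|g\|-\|(1-H_{k,T})g\|\ge(1-\sqrt\eta)\|g\|.
$$
Since $\eta<1$, the right-hand side is nonnegative, so squaring is legitimate.

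For the second inequality, split the total energy as
$$
\int_\R|H_{k,T}g|^2=\|H_{k,T}g\|^2+\int_{\R\setminus[-T,T]}|H_{k,T}g|^2
\le\|H_{k,T}g\|^2+\eta\|g\|^2 .
$$
Now use the first inequality in the form $\|g\|^2\le(1-\sqrt\eta)^{-2}\|H_{k,T}g\|^2$. This gives
$$
\int_\R|H_{k,T}g|^2\le\Bigl(1+\frac{\eta}{(1-\sqrt\eta)^2}\Bigr)\|H_{k,T}g\|^2 .
$$
Rearranging yields exactly the stated ratio $(1-\sqrt\eta)^2/((1-\sqrt\eta)^2+\eta)$.

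For the ``in particular'' clause, evaluate at $\eta=1/400$. Then $(1-\sqrt\eta)^2=0.9025$, and the ratio is about $0.997$. These constants comfortably meet the fractions required in \cite[Lemma~1.6]{ChenPrice2019}, namely that a constant fraction of the in-window energy is preserved and that most of the filtered energy lies inside $[-T,T]$. For smaller $\eta$ both quantities only improve by monotonicity.

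I expect no step to be a real obstacle. The only point needing care is matching the normalization in Chen--Price's Lemma~1.6. They may state it with averaged norms $\|\cdot\|_{2,T}$ rather than integrals, but the common factor $1/(2T)$ cancels in both inequalities.
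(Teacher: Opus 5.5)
Your proposal is correct and follows the paper's proof: the reverse triangle inequality in $L^2[-T,T]$, applied to the defect term of \eqref{eq:cheb-filter-energy}, gives the first bound. Combining the exterior-leakage bound $\int_{\R\setminus[-T,T]}|H_{k,T}g|^2\le\eta\|g\|^2$ with that first bound then yields the stated ratio $(1-\sqrt\eta)^2/((1-\sqrt\eta)^2+\eta)$.
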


\begin{proof}
Theorem~\ref{thm:cheb-matched-filter} gives
$$
    \underbrace{\int_{-T}^{T}|(1-H_{k,T}(t))g(t)|^2\,dt}_{E_{\rm defect}}
    +\underbrace{\int_{\R\setminus[-T,T]}|H_{k,T}(t)g(t)|^2\,dt}_{E_{\rm out}}
    \leq\eta\underbrace{\int_{-T}^{T}|g(t)|^2\,dt}_{E_0}.
$$
Let
$$
        E_{\rm in}:=\int_{-T}^{T}|H_{k,T}(t)g(t)|^2\,dt.
$$
The triangle inequality in $L^2[-T,T]$ implies
$$
        E_{\rm in}^{1/2}
        \geq E_0^{1/2}-E_{\rm defect}^{1/2}
        \geq(1-\sqrt\eta)E_0^{1/2},
$$
and hence
$$
        E_{\rm in}\geq(1-\sqrt\eta)^2E_0,
        \qquad
        \frac{E_{\rm in}}{E_{\rm in}+E_{\rm out}}
        \geq\frac{(1-\sqrt\eta)^2}{(1-\sqrt\eta)^2+\eta}\,,
$$
which give~\eqref{eq:CP-global-energy-conclusions} in the lemma.
\end{proof}

\begin{lemma}[Efficient evaluation]
\label{lem:CP-window-evaluation}
For every fixed $\eta$, the window $H_{k,T}$ can be evaluated at every queried point in the algorithm in \cite{ChenPrice2019} to the required constant relative accuracy in $\operatorname{poly}(k)$ arithmetic time per query.
\end{lemma}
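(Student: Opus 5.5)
The plan is to evaluate $H_{k,T}$ directly from its definition. By the scaling reduction in the proof of Theorem~\ref{thm:cheb-matched-filter}, $H_{k,T}(t)=H_{A,B}(t/T)$ with $B=\beta k^2$, and
\[
H_{A,B}(t)=\int_{t-1}^{t+1}K_{A,B}(u)\,du=\frac{1}{Z_{A,B}}\int_{t-1}^{t+1}\Psi_{A,B}(u)\,du .
\]
So each query reduces to two one-dimensional integrals of the explicit function $\Psi_{A,B}(u)=\prod_{j=0}^{J}\sinc(\omega_ju)^{m_j}$. The normalizer $Z_{A,B}$ is query-independent, so it is computed once as a preprocessing step.

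Before doing any quadrature, I would truncate the integration domain. By the far-tail bound \eqref{eq:kernel-far-tail}, the mass of $K_{A,B}$ outside $[-2,2]$ is at most $C_Ae^{-cA\sqrt B}$, which is $e^{-\Omega(k)}$. This lets every integral be restricted to $[-2,2]$ (intersected with $[t-1,t+1]$) at negligible cost.

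Each evaluation of $\Psi_{A,B}(u)$ costs $O(J)=O(\log k)$ factors. The total degree is $\sum_j m_j=O(A\sqrt B)=O(k)$, so computing the integrand in log-domain uses $O(\log k)$ arithmetic operations.

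For the quadrature, note that $\Psi_{A,B}$ is an entire function of exponential type $\sigma\le 7AB=O(k^2)$, and $|\Psi_{A,B}|\le1$ on $\R$. Hence its derivatives on $\R$ satisfy $|\Psi^{(n)}|\le\sigma^n$ by Bernstein's inequality. I would use composite Gauss--Legendre (or the trapezoid rule) on $[-2,2]$ with mesh $h\asymp 1/\sigma$ and $O(\log(1/\epsilon))$ nodes per cell. The Bernstein bound makes the error per cell $\epsilon h$, so the total absolute error is $O(\epsilon)$ using $O(k^2\log(1/\epsilon))$ evaluations.

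The main obstacle is converting this absolute accuracy into constant relative accuracy, because $H_{k,T}(t)$ can be tiny far from the window. I would handle this as follows.

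First, \emph{the denominator}. By \eqref{eq:Z-lower-filter}, $Z_{A,B}\ge c_A/B=\Omega(k^{-2})$. Choosing $\epsilon=k^{-3}$ therefore already gives $Z_{A,B}$ to constant relative accuracy.

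Second, \emph{the numerator}. For queries with $|t|\le T(1+O(1))$, I would lower-bound $H_{A,B}(t)$ by a $\operatorname{poly}(k)^{-1}$ quantity, using the positivity region of $\Psi$ from Step~2 and the fact that $H_{A,B}(t)\ge\tfrac12$ for $|t|\le1$. Then $\epsilon=k^{-O(1)}$ suffices.

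Third, \emph{what the algorithm actually needs}. Only the sampled points matter, and these are drawn from $[-T,T]$, where $H_{A,B}\ge1-2P_{A,B}(0)\ge$ a constant. If farther points were ever needed, I would integrate the nonnegative integrand with relative-error quadrature in log-domain on $[t-1,t+1]$, since positivity gives relative accuracy directly.

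Overall cost per query: $O(k^2\log k\cdot\log(1/\epsilon))=\operatorname{poly}(k)$ arithmetic operations.
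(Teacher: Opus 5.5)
The quadrature part of your proposal is sound, and more efficient than the paper's. Bernstein's inequality for the type-$\sigma$ function $\Psi_{A,B}$ plus composite Gauss--Legendre gives absolute error $\epsilon$ with $O(k^2\log(1/\epsilon))$ evaluations. The paper instead uses the Lipschitz bound $|\Psi_{A,B}'|\le C_AB$ and a rectangle rule with step $h\asymp\rho/B^2$, i.e.\ $O(k^4/\rho)$ evaluations. Drop your parenthetical trapezoid alternative, which does not have the claimed $\epsilon h$ per-cell accuracy.

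The gap is in the step that carries the lemma's content: turning absolute into relative accuracy. You never correctly identify the queried points, and you never prove a valid lower bound on $H_{A,B}$ there.
\begin{itemize}
\item The Chen--Price phase test uses the filtered signal at $t$ and at $t+\beta$ with $|\beta|\le\gamma/\Delta_*$. After normalization, queries therefore extend $O(1/k^2)=O(1/B)$ beyond $[-1,1]$, not only over $[-T,T]$ as you claim.
\item Even on $[-1,1]$, your bound $H_{A,B}\ge 1-2P_{A,B}(0)$ is vacuous. Since $K_{A,B}$ is an even probability density, $P_{A,B}(0)=\tfrac12$. The correct bound for $0\le t\le1$ is $H_{A,B}(t)\ge\int_0^1K_{A,B}=\tfrac12-P_{A,B}(1)$, not $\ge\tfrac12$.
\item The $\operatorname{poly}(k)^{-1}$ lower bound you assert on $|t|\le T(1+O(1))$ is false. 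By \eqref{eq:H_AB_near_bound} and \eqref{eq:kernel-root-tail}, $H_{A,B}(1+r)\le P_{A,B}(r)\le C_A(1+\sqrt{Br})e^{-cA\sqrt{Br}}=e^{-\Omega(k)}$ for $r=\Theta(1)$. Exponential suppression outside the window is exactly what the filter is built to do. The Step~2 positivity region has width only $\asymp 1/(\sqrt A\,B)$, so it cannot help there.
\item ``Positivity gives relative accuracy directly'' is not an argument. A positive-weight rule still incurs an additive error, which must be compared with a lower bound on the integral. With your $\log(1/\epsilon)$ cost an $e^{-\operatorname{poly}(k)}$ lower bound would suffice, but you do not prove one.
\end{itemize}

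The missing step, as in the paper, is as follows. Shrinking the phase-test constant $\gamma$, every query satisfies $|t|\le 1+a/B$ for a small constant $a$. For $t=1+r$ with $0\le r\le a/B$,
\[
H_{A,B}(1+r)=\tfrac12-\int_0^r K_{A,B}(u)\,du-P_{A,B}(2+r)\ge\tfrac12-\tfrac{a}{B}\|K_{A,B}\|_{L^\infty}-P_{A,B}(1).
\]
Here $\|K_{A,B}\|_{L^\infty}\le Z_{A,B}^{-1}\le C_AB$ by $0\le\Psi_{A,B}\le1$ and \eqref{eq:Z-lower-filter}, and $P_{A,B}(1)\le C_Ae^{-cA\sqrt B}$ by \eqref{eq:kernel-far-tail}. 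Taking $a$ small and $B=\Theta_\eta(k^2)$ large gives $H_{A,B}\ge\tfrac14$ on the whole query region. The numerator integral is then at least $Z_{A,B}/4\ge c_A/B$. Additive accuracy $\epsilon\asymp\rho/B$ in both integrals, which your Gauss--Legendre scheme delivers at cost $O(k^2\log(k/\rho))$, then yields relative accuracy $O(\rho)$.
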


\begin{proof}
By scaling, it is enough to take $T=1$.  In the notation of the proof of Theorem~\ref{thm:cheb-matched-filter},
$$
        H_{A,B}(t)
        =\frac{\displaystyle\int_{t-1}^{t+1}\Psi_{A,B}(u)\,du}
               {\displaystyle\int_{\mathbb R}\Psi_{A,B}(u)\,du} = \int_{t-1}^{t+1}K_{A,B}(u)\,du\,,
        \qquad B=\Theta_\eta(k^2)\,,
$$
where $\Psi_{A,B}$ is the product of $O(\log k)$ even powers of $\sinc$ functions in \eqref{eq:sinc-product-filter}.  The Chen--Price phase test queries the window only within
$O(1/k^2)=O(1/B)$ of $[-1,1]$, because $|\beta|\leq\gamma/\Delta_*$ and
$\Delta_*\geq c_\eta k^2$ after normalization.  Decreasing the fixed constant $\gamma$, these points satisfy
$|t|\leq1+a/B$ for a sufficiently small constant $a>0$.

On the query region, $H_{A,B}$ is uniformly bounded away from zero.
Indeed, $K_{A,B}$ is a nonnegative even probability density. Thus, it suffices to consider
$0\leq t\leq1+a/B$.  If $0\leq t\leq1$, then
$[0,1]\subseteq[t-1,t+1]$, and therefore
$$
        H_{A,B}(t) \geq \int_0^1K_{A,B}(u)\,du =\int_0^\infty K_{A,B}(u)\,du - \int_1^\infty K_{A,B}(u)\,du = \frac{1}{2}-P_{A,B}(1)\,.
$$
If $t=1+r$ with $0\leq r\leq a/B$, then
$$
\begin{aligned}
        H_{A,B}(1+r)
        &=
        \int_r^{2+r}K_{A,B}(u)\,du\\
        &=
        \frac12
        -\int_0^rK_{A,B}(u)\,du
        -P_{A,B}(2+r)\\
        &\geq
        \frac12
        -\frac{a}{B}\|K_{A,B}\|_{L^\infty}
        -P_{A,B}(1).
\end{aligned}
$$
Since $0\leq\Psi_{A,B}\leq1$ and \eqref{eq:Z-lower-filter}, we have
$$
        \|K_{A,B}\|_{L^\infty} \leq Z_{A,B}^{-1} \leq C_AB\,.
$$
Consequently, for any $|t|\leq 1+\frac{a}{B}$,
$$
        H_{A,B}(t)
        \geq
        \frac12-C_Aa-P_{A,B}(1)\,.
$$
Choose $a>0$ so that $C_Aa\leq1/8$, and then choose the constant in $B=\Theta_\eta(k^2)$ sufficiently large such that by~\eqref{eq:kernel-far-tail},
$$
        P_{A,B}(1) \leq C_Ae^{-cA\sqrt B} \leq \frac{1}{8}\,.
$$
Hence,
$$
        H_{A,B}(t)\geq\frac{1}{4}
$$
throughout the query region.

Furthermore,
$$
        \left|\Psi_{A,B}'(u)\right|\leq \frac{1}{2}\sum_{j=0}^{J}m_j\omega_j\leq C_AB
$$
provides an explicit Lipschitz constant for the integrand.  Hence, on an interval of length $L$, a composite rectangle rule with step size $h$ has an error of at most $C_ABLh$.  For $Z_{A,B}$, the tail estimates~\eqref{eq:kernel-far-tail} and \eqref{eq:kernel-root-tail} first reduce the integral over $\R$ to an integral over a bounded interval. Choosing $h=C_A\rho/B^2$ then approximates both the numerator and the denominator to
additive-error $C_A\rho/B$, using $\operatorname{poly}(B,1/\rho)$
arithmetic operations. By~\eqref{eq:Z-lower-filter}, these additive-error approximations give the desired $O(\rho)$ relative-error approximation for $H_{A,B}(t)$.

Finally, we note that a sufficiently small constant $\rho$ relative-error changes all importance-weighted energies by a $1+O(\rho)$ factor and
perturbs the phase-difference signal by $O(\rho)\|y_H\|_2$.  These errors are absorbed by the constant slack in
\cite[Lemma~8.2]{ChenPrice2019}.  Rescaling proves the lemma for general $T$.
\end{proof}

\bibliographystyle{alpha}
\bibliography{references}

\end{document}